\newcommand\blfootnote[1]{%
  \begingroup
  \renewcommand\thefootnote{}\footnote{#1}%
  \addtocounter{footnote}{-1}%
  \endgroup
}
\newlist{myEnumerate}{enumerate}{9}
\setlist[myEnumerate,1]{label=(\arabic*)}
\setlist[myEnumerate,2]{label=(\Roman*)}
\setlist[myEnumerate,3]{label=(\Alph*)}
\setlist[myEnumerate,4]{label=(\roman*)}
\setlist[myEnumerate,5]{label=(\alph*)}
\setlist[myEnumerate,6]{label=(\arabic*)}
\setlist[myEnumerate,7]{label=(\Roman*)}
\setlist[myEnumerate,8]{label=(\Alph*)}
\setlist[myEnumerate,9]{label=(\roman*)}
\definecolor{green}{rgb}{0.0, 0.5, 0.0}
\crefname{lemma}{lemma}{lemmata}
\crefname{claim}{claim}{claims}
\crefname{theorem}{theorem}{theorems}
\crefname{proposition}{proposition}{propositions}
\crefname{corollary}{corollary}{corollaries}
\crefname{claim}{claim}{claims}
\crefname{remark}{remark}{remarks}
\crefname{definition}{definition}{definitions}
\crefname{fact}{fact}{facts}
\crefname{question}{question}{questions}
\crefname{condition}{condition}{conditions}
\crefname{algorithm}{algorithm}{algorithms}
\crefname{assumption}{assumption}{assumptions}
\crefname{notation}{notation}{notation}
\crefname{cond}{Condition}{Conditions}
\crefname{ineq}{Constraint}{Constraints}
\crefname{contModel}{Contamination Model}{Contamination Models}
   \par\noindent{\bfseries\upshape Proof Sketch\ }%
\newtheorem{theorem}{Theorem}[section]
\newtheorem{lemma}[theorem]{Lemma}
\newtheorem{proposition}[theorem]{Proposition}
\newtheorem{corollary}[theorem]{Corollary}
\newtheorem{definition}[theorem]{Definition}
\newtheorem{contModel}{Contamination Model}
\newtheorem{fact}[theorem]{Fact}
\newtheorem{question}{Question}
\theoremstyle{definition}
\newtheorem{remark}[theorem]{Remark}
\newtheorem{example}[theorem]{Example}
\newlist{itemizec}{itemize}{2}
\setlist[itemizec,1]{label=\faCaretRight ,wide, parsep= 0.05pt, left = 15pt}
\newcommand{\eps}{\epsilon}
\renewcommand{\tilde}{\widetilde}
\newcommand{\Ind}{\mathds{1}}
\newcommand{\1}{\Ind}
\newcommand{\E}{\operatorname*{\mathbf{E}}}
\newcommand{\pE}{\operatorname*{\widetilde{\mathbf{E}}}}
\newcommand{\poly}{\operatorname*{\mathrm{poly}}}
\renewcommand{\vec}[1]{\boldsymbol{\mathbf{#1}}}
\newcommand{\bP}{{\mathbb P}}
\newcommand{\trace}{\operatorname{tr}}
\newcommand{\rank}{\operatorname{rank}}
\def\P{\mathbb P}
\def\R{\mathbb R}
\def\N{\mathbb N}
\def\Z{\mathbb Z}
\newcommand\numberthis{\addtocounter{equation}{1}\tag{\theequation}}
\newcommand{\cA}{\mathcal{A}}
\newcommand{\cB}{\mathcal{B}}
\newcommand{\cD}{\mathcal{D}}
\newcommand{\cE}{\mathcal{E}}
\newcommand{\cH}{\mathcal{H}}
\newcommand{\cI}{\mathcal{I}}
\newcommand{\cJ}{\mathcal{J}}
\newcommand{\cM}{\mathcal{M}}
\newcommand{\cN}{\mathcal{N}}
\newcommand{\cO}{\mathcal{O}}
\newcommand{\cP}{\mathcal{P}}
\newcommand{\cS}{\mathcal{S}}
\newcommand{\cV}{\mathcal{V}}
\newcommand{\cW}{\mathcal{W}}
\newcommand{\bA}{\vec{A}}
\newcommand{\bB}{\vec{B}}
\newcommand{\bI}{\vec{I}}
\newcommand{\bM}{\vec{M}}
\newcommand{\bV}{\vec{V}}
\newcommand{\Set}[1]{\left\{#1\right\}}
\newcommand{\hide}[1]{}
\DeclareMathOperator*{\pr}{\mathbf{Pr}}
\newcommand{\eqdef}{\stackrel{{\mathrm {\footnotesize def}}}{=}}
\newcommand{\op}{\textnormal{op}}
\newcommand{\tr}{\mathrm{tr}}
\let\vec\mathbf
\newcommand{\weak}{\mathrm{weak}}
\newcommand{\strong}{\mathrm{strong}}
\newcommand{\bddcov}{\mathrm{cov}}
\newcommand{\bddmom}{{k-\mathrm{bdd-moments}}}
\newcommand{\bddsubg}{\mathrm{subgaussian}}
\title{Optimal Robust Estimation under Local and Global Corruptions:\\
Stronger Adversary and Smaller Error
}
\author{
Thanasis Pittas\\
University of Wisconsin-Madison\\
{\tt pittas@wisc.edu}\\
\and
Ankit Pensia\\
Simons Institute for the 
Theory of Computing\\
{\tt ankitp@berkeley.edu}
}
\begin{document}

\maketitle

\begin{abstract}
Algorithmic robust statistics has traditionally focused on the contamination model where a small fraction of the samples are arbitrarily corrupted.
We consider a recent contamination model
that combines two kinds of corruptions: (i) small fraction of arbitrary outliers, as in  classical robust statistics, and (ii) local perturbations, where samples may undergo bounded shifts on average. 
While each noise model is well understood individually, 
the combined contamination model poses new algorithmic challenges, with only partial results known. 
Existing efficient algorithms are limited in two ways: (i) they work only for a weak notion of local perturbations, and (ii) they obtain suboptimal error for isotropic subgaussian distributions (among others). 
The latter limitation led
\cite{NieGS24} to hypothesize that improving the error might, in fact, be computationally hard.
Perhaps surprisingly, we show that information theoretically optimal error can indeed be achieved in polynomial time, under an even \emph{stronger} local perturbation model  (the sliced-Wasserstein metric as opposed to the Wasserstein metric).
Notably, our analysis reveals that the entire family of stability-based robust mean estimators 
continues to work optimally in a black-box manner for the combined contamination model. 
This generalization is particularly useful in real-world scenarios where the specific form of data corruption is not known in advance.
We also present efficient algorithms for distribution learning and principal component analysis
in the combined contamination model.
\end{abstract}
\blfootnote{Authors are listed in random order.}



\section{Introduction}\label{sec:intro}

We study the problems of high-dimensional parameter estimation and distribution learning in the setting where the  data available to the statistician may be corrupted.
We start with the former problem, where
a prototypical task 
is that of (multivariate) mean estimation, defined as follows:
Let $\cP$ be a family of probability distributions over $\R^d$. Given a set of samples $S$ generated from an (unknown) distribution $P \in \cP$, compute (in a computationally-efficient manner) an estimate $\widehat{\mu}$ 
that is close to the true mean $\mu_P:= \E_{X \sim P}[X]$ in the Euclidean norm (with high probability).

The precise stochastic process generating the data $S$ plays a crucial role
in this problem.
Much of the historical focus has been on the \emph{ideal} setting, where $S$ is a set of i.i.d.\ samples from $P \in \mathcal{P}$.
However, this i.i.d.\ assumption is often violated in real-world scenarios, a phenomenon known as \textit{model misspecification}.
Two prominent examples of such misspecification include (i) outliers, which might occur due to 
data poisoning attacks~\cite{BarNJT10-security,BigNL12-poisoning,SteKL17-certified,TraLM18-backdoor,HayKSO21}, 
errors in the front-end systems in geometric  perception~\cite{YangCarl23}, biological settings~\cite{RosPWCKZF02-Gen,PasLJD10-ancestry,LiATSCRCBFCM08-science}, and (ii) local distribution shifts~\cite{ChaoDobriban23,YanZLL24-ood} due to varying biases of different sensors from which data is collected.
Importantly, in high dimensions, these  corruptions can lead to dramatic failures of standard off-the-shelf estimators (developed for the i.i.d.\ setting). 

The field of robust statistics was initiated in the 1960s to develop estimators that were robust to perturbations in the data~\cite{HubRon09,HamRRS11}.
Starting from the seminal work of Huber~\cite{Hub64},
the prototypical contamination model to capture these perturbations is when a small fraction of arbitrary outliers is included in the data, formalized as the adversary below. 
\begin{contModel}[Global contamination]
\label{def:outliers}
Let $\eps \in (0,1/2)$.
    Let $S$ be a multiset  of $n$ points in $\R^d$.
    Consider all $n$-sized sets in  $\R^d$ that differ in at most $\eps$-fraction of points, i.e.,
        $\cO(S,\eps) := \left\{S'\subset \R^d: |S'| =n \text{ and } |S' \cap S| \geq (1- \eps)n \right\}$.
    The adversary can return any set $T \in \cO(S,\eps) $.
    We call points in $S$ to be the inliers and points in $T\setminus S$ to be the outliers.
\end{contModel} 
These outliers are termed ``global'' because they can be completely arbitrary, without any constraints on their magnitude.
The effect of such outliers on the \emph{statistical} rates of estimation was understood early in the statistics literature~\cite{HubRon09,DonLiu88a}.
However, early robust statistics algorithms were either computationally infeasible in large dimensions, with runtimes  scaling exponentially with the dimension $d$, or achieved asymptotic errors scaling with $d$.\footnote{
An important distinction from classical statistics (with i.i.d.\ data) is that, due to biases caused by the outliers, the optimal asymptotic error (as the number of samples goes to infinity) may \textit{not} vanish.
Still, for many well-behaved distribution families,
the optimal asymptotic error is a function that depends only on the contamination rate $\eps$.
Importantly, the optimal asymptotic error is dimension-free for such families.}
In the last decade, the sub-field of algorithmic robust statistics has resolved this issue by developing a family of computationally-efficient algorithms for robust estimation under global outliers~\cite{DiaKKLMS16-focs,LaiRV16}.
We summarize these results for mean estimation below:

\begin{fact}[Algorithms for robust mean estimation; see, for example, the book \cite{DiaKan22-book}]
\label{fact:ars-existing}
    Let $\cP$ be one of the following distribution families:
    \begin{itemizec}
        \item (bounded covariance) $\cP_{\bddcov}$:  all distributions that have covariance at most identity.
        \item (isotropic bounded moments) $\cP_\bddmom$:  all distributions that have identity covariance and bounded $k$-th moments for an even $k \in \N$.\footnote{That is, $\E_{X \sim P}[|\langle v, X - \mu_P\rangle|^k]^{1/k} \leq \sigma_k$
        for a constant $\sigma_k$.\label{ftn:bdd}}
        \item (isotropic subgaussian) $P_{\bddsubg}$: all subgaussian distributions that have identity covariance.
    \end{itemizec}
    There exist \emph{polynomial-time} algorithms for robust mean estimation under \Cref{def:outliers} with (near)-optimal error for the distribution families above.
\end{fact}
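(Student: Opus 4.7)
The plan is to establish each of the three results through the unified \emph{stability-based framework} that underlies modern algorithmic robust statistics. The central definition is that a set $S \subset \R^d$ is $(\eps,\delta)$-stable with respect to $\mu$ if, for every subset $S' \subseteq S$ with $|S'| \geq (1-\eps)|S|$, the sample mean of $S'$ is within $\delta$ of $\mu$ and the sample covariance of $S'$ has spectral norm at most $1+\delta^2/\eps$ (appropriately generalized for higher moment bounds). The first step of every proof is to show that, with $n$ samples large enough in each distributional regime, the set of inliers satisfies the appropriate stability condition with high probability. The second step is to design an efficient algorithm that, given any corrupted set $T$ whose inlier subset is stable, outputs a mean estimate with error $O(\delta)$.

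For the bounded covariance family $\cP_\bddcov$, I would carry out the first step by a standard VC/covering argument showing that $n = \tilde{\Omega}(d/\eps)$ i.i.d.\ samples are $(\eps, O(\sqrt{\eps}))$-stable with high probability. For the algorithmic step, I would invoke the iterative filtering paradigm: as long as the empirical covariance of the current working set has spectral norm substantially above $1$, compute its top eigenvector $v$, score each point $x$ by $\langle v, x-\omu\rangle^2$, and randomly remove points with probability proportional to the score. A potential function argument shows that in each round more outliers than inliers are removed in expectation, which yields termination in polynomial time and a final estimate with error $O(\sqrt{\eps})$, matching the information-theoretic lower bound.

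For $\cP_\bddmom$ and $P_\bddsubg$, the optimal error improves to $O(\eps^{1-1/k})$ and $O(\eps\sqrt{\log(1/\eps)})$ respectively, and the same filtering template no longer suffices because certifying bounded higher moments in a given direction is a polynomial optimization problem. Here I would use the sum-of-squares (SoS) hierarchy: the stability condition is strengthened to demand that the moment bound $\E[\langle v, X-\mu\rangle^k] \leq \sigma_k^k$ admit a degree-$k$ SoS proof, which is known to hold for all members of these families. The algorithm then either solves an SoS relaxation of degree $O(k)$ directly (returning a pseudo-expectation whose first moment is the estimate) or uses an SoS-guided filter. For subgaussian distributions, choosing $k \asymp \log(1/\eps)$ gives the optimal rate with runtime $d^{O(\log(1/\eps))}$, which is polynomial for any fixed $\eps$.

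The main obstacle in each case is different. For $\cP_\bddcov$, the delicate part is the concentration argument establishing stability with only $O(d/\eps)$ samples, since a naive union bound over directions fails. For $\cP_\bddmom$ and $P_\bddsubg$, the hard part is constructing an SoS certificate of the moment bound for the inlier empirical distribution; this typically proceeds by pulling back the population-level SoS certificate (which holds by assumption on the family) through a uniform-convergence argument tailored to low-degree polynomials. Since all three statements appear in standard form in \cite{DiaKan22-book}, the role of the proof here is primarily to record which off-the-shelf stability-based estimators we will later invoke as black boxes in the combined contamination model.
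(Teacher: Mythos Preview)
The paper does not prove this statement; it is recorded as a \emph{fact} cited from \cite{DiaKan22-book}, with the exact rates spelled out in \Cref{sec:additional-prelims} (Facts A.1 and A.2). So there is no paper proof to compare against, and your last sentence already acknowledges this role.

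That said, your sketch contains a substantive misconception about the $\cP_\bddmom$ and $\cP_\bddsubg$ cases that is worth correcting, since the paper relies on it later. You write that ``the same filtering template no longer suffices'' and that one must pass to SoS, ending up with runtime $d^{O(\log(1/\eps))}$ in the subgaussian case. This is incorrect for the \emph{isotropic} families stated here. Because the covariance is known to equal $\bI$, the ordinary second-moment stability condition (\Cref{def:stability1}) already holds with the improved parameters $\delta \asymp \eps^{1-1/k}$ and $\delta \asymp \eps\sqrt{\log(1/\eps)}$ respectively (this is exactly \Cref{fact:stability-rates}). The stronger tail assumptions do not change the \emph{algorithm}; they only tighten the value of $\delta$ for which the inlier set is stable. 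Consequently the same covariance-based filter from your $\cP_\bddcov$ paragraph achieves the optimal error in genuinely polynomial time for all three families, and this is precisely what the paper means by ``stability-based algorithms'' in \Cref{def:stability-based-alg}. The SoS machinery you describe is the route for distributions with \emph{unknown} covariance and certifiably bounded moments, which the paper treats separately in \Cref{thm:sos-mean} and \Cref{sec:sos}.
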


\noindent We refer the reader to \Cref{sec:additional-prelims} for the exact rates and further discussion of their optimality.

\looseness=-1
While \Cref{def:outliers} has served as an extremely successful testbed for the development of robust statistics for around sixty decades, 
a potential weakness  
is that the bulk of the inliers, $S \cap T$, remain \emph{exactly} the same.
That is, it permits \emph{global} but \emph{sparse} perturbations---corresponding to a small (sparse) fraction of arbitrary (global) outliers. 
However, it does not account for \emph{dense} but \emph{local} perturbations.
Such dense, local perturbations could arise in practice, for example, when measurements made by differently miscalibrated sensors.
To model such local perturbations, recent works have  proposed using the Wasserstein distance ~\cite{ZhuJT20-gan,ZhuJS19,LiuLoh22,ChaoDobriban23,NieGS24}, which measures perturbations using the Euclidean norm.
In contrast, we consider a significantly strengthened adversary:    

\begin{contModel}[Strong local contamination]
\label{def:strongWass-intro}
Let $\rho \geq 0$.
    Let $S_0=\{x_1,\ldots,x_n \}$ be an $n$-sized set in $\R^d$. 
    Consider an adversary that perturbs each point $x_i$ to $\widetilde{x}_i$ with the only restriction that in each direction, the average perturbation (over $n$ points) is at most $\rho$. Formally, we define 
    \begin{align}
\cW^\strong(S_0,\rho) := \left\{S= \{\widetilde{x}_1,\dots, \widetilde{x}_n\} \subset \R^d :     \sup_{v \in \R^d: \|v\|_2=1}\,\,\frac{1}{n} \sum_{i \in [n]} \big|v^\top (\widetilde{x}_i - x_i) \big| \leq \rho    \right\}\,.
    \end{align}
    The adversary returns an arbitrary set  $S \in \cW^\strong(S_0,\rho)$ after possibly reordering the points.
\end{contModel}

\Cref{def:strongWass-intro} allows each inlier point to be shifted in a bounded manner in any direction.
Still, the overall perturbation in the Euclidean norm could be significant:
Taking
the simplest setting of Gaussian inliers $x_i$ and independent Gaussian perturbations $\widetilde{x}_i = x_i + \rho z_i$ for $z_i \sim \cN(0,\bI)$ as an example, we see that \emph{each} inlier is perturbed roughly by  $\Theta(\rho \sqrt{d})$ in the Euclidean norm, 
which is of the same order as the distance of the inlier from the mean $\mu$.\footnote{This follows by the fact that the mass of Gaussian distribution is extremely concentrated on a thin spherical shell of radius $\sqrt{d}$ around the mean.}
Despite the possibility of these large perturbations, one can see that \Cref{def:strongWass-intro}, in isolation, is rather benign: 
(i) the sample mean of the set $S$ incurs at most $O(\rho)$ error due to local perturbations, and
(ii) $\Omega(\rho)$ error is also information-theoretically unavoidable, since all points could be perturbed by $\rho$ in the same direction. 
Together, (i) and (ii) imply that simply outputting the sample mean of the contaminated set achieves error with optimal dependence on $\rho$.
Moreover, as we already saw in \Cref{fact:ars-existing}, mean estimation in \Cref{def:outliers} is already well-understood.
However, somewhat surprisingly, the contamination model combining the two we have seen above
poses new  algorithmic challenges.
\begin{contModel}[Global and strong local contamination]
\label{def:combined-outlier-local-intro}
Let $\eps \in (0,1/2)$ and $\rho > 0$.
Let $S_0 = \{x_1,\dots,x_n\}$ be a set of $n$ points in $\R^d$.
The adversary can return an arbitrary set $T$ such that $T \in \cO(S,\eps)$ for some $S \in \cW^\strong(S_0,\rho)$. 
\end{contModel}
The combined contamination model is more reasonable because data might be perturbed in numerous unforeseen ways~\cite{ZhuJT20-gan,NieGS24}.
To highlight the algorithmic challenges posed by the combined contamination model, observe that the two algorithms mentioned above run into problems: 
(i) the sample mean is highly sensitive to arbitrary outliers, and 
(ii) the algorithms from \Cref{fact:ars-existing} are no longer guaranteed to work because the local contamination destroys the moment structure. 
To elaborate the latter,  the prescribed recipe of algorithmic robust statistics relies crucially on the covariance of the inliers being bounded (more generally, bounded higher moments). 
In contrast, even though the local corruptions have bounded \emph{first} moment by definition, the \emph{higher moments} might be arbitrarily large (cf.\ \Cref{ex:local-corrution-cov}).
The aim of our work is to develop algorithms for fundamental estimation tasks that are robust to \Cref{def:combined-outlier-local-intro}. 
In particular, we focus on the tasks of mean estimation and distribution learning; we also show that our techniques extend to principal component analysis at the end.

\subsection{Motivating Questions}
\label{sec:questions}
In this section, we describe the questions that motivated our technical results in \Cref{sec:our-results}. 
For brevity, we focus only on the task of mean estimation in this section.

\looseness=-1Starting with the \emph{information-theoretic} error rates,  for any distribution family $\cP$ in \Cref{fact:ars-existing}, it is easy to see that
 the optimal error for mean estimation under \Cref{def:combined-outlier-local-intro} is $\Theta\left(f_\mathcal{P}^{\mathrm{robust}}(\eps) + \rho \right)$,
 where $f_\mathcal{P}^{\mathrm{robust}}(\eps)$ denotes the optimal asymptotic error for \Cref{def:outliers}.
This error is achieved by 
the multivariate trimmed mean, which 
is also optimal
in each model individually~\cite{LugMen21-trim}.
Unfortunately, the trimmed mean is not computationally efficient in high dimensions, with runtime scaling exponentially with the dimension,
leading to the central focus of our work:

\begin{question}
    \label{ques:efficient-strong}
    Can we perform robust estimation (mean estimation and  distribution learning) under \Cref{def:combined-outlier-local-intro} in a computationally-efficient manner?
\end{question}

\looseness=-1The simultaneous success of the multivariate trimmed mean 
across both global and local contamination models suggests an optimistic idea: perhaps the existing outlier-robust algorithms from \Cref{fact:ars-existing} continue to work for the more general \Cref{def:combined-outlier-local-intro}---despite this being far from obvious, as noted earlier.
The main contribution of our work is to show that this is indeed the case.

As we stress below, existing results for \Cref{ques:efficient-strong} achieve suboptimal error rates, both in terms of the global contamination parameter $\eps$ and the local contamination parameter $\rho$.

\paragraph{Dependence on Local Perturbation $\rho$.} Starting with the dependence on $\rho$, the 
most relevant work who studied efficient algorithms for \Cref{ques:efficient-strong} is
the recent paper of \citeauthor{NieGS24}~\cite{NieGS24}.
However, they used a weaker definition for the local corruptions, defiend next.
\begin{contModel}[Weak local contamination~\cite{NieGS24}]
    \label{def:weakWas}
    Let $\rho > 0$.
    Let $S_0=\{x_1,\ldots,x_n \}$ be a multiset  of $n$ points in $\R^d$. 
    Define $\cW^\weak(S_0,\rho)$ as follows:
    \begin{align*}
        \cW^\weak(S_0,\rho) := \Big\{S= \{\widetilde{x}_1,\dots, \widetilde{x}_n\} \subset \R^d:  \tfrac{1}{n} {\textstyle \sum_{i \in [n]}}\|\widetilde{x}_i - x_i\|_2 \leq \rho\Big\}.
    \end{align*}
    The adversary can return any $S \in \cW^\weak(S_0,\rho) $ after possibly reordering the samples.
\end{contModel}
With  the above definition of local perturbations (as opposed to \Cref{def:strongWass-intro}), 
\cite{NieGS24} focused on the following combined contamination model (as opposed to \Cref{def:combined-outlier-local-intro}).
\begin{contModel}[Global and weak local contamination \cite{NieGS24}]
\label{def:combined-outlier-local-weak-intro}
Let $\eps \in (0,1/2)$ and $\rho > 0$.
Let $S_0 = \{x_1,\dots,x_n\}$ be a set of $n$ points in $\R^d$.
The adversary can return any  set $T$ such that $T \in \cO(S,\eps )$ for some $S \in \cW^\weak(S_0,\rho)$. 
\end{contModel}

In the contamination model above, local perturbations have bounded average Euclidean norm, rather than being bounded in each direction individually as in \Cref{def:combined-outlier-local-intro}.
The observation that the analysis of trimmed mean  relies only on the boundedness in each direction individually (as opposed to bounded Euclidean norm) served as an inspiration for us to consider \Cref{def:combined-outlier-local-intro} and \Cref{ques:efficient-strong}.

Comparing the weak local contamination (\Cref{def:weakWas}) with the strong local contamination (\Cref{def:strongWass-intro}) quantitatively, we see that the corresponding contamination radii $\rho$ could greatly differ in high dimensions for a given perturbation. Formally, for any set $S_0 \subset \R^d$:
\begin{align}
\cW^\weak(S_0,\rho) \subset  \cW^\strong(S_0,\rho) \subseteq \cW^\weak(S_0,\rho \sqrt{d}),
\end{align}
with the $\sqrt{d}$ factor being necessary for the last inclusion.
Hence, applying the results of \cite{NieGS24} to our contamination model (\Cref{def:combined-outlier-local-intro}) leads to an error that has an extraneous $\sqrt{d}$ in front of $\rho$, which is undesirable in large dimensions.
More importantly, the technical arguments in  \cite{NieGS24} critically rely on the local perturbations having small Euclidean norms,
raising the question:
\begin{question}
    \label{ques:strong-vs-weak}
    Do weak and strong local contamination lead to different computational landscapes, when combined with global contamination? In particular, does the dependence on $\rho$ in the computationally-efficiently achievable error differ between \Cref{def:combined-outlier-local-intro,def:combined-outlier-local-weak-intro}? 
\end{question}

\paragraph{Dependence on Global Perturbation $\eps$.}Turning our attention to the dependence on $\eps$ (the fraction of global outliers), even for the weaker \Cref{def:combined-outlier-local-weak-intro},  \cite{NieGS24} achieves only partial results.
While the error of their algorithm has optimal $\eps$-dependence  for the family $\cP_{\bddcov}$ of bounded covariance distributions, defined in \Cref{fact:ars-existing}, 
the  dependence is suboptimal for the other two important distribution families $\cP_{\bddmom}$ and $\cP_{\bddsubg}$ from \Cref{fact:ars-existing}.\footnote{In fact, it can be shown that their algorithm  necessarily has suboptimal dependence on $\eps$ even in one dimension.}
In fact, \cite{NieGS24} conjectured that
no polynomial-time algorithm 
achieves optimal error for $\cP_{\mathrm{subgaussian}}$, noting that 
``\emph{We suspect that there may be similar obstacles 
[(computational hardness)]
as those known for robust mean estimation with stable but non-isotropic distributions}~\cite{HopLi19}''. 
The following question thus generalizes an  
open problem in \cite{NieGS24}:
\begin{question}
    \label{ques:local-optimal-eps}
    Do local corruptions (either weak or strong) induce new information-computation gaps for robust estimation? In particular, does the dependence on $\eps$ in the computationally-efficiently achievable error change in the presence of local contamination?
\end{question}

In our work, we completely resolve \Cref{ques:efficient-strong,ques:strong-vs-weak,ques:local-optimal-eps}. Importantly, our techniques generalize in a uniform manner for mean estimation, distribution learning, and principal component analysis.

\subsection{Our Results}
\label{sec:our-results}
We present efficient algorithms for mean estimation in \Cref{sec:mean-results} and distribution learning in \Cref{sec:distribution-learning-results}.
We defer the results for robust principal component analysis to \Cref{sec:pca}.

\subsubsection{Mean Estimation}\label{sec:mean-results}

We start with our results for the problem of  mean estimation. 
Our conceptual contribution is
to show that a 
family of existing  
outlier-robust
algorithms
continue to work under the combined contamination model.
In particular, we consider the \emph{stability-based algorithms} that are guaranteed to work for global contamination 
as long as the inliers satisfy the following deterministic condition called \emph{stability}.
For a set $S$, we use $\mu_S$ and $\vec \Sigma_S$ to denote the empirical mean and empirical covariance of $S$, respectively.

\begin{definition}[Stability, see, e.g., \cite{DiaKan22-book}] \label{def:stability1}
Let $\eps \in (0,1/2)$ and $\delta \in [\eps,\infty)$.
A finite multiset $S \subset \R^d$ is called $(\eps,\delta)$-stable with respect to $\mu \in \R^d$ if for
every $S' \subseteq S$ with $|S'| \geq (1-\eps)|S|$, the following  hold: 
\begin{itemizec}
    \item (Sample mean)  
    $\left\|\mu_{S'}- \mu \right\|_2 \leq \delta$,
    \item (Sample covariance)     $\left\|\vec\Sigma_{S'} - \bI_d \right\|_\op \leq \delta^2/\eps\,.$ 
\end{itemizec}
\end{definition}
The stability condition posits that for all large subsets, the sample mean is close to the true mean $\mu$, and the sample covariance is comparable to identity in the operator norm. It is known (\Cref{fact:stability-rates}) that stability holds with high probability for nice distribution families
such as the ones from \Cref{fact:ars-existing}, and a growing body of work has developed  algorithms for mean estimation with global outliers under the sole assumption that the inliers are stable. 
\begin{definition}[Stability-based Algorithms]\label{def:stability-based-alg}
Let $S$ be an $(\eps,\delta)$-stable set with respect to an (unknown) $\mu \in \R^d$.
Let $T$ be any set such that $T \in \cO(S,\eps)$ (cf. \Cref{def:outliers}).
We call an algorithm $\cA(T,\eps,\delta)$ stability-based algorithm if it takes as an input $T$, $\eps$, and $\delta$, and outputs an estimate $\widehat{\mu}$ in polynomial time such that, with high probability, 
$\|\widehat{\mu} -\mu\|_2 \lesssim \delta$.
\end{definition}
There exist such algorithms based on convex programming \cite{DiaKKLMS16-focs,SteCV18,CheDG19}, iterative filtering \cite{DiaKKLMS16-focs}, gradient descent \cite{CheDGS20,ZhuJS20}. 
Over the years, these stability-based algorithms have been optimized to be near-optimal in other important aspects: runtime
\cite{CheDG19,DonHL19,DepLec22}, sample complexity \cite{DiaKP20},  and memory \cite{DiaKPP22-streaming}. 
Stability-based algorithms give a unified way to achieve the rates in \Cref{fact:ars-existing}.
In fact, they have found black-box consequences on other problems such as principal component analysis (\Cref{sec:pca}) and linear regression~\cite{PenJL20}. 
Our main result allows us to seamlessly apply this huge repertoire of algorithms to the combination of global and local contamination.

\begin{restatable}[Main Result for Mean Estimation]{theorem}{ThmMeanEstStability}
    \label{thm:mean-estimation}
    Let $c$ be a sufficiently small positive constant and $C$ a sufficiently large constant.
    Let outlier rate $\eps \in (0,c)$ and  contamination radius $\rho > 0$.
    Let $S_0$ be a set that is $(\eps,\delta)$-stable
    with respect to an (unknown) $\mu \in \R^d$, where $\delta > \eps$. 
    Let $T$ be a corrupted dataset after  $\eps$-fraction of global outliers and $\rho$-strong local corruptions (as per \Cref{def:combined-outlier-local-intro}).
    Then, any stability-based algorithm $\cA(T,\eps,\tilde \delta)$ executed with input $T, \eps, \tilde \delta = C \cdot (\delta+\rho)$, outputs an estimate  $\widehat{\mu}$ such that with high probability (over the internal randomness of the algorithm):
        $\|\widehat{\mu} - \mu\|_2 \lesssim \delta + \rho$.
\end{restatable}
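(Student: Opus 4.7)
The plan is to reduce the combined contamination back to the familiar purely-outlier setting. Write $S_0 = \{x_1,\dots,x_n\}$ for the underlying stable set, $S = \{\widetilde x_1,\dots,\widetilde x_n\} \in \cW^\strong(S_0,\rho)$ for its local perturbation with $z_i := \widetilde x_i - x_i$ satisfying $\tfrac{1}{n}\sum_{i}|v^\top z_i| \le \rho$ for every unit $v$, and $T \in \cO(S,\eps)$ for the final input seen by the algorithm. Since the guarantee of \Cref{def:stability-based-alg} is triggered whenever $T$ is an $O(\eps)$-global corruption of some $(O(\eps), \widetilde\delta)$-stable reference set, the goal is to exhibit such a ``virtual'' stable reference with $\widetilde\delta = O(\delta+\rho)$.

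Concretely, the main structural lemma I aim to prove is: there exists $S^\ast \subseteq S$ with $|S^\ast| \geq (1-\eps)n$ which is $(\eps, O(\delta + \rho))$-stable with respect to $\mu$. Once this lemma is in hand, $T$ and $S^\ast$ differ in at most $2\eps n$ points—$\eps n$ from the global outliers and $\eps n$ from the trimmed points—so $T$ is an $O(\eps)$-global corruption of $S^\ast$. Applying any stability-based algorithm with the inflated parameter $\widetilde\delta = C(\delta+\rho)$ (and absorbing a constant-factor adjustment of the outlier-rate input into $C$) then yields $\|\widehat\mu - \mu\|_2 = O(\delta + \rho)$, as desired.

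Verifying stability of $S^\ast$ amounts to controlling the empirical mean and covariance on every subset $I \subseteq S^\ast$ with $|I| \ge (1-\eps)|S^\ast|$. Letting $I_0$ be the pre-image of $I$ in $S_0$ (still of size $\ge (1-3\eps)n$, within the stability regime of $S_0$), the mean decomposes as $\mu_I = \mu_{I_0} + \tfrac{1}{|I|}\sum_{i \in I} z_i$; the first term is $O(\delta)$-close to $\mu$ by stability of $S_0$, and the second is bounded by $O(\rho)$ since $|v^\top \tfrac{1}{|I|}\sum_{i\in I} z_i| \le \tfrac{1}{|I|}\sum_{i=1}^n |v^\top z_i| \le \rho n/|I|$ for every unit $v$. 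The empirical covariance of $I$ decomposes analogously into $\vec \Sigma_{I_0}$ (close to identity by stability of $S_0$), a cross term controllable by Cauchy--Schwarz once both factors are bounded, and the perturbation covariance $\tfrac{1}{|I|}\sum_{i \in I}(z_i - \bar z_I)(z_i - \bar z_I)^\top$, whose operator norm must be shown to be $O(\rho^2/\eps)$.

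The main obstacle is therefore the following trimming lemma for the perturbations alone: given any $\{z_i\}_{i=1}^n \subseteq \R^d$ with $\tfrac{1}{n}\sum_i |v^\top z_i| \le \rho$ in every unit direction $v$, there exists $S^\ast \subseteq [n]$ with $|S^\ast| \ge (1-\eps)n$ such that $\|\tfrac{1}{|S^\ast|}\sum_{i\in S^\ast} z_i z_i^\top\|_\op \le O(\rho^2/\eps)$. In one dimension this is immediate from Markov's inequality: removing the $\eps n$ largest $|z_i|$'s leaves values bounded by $\rho/\eps$, so $\sum z_i^2 \le (\rho/\eps)\sum |z_i| \le \rho^2 n/\eps$. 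The high-dimensional extension requires a trimming that simultaneously controls the variance in every direction; my plan is to formulate the fractional relaxation $\min_{w\in[0,1]^n,\,\sum w_i \ge (1-\eps)n} \lambda_{\max}(\sum_i w_i z_i z_i^\top)$ as a convex program and invoke Sion's minimax theorem on the bilinear form $(w,V) \mapsto \sum_i w_i z_i^\top V z_i$ over the trimming polytope and the spectraplex $\{V \succeq 0 : \tr V = 1\}$. Bounding the resulting max-min reduces, for each PSD test matrix $V$, to averaging the one-dimensional trimming argument along the eigendirections of $V$; extending this from rank-one test directions to general $V$ in the spectraplex and rounding back to an integral subset $S^\ast$ is the technically delicate step.
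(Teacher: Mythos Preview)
Your overall architecture matches the paper's exactly: pass to a virtual stable subset $S^\ast \subseteq S$ of size $(1-\eps)n$, show $(\eps, O(\delta+\rho))$-stability with respect to $\mu$, and treat the missing $\eps n$ points as additional outliers so that $T \in \cO(S^\ast, 2\eps)$. The minimax formulation of the trimming lemma over the weight polytope and the spectraplex, followed by rounding the fractional solution back to an integral subset, is also the paper's route (cf.\ \Cref{prop:subset-covariance-general-k} and \Cref{lem:rounding-cont-discrete}). The one place where your plan leaves a genuine gap is the step you yourself flag as delicate: handling a general $V$ in the spectraplex after the minimax swap.

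The difficulty is that ``averaging the one-dimensional trimming along the eigendirections of $V$'' loses a dimension factor. After the swap, for fixed $V = \sum_j \lambda_j v_jv_j^\top$ you need $\min_w \sum_i w_i\, z_i^\top V z_i \lesssim \rho^2/\eps$, and the Markov-then-truncate argument requires the first-moment control $\tfrac{1}{n}\sum_i \sqrt{z_i^\top V z_i} \lesssim \rho$. Writing $\sqrt{z_i^\top V z_i} = \sqrt{\sum_j \lambda_j (v_j^\top z_i)^2}$ and bounding it by $\sum_j \sqrt{\lambda_j}\,|v_j^\top z_i|$ only yields $\rho \sum_j \sqrt{\lambda_j}$, which can be as large as $\sqrt{d}\,\rho$; and you cannot instead trim per-direction, because $\min_w$ does not commute with the sum over eigendirections. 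The paper closes this gap with a different, non-obvious mechanism: a Gaussian rounding scheme (\Cref{lem:average_roots}) showing directly that $\sup_{\bM \in \cM_1}\tfrac{1}{n}\sum_i\|z_i\|_{\bM} \lesssim \sup_{\|v\|_2=1}\tfrac{1}{n}\sum_i|v^\top z_i|$. Concretely, one samples $g \sim \cN(0,\bM)$, observes that $gg^\top$ is rank one with $\E[gg^\top]=\bM$ and bounded expected operator norm, and relates $\E_g\big[\sqrt{z_i^\top gg^\top z_i}\,\big]$ to $\sqrt{z_i^\top \bM z_i}$ via a fourth-moment lower bound on $\E[\sqrt{R}]$ in terms of $\sqrt{\E[R]}$. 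This rounding idea is the missing ingredient; once you have it, the remainder of your plan (cross-term control via Cauchy--Schwarz, invoking stability of $S_0$ on the preimage $I_0$) goes through essentially as the paper does in \Cref{lem:existence-of-stability-new}.
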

Observe that the dependence on $\rho$ is optimal.
An astute reader might note that the success of stability-based algorithms is somewhat surprising in the light of the following contradictory facts: (i) stability requires the covariance to be bounded and (ii) the covariance of a (stable) set $S_0$ can increase drastically after local contamination, say, $S \in \cW^\strong(S_0,\rho)$.
One of our main technical results shows that even though $S$ is not stable (due to large covariance), it does contain a large stable set, which suffices for stability-based algorithms to work.

Combining this with the fact that stability holds with high probability for the distribution families of interest (\Cref{fact:stability-rates}),
we obtain the following corollary.

\begin{corollary}\label{cor:mean_est}
\looseness=-1
    Let $\cP$ be a family of distributions.
    Fix a $P \in \cP$ with the (unknown) mean $\mu$,
    Let $S_0$ be a set of $n$ i.i.d.\ samples from $P$.
    Let $T$ be a corrupted version of $S_0$ with local contamination parameter $\rho$ and global contamination rate $\eps$ (\Cref{def:combined-outlier-local-intro}).
    Let $\tau $ be the failure probability such that $\log(1/\tau)/n$ is less than an absolute constant.
    There exist computationally-efficient algorithms that (i) take as input $T$, $
    \eps$, $\cP$, $\tau$, and $\rho$ and (ii) output $\widehat{\mu}$ that satisfies the following guarantees with probability $1-\tau$:
    \begin{itemizec}
        \item If $\mathcal{P}$ is the family of isotropic subgaussian distributions, then
        $\|\widehat{\mu}-\mu\|_2 \lesssim \eps \sqrt{\log(1/\eps)} + \rho + \sqrt{d/n} + \sqrt{\log(1/\tau)/n}$.
        \item If $\mathcal{P}$ is the family of distributions with isotropic covariance and bounded $k$-th moments, then
        $\|\widehat{\mu}-\mu\|_2 \lesssim \eps^{1 - \frac{1}{k}}+ \rho+ \sqrt{(d \log d)/n} + \sqrt{\log(1/\tau)/n}$.
        \item If $\mathcal{P}$ is the family of distributions with covariance $\vec \Sigma\preceq \vec I$
        then
        $\|\widehat{\mu}-\mu\|_2 \lesssim \sqrt{\eps} + \rho+ \sqrt{(d \log d)/n} + \sqrt{\log(1/\tau)/n}$.
    \end{itemizec}
\end{corollary}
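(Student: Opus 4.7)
The plan is to combine the black-box reduction in \Cref{thm:mean-estimation} with known high-probability stability guarantees for i.i.d.\ samples. Specifically, for $n$ i.i.d.\ samples from $P \in \cP$ with mean $\mu$, I would invoke the standard concentration result (referenced in the excerpt as \Cref{fact:stability-rates}) stating that $S_0$ is $(\eps,\delta_\cP)$-stable with respect to $\mu$ with probability at least $1-\tau$, where $\delta_\cP \lesssim \eps\sqrt{\log(1/\eps)} + \sqrt{d/n} + \sqrt{\log(1/\tau)/n}$ for isotropic subgaussian $\cP$, $\delta_\cP \lesssim \eps^{1 - 1/k} + \sqrt{(d \log d)/n} + \sqrt{\log(1/\tau)/n}$ for the isotropic bounded-$k$th-moment family, and $\delta_\cP \lesssim \sqrt{\eps} + \sqrt{(d \log d)/n} + \sqrt{\log(1/\tau)/n}$ for the bounded-covariance family. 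These are exactly the rates claimed in the three bullets of the corollary (up to the additive $\rho$).

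Conditioning on this stability event, I would feed $T$, $\eps$, and $\tilde\delta := C(\delta_\cP + \rho)$ into any stability-based algorithm in the sense of \Cref{def:stability-based-alg}. By \Cref{thm:mean-estimation}, the returned $\widehat\mu$ satisfies $\|\widehat\mu-\mu\|_2 \lesssim \delta_\cP + \rho$ with high probability over the algorithm's internal randomness. A union bound over the stability event and the algorithm's success event then yields each of the three claimed error bounds. Any of the efficient stability-based algorithms cited after \Cref{def:stability-based-alg} (e.g.\ filtering or convex-programming approaches) suffices, and each takes as input only $T$, $\eps$, and $\tilde\delta$, so the overall estimator is polynomial-time.

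The only mild subtlety is that in the bounded-covariance case the inliers satisfy $\vec\Sigma \preceq \bI$ rather than $\vec\Sigma = \bI$, while \Cref{def:stability1} compares the empirical covariance to $\bI_d$. This matches the standard bounded-covariance variant of stability used throughout the algorithmic robust statistics literature; the conclusion of \Cref{thm:mean-estimation} and the corresponding stability-based algorithms transfer verbatim (equivalently, one may absorb a factor of $\|\vec\Sigma\|_\op^{1/2} \leq 1$ into the radius). No genuine new obstacle arises here: the corollary is essentially a direct specialization of \Cref{thm:mean-estimation}, with all the real work having already been done in establishing the black-box reduction there and in the stability concentration bounds of \Cref{fact:stability-rates}.
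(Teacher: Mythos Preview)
Your proposal is correct and matches the paper's approach: the corollary is presented there as an immediate consequence of combining \Cref{thm:mean-estimation} with the high-probability stability rates of \Cref{fact:stability-rates}, exactly as you outline. One small point worth noting is that \Cref{fact:stability-rates} technically guarantees stability only for a $(1-\eps)$-fraction subset $S_1 \subseteq S_0$ rather than for $S_0$ itself, but this is harmless since the discarded points can be absorbed into the global outlier budget (doubling $\eps$), which is the standard maneuver and does not affect the rates.
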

We highlight that the error rates above are known to be information-theoretically optimal in all the parameters $\eps, \rho, d, n, \tau $ (up to the $\sqrt{\log d}$ factor in the term $\sqrt{(d \log d)/n}$); See 
\Cref{sec:additional-prelims} for further discussion on optimality.
Therefore, \Cref{cor:mean_est}  
simultaneously answers \Cref{ques:efficient-strong,ques:strong-vs-weak,ques:local-optimal-eps} for robust mean estimation. 
In particular,  (i) both weak and strong local contamination yield the same computationally-efficient rates, answering \Cref{ques:strong-vs-weak} and (ii) local contamination (whether weak or strong) does not induce new information-computation gap, refuting the hypothesis in \cite{NieGS24} and answering \Cref{ques:local-optimal-eps}.

A particularly appealing property of our results is that a well-understood, practical, and highly-optimized family of algorithms
is shown to work against local corruptions as well. Furthermore, the fact that the same algorithms work for all contamination models (local, global, and combined), without any modification in the algorithm (other than simple modification in a single parameter), is highly advantageous, since in practice we often do not know in advance what kinds of corruptions are present in the dataset.\footnote{Although \Cref{thm:mean-estimation} indicates that $\rho$ is included in the input as part of the parameter $\tilde \delta = \rho + \delta$ for stability-based algorithms, there exist stability-based algorithms that do not require $\tilde \delta$ as an input parameter~\cite[Appendix A]{DiaKP20}.}

\looseness=-1Beyond the distribution families in \Cref{fact:ars-existing}, another important distribution family for which efficient algorithms are known is the class of distributions with certifiably bounded moments, which is the class of distributions for which the bounded moment conditions has a low-degree sum of squares proof.
These algorithms have the benefit that they do not need to know the covariance of the underlying inlier distribution.
We refer the reader to \Cref{sec:sos} for the related definitions and background.  
\begin{theorem}[Optimal Asymptotic Error for Certifiably Bounded Distributions; informal]
\label{thm:sos-mean}
Let $\eps \in (0,c)$ for a sufficiently small absolute constant $c$.
    Let $P$ be a distribution over $\R^d$ with mean $\mu$ and  $t$-th moment certifiably bounded by $M$.
    Then there is an algorithm that takes $n=\poly(d^t,1/\eps)$ samples, runs in time $\poly(n^t,d^{t^2})$,
    and outputs an estimate $\widehat{\mu} \in \R^d$ such that with high constant probability $\|\widehat{\mu} - \mu\|_2 \lesssim M \eps^{1-\frac{1}{t}}$.
\end{theorem}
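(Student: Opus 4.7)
The plan is to mirror the strategy of \Cref{thm:mean-estimation}, replacing the role of stability with that of \emph{certifiable bounded moments}, and then invoking an existing SoS-based robust mean estimation algorithm as a black box. Recall that for purely global contamination, the algorithms of Hopkins--Li and Kothari--Steinhardt--Steurer take as input a corrupted dataset, an outlier rate $\eps$, and a degree-$t$ SoS proof that the inliers have $t$-th central moment bounded by $M$, and output an estimate $\widehat \mu$ with $\|\widehat \mu - \mu\|_2 \lesssim M \eps^{1-1/t}$. Thus it suffices to show that under the combined contamination model (\Cref{def:combined-outlier-local-intro}), the corrupted set still \emph{contains} a large subset whose $t$-th moment is certifiably bounded by $O(M)$; the rest follows by feeding this subset to the existing algorithm and treating the complement as additional $O(\eps)$-fraction of global outliers.

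The first step is to show that for $n = \poly(d^t, 1/\eps)$ i.i.d.\ samples $S_0$ from $P$, the empirical distribution has $t$-th central moments certifiably bounded by $O(M)$ with high probability. This is a standard sample-complexity result for SoS certificates and essentially already appears in the references of \Cref{sec:sos}. The second, and main, step is to establish an SoS analogue of the structural lemma behind \Cref{thm:mean-estimation}: after applying $\rho$-strong local contamination to $S_0$ to get $S$ and then $\eps$-global contamination to get $T$, there exists a subset $S^\star \subseteq T$ of size $(1-O(\eps))n$ and a degree-$t$ SoS proof that, for every unit $v$,
\begin{equation*}
\frac{1}{|S^\star|}\sum_{\widetilde x \in S^\star} \iprod{v, \widetilde x - \mu}^t \;\lesssim\; M + \rho^t.
\end{equation*}

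The hard part is constructing this certificate. The natural approach is to expand $\iprod{v, \widetilde x - \mu}^t = (\iprod{v, x - \mu} + \iprod{v, \widetilde x - x})^t$ via the binomial theorem and bound each mixed term SoS-style. The inlier term $\iprod{v, x - \mu}^t$ is controlled by the certificate for $S_0$; the linear cross term $\iprod{v, \widetilde x - x}$ averages to at most $\rho$ per direction by the definition of $\cW^\strong(S_0, \rho)$ (\Cref{def:strongWass-intro}), which can be lifted to an SoS fact by introducing auxiliary variables and sign-splitting as in the standard SoS encoding of $\ell_1$ constraints. The higher-order cross terms $\iprod{v, \widetilde x - x}^k$ for $k\ge 2$ are \emph{not} controlled by the strong-Wasserstein constraint alone, and this is the main technical obstacle. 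To handle them, I would follow the filtering idea underlying \Cref{thm:mean-estimation}: identify and discard the $O(\eps)$-fraction of points of $S$ whose perturbation vectors $\widetilde x_i - x_i$ have the largest contribution to an appropriate direction-wise tail, so that on the remaining subset the higher-order moments of the perturbations themselves are certifiably bounded. Combining this with SoS Cauchy--Schwarz and the certificate for $S_0$ then yields the desired degree-$t$ certificate with constant $O(M + \rho^t)$ for the subset.

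Finally, I would feed $T$ (with parameter $O(\eps)$) into the SoS-based robust mean estimation algorithm using the subset $S^\star$ as the effective inlier set. The algorithm's black-box guarantee gives $\|\widehat \mu - \mu\|_2 \lesssim (M+\rho^t)^{1/t}\,\eps^{1-1/t} \lesssim M\eps^{1-1/t} + \rho$, matching the claimed error up to the additive $\rho$ that is omitted in the informal statement (and is information-theoretically necessary). The sample complexity $n=\poly(d^t,1/\eps)$ and runtime $\poly(n^t, d^{t^2})$ are inherited directly from solving the degree-$t$ SoS program, since the filtering and subset-selection steps above are purely analytical and require no additional computation.
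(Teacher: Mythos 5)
Your plan of showing that a large subset $S^\star$ of the corrupted data is itself $(t,\cdot)$-certifiably bounded and then invoking an off-the-shelf SoS robust mean estimator is \emph{not} the route the paper takes, and the gap you flag as ``the main technical obstacle'' is in fact fatal to your approach as stated.

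First, a sanity check on the claimed target bound. You aim for $\frac{1}{|S^\star|}\sum_{\widetilde x\in S^\star}\langle v,\widetilde x-\mu\rangle^t\lesssim M^t+\rho^t$ on a $(1-O(\eps))$-fraction subset. If that were achievable, plugging it into the black box would give error $(M^t+\rho^t)^{1/t}\eps^{1-1/t}\lesssim M\eps^{1-1/t}+\rho\,\eps^{1-1/t}$, which is \emph{strictly better} than the information-theoretic lower bound of $\Omega(\rho)$ in this model; so the bound cannot hold. The best you could hope for is $M^t+\rho^t/\eps^{t-1}$, which after the $\eps^{1-1/t}$ rescaling recovers $M\eps^{1-1/t}+\rho$. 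But establishing even the weaker, correct bound as a \emph{degree-$O(t)$ SoS certificate} over a single subset that works simultaneously for all unit $v$ is exactly the quantifier-order/min-max problem that \Cref{prop:subset-covariance-intro} handles for second moments --- and there is no analogue in the paper (or that you provide) for $t$-th moments. Markov-type filtering in a single direction does not give a subset valid for all directions, and the Gaussian-rounding min-max argument in \Cref{lem:average_roots} is specific to quadratic forms.

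The paper's actual proof of \Cref{thm:sos-mean} (via \Cref{thm:robust-mean-sos}) sidesteps this entirely. Instead of certifying high moments of the perturbed points $\widetilde y_i$, it writes a \emph{custom} SoS program that encodes the decomposition $\widetilde y_i=\widetilde x_i+\widetilde z_i$ as a constraint (\Cref{it:sos-4}) and keeps the two pieces subject to \emph{different} moment conditions: a $t$-th moment bound on the ``inlier'' variables $\widetilde x_i$ (\Cref{it:sos-6}) and \emph{only a second moment} bound $\rho^2/\eps$ on the ``perturbation'' variables $\widetilde z_i$ (\Cref{it:sos-7}), which is exactly what \Cref{prop:subset-covariance-intro} guarantees exists. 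The correctness argument then exploits this separation: the contribution of the $\widetilde x_i$ part to $\langle v,\widetilde\mu-\mu\rangle$ is handled by the degree-$t$ SoS H\"older inequality, while the contribution of the $\widetilde z_i$ part is handled by a \emph{single} SoS Cauchy--Schwarz step against the quadratic constraint --- never requiring a $t$-th moment certificate for the perturbations at all. So this is not a black-box reduction to the global-contamination SoS algorithm; the decomposition $\widetilde y_i = \widetilde x_i + \widetilde z_i$ inside the program, with asymmetric moment constraints on the two parts, is the new idea your proposal is missing.
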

The asymptotic error $\eps^{1-1/t}$ is again optimal for this class of distributions, and the sample complexity is qualitatively optimal for a broad family of algorithms such as statistical query algorithms and low-degree polynomials~\cite{DiaKKPP22-colt}.

\subsubsection{Distribution Learning}\label{sec:distribution-learning-results}
We now move beyond mean estimation to the problem of distribution learning with respect to the (sliced)-Wasserstein metric, defined below.
\begin{definition}[Sliced Wasserstein Distance]\label{def:sliced-Wasserstein}
Let $P,Q$ be two distributions. The $k$-sliced $p$-Wasserstein Distance is defined as follows:
\begin{align*}
    W_{p,k}(P,Q) :=\max_{\substack{\text{$\bV:$ \text{rank-$k$} } \\ \text{projection matrix}}} \inf_{\pi \in  \Pi(P,Q)} \E_{(x,x') \sim \pi}\left[ \|\bV( x-x')\|_2^p \right]^{1/p}\;,
\end{align*}
    where $\Pi(P,Q)$ is the set of all couplings of $P$ and $Q$. By slightly overloading our notation, when $S$ and $\hat{S}$ are sets of points, we denote by $W_{p,k}(S,\hat{S})$ the $k$-sliced $p$-Wasserstein distance between the uniform distributions over $S$ and $\hat{S}$, respectively.
\end{definition}
Distribution learning in the sliced Wasserstein metric has applications in distributionally robust optimization~\cite{NieGS24}.
To present our result in generality for this problem, we consider the following contamination model, that interpolates between \Cref{def:strongWass-intro,def:weakWas} for $k=1$ and $k=d$, respectively.

\begin{contModel}[Strong Wasserstein Contamination: Contamination in $W_{1,k}$]
\label{def:strongWass-k}
Let $\rho > 0$.
    Let $S_0=\{x_1,\ldots,x_n \}$ be a multiset of $n$ points in $\R^d$. 
    Define the following set of local perturbations which are small in all rank-$k$ subspaces:
    \begin{align}
\cW^\strong_{1,k}(S_0,\rho) := \Big\{S= \{\widetilde{x}_1,\dots, \widetilde{x}_n\} \subset \R^d :     \max_{\substack{\text{$\bV:$ rank-$k$ } \\ \text{projection matrix}}}\,\,\tfrac{1}{n} \sum_{i \in [n]} \|\bV (\widetilde{x}_i - x_i) \|_2 \leq \rho.    \Big\}
    \end{align}
    The adversary returns any set  $S \in \cW^\strong_{1,k}(S_0,\rho)$ after possibly reordering the points.
    We call the set $S$ to be a $\rho$-contaminated version of the set $S_0$ under the $k$-sliced Wasserstein adversary.
\end{contModel}
Observe that $\cW^\strong_{1,k'}$ is a stronger adversary than $\cW^\strong_{1,k}$ for $k' \leq k$.
The distribution problem we consider is the following:
Let $S_0$ be a stable set corresponding to the inliers, which is then corrupted by a combination of local corruptions (\Cref{def:strongWass-k}) and global corruptions (\Cref{def:outliers}), and the statistician's goal is to output a distribution $\hat{P}$ which is close to the uniform distribution over $S_0$ with respect to the $W_{1,k}$ metric (this is a natural metric to use for measuring the error since the local corruptions in \Cref{def:strongWass-k} are also measured using the same metric). Formally, our result is the following: 

\begin{restatable}[Main Result for Distribution Learning]{theorem}{MAINTHEOREM}\label{thm:main_distr_learning}
    Let $\eps \in (0,c)$ be a parameter for the outlier rate, where $c$ is a sufficiently small absolute constant, $\rho > 0$ be a parameter for the local contamination radius, and $\delta>\eps$ be a parameter for stability.
    Let $S_0$ be a set that is $(\eps,\delta)$-stable with respect to an (unknown) $\mu \in \R^d$. 
     For a slicing parameter $k \in [d]$, 
    let $T$ be the corrupted dataset after local and global corruptions from \Cref{def:strongWass-k,def:outliers} with parameters $\rho$ and $\eps$, respectively; formally, $T \in \cO(S,\eps)$ for some $S \in \cW^\strong_{1,k}(S_0,\rho)$.
    Then, there exists a polynomial-time algorithm that on input $T,\eps, \rho, \delta$, and $k' \in [k]$,
    outputs an estimate $\widehat{S} \subset T$ such that, with high constant probability, it holds that 
        $W_{1,k'} (\widehat{S}, S_0) \lesssim \delta \sqrt{k'} + \rho$. 
\end{restatable}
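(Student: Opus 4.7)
The plan is to (i) show that the locally perturbed set $S$, while not itself stable, contains a large stable subset $S^{*}$; (ii) apply a filter-based stability-based algorithm to $T$ to extract a subset $\widehat{S} \subset T$ that closely tracks $S^{*}$; and (iii) bound $W_{1,k'}(\widehat{S}, S_0)$ via an explicit coupling that handles matched inliers through the local-perturbation bound and leftover pairs through Cauchy--Schwarz plus stability-type covariance bounds. This mirrors the strategy used to prove \Cref{thm:mean-estimation}, lifted from estimating a single mean to recovering a full empirical distribution.

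For (i), I would show that some subset $S^{*} \subseteq S$ of size $(1-O(\eps))n$ is $(\eps, O(\delta+\rho))$-stable with respect to $\mu$. The intuition is that inliers whose perturbation $\widetilde{x}_i - x_i$ is large in some direction ``look like'' outliers, so discarding them restores stability. Formally, this uses the per-direction L1 bound $\frac{1}{n}\sum_i \|V(\widetilde{x}_i - x_i)\|_2 \leq \rho$ together with a Markov-type peeling argument uniform over rank-$k$ projections. For (ii), running any filter-based algorithm on $T$ with parameter $\tilde{\delta} = O(\delta+\rho)$ returns a subset $\widehat{S} \subset T$ of size $\geq (1-O(\eps))n$ satisfying $\|\mu_{\widehat{S}} - \mu\|_2 \lesssim \delta + \rho$ and $\|\Sigma_{\widehat{S}} - \bI\|_\op \lesssim (\delta+\rho)^2/\eps$, with $|\widehat{S}\cap S^{*}| \geq (1-O(\eps))n$, since the algorithm's validity is underwritten by the existence of the stable subset $S^{*} \cap T$ inside $T$.

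For (iii), I would construct an explicit coupling between $\widehat{S}$ and $S_0$ (padded to equal size if necessary) by matching each $\widetilde{x}_i \in \widehat{S} \cap S$ to its pre-image $x_i \in S_0$ via the local perturbation, and matching the $O(\eps n)$ points of $\widehat{S} \setminus S$ arbitrarily to the unmatched $O(\eps n)$ points of $S_0$. For any rank-$k'$ projection $V$ with $k' \leq k$, the matched-inlier contribution is bounded by $\frac{1}{n}\sum_i \|V(\widetilde{x}_i - x_i)\|_2 \leq \rho$ via \Cref{def:strongWass-k}. The unmatched contribution, with $O(\eps n)$ pairs, is bounded by Cauchy--Schwarz on both sides: the $\widehat{S}$-side uses the covariance bound from the filter, and the $S_0$-side uses the near-identity covariance from $(\eps,\delta)$-stability. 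The key point is that the $S_0$-side is governed by $\delta$ (not $\delta+\rho$), yielding an outlier contribution of order $\delta\sqrt{k'}$, and combining with the matched inliers gives $W_{1,k'}(\widehat{S}, S_0) \lesssim \delta\sqrt{k'} + \rho$.

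The main obstacle is step (i): the local perturbation is only controlled in L1 per direction, whereas stability demands a covariance (L2-type) bound, so establishing the existence of $S^{*}$ requires a uniform covering/peeling argument over rank-$k$ projections---this is exactly the technical heart shared with \Cref{thm:mean-estimation}. A secondary subtlety, in step (iii), is obtaining the sharp $\delta\sqrt{k'}$ (rather than a weaker $(\delta+\rho)\sqrt{k'}$) dependence for the unmatched contribution; this requires splitting the Cauchy--Schwarz across the $\widehat{S}$-side and the $S_0$-side and invoking the $(\eps,\delta)$-stability of $S_0$ on its own for the latter, independent of the local perturbations that inflate $\Sigma_{\widehat{S}}$.
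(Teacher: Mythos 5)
Your high-level architecture---show a large stable subset survives the local corruptions, run a filter, then bound $W_{1,k'}$ by coupling---is the right one and matches the paper's. But there are genuine gaps in the details, and they are not merely cosmetic: as written, your argument would not produce the claimed bound $\delta\sqrt{k'}+\rho$.

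The first gap is in steps (i)--(ii). You assert that the filter should be run with the \emph{plain} stability parameter $\tilde\delta = O(\delta+\rho)$, returning a set $\widehat{S}$ with $\|\vec\Sigma_{\widehat{S}} - \bI\|_\op \lesssim (\delta+\rho)^2/\eps$. Passing from this operator-norm bound to a rank-$k'$ slice only gives $\langle \bM, \vec\Sigma_{\widehat{S}} - \bI\rangle \lesssim k'(\delta+\rho)^2/\eps$, and then Cauchy--Schwarz over the $O(\eps n)$ unmatched points on the $\widehat{S}$-side yields a contribution of order $(\delta+\rho)\sqrt{k'} = \delta\sqrt{k'} + \rho\sqrt{k'}$. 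The term $\rho\sqrt{k'}$ is not absorbed into $\delta\sqrt{k'}+\rho$ for $k'>1$ (take $\rho\gg\delta$). What the paper actually does is work with \emph{generalized} (rank-$k'$) stability (\Cref{def:condition}) and show that the locally perturbed inliers contain a subset that is $(\eps, \tilde\delta, k')$-generalized-stable with $\tilde\delta \lesssim \delta\sqrt{k'}+\rho$ (\Cref{lem:existence-of-stability-and-W_1-and-W-2-new}, combined with \Cref{lem:stability-bigger-param}). Correspondingly, the algorithm is not an off-the-shelf rank-$1$ stability-based filter but the multi-dimensional filter of \Cref{alg:mean_estimation}, whose stopping condition is $\max_{\bM\in\cM_{k'}}\langle\bM,\vec\Sigma_T\rangle \leq k' + C\tilde\delta^2/\eps$. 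This gives $\langle\bM,\vec\Sigma_{\widehat{S}}-\bI\rangle \lesssim (\delta\sqrt{k'}+\rho)^2/\eps$, which is the quantity that actually cashes out to $\delta\sqrt{k'}+\rho$ after Cauchy--Schwarz --- note $(\delta\sqrt{k'}+\rho)^2 \leq 2k'\delta^2 + 2\rho^2$ is strictly smaller than $k'(\delta+\rho)^2$.

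The second gap is in step (iii), even assuming the rank-$k'$ filter. Your Cauchy--Schwarz on the outlier pairs bounds $\frac{1}{n}\sum_{\text{outliers}}\|\bV(\widehat{x}_j-\mu)\|_2 \leq \sqrt{\eps}\cdot\sqrt{\langle \bV, \vec\Sigma_{\widehat{S}}\rangle + \dots}$. The second factor still contains the ``baseline'' $k'$ from the identity part, giving a spurious $\sqrt{\eps k'}$ term, which need \emph{not} be $\lesssim \delta\sqrt{k'}$ (the hypothesis is only $\delta>\eps$, not $\delta>\sqrt{\eps}$). The paper avoids this via the certificate lemma (\Cref{lem:certificate_colt,lem:certificate_classic}): the covariance decomposition $\vec\Sigma_T = (1-\eps)\vec\Sigma_{S'\cap T} + \eps\vec\Sigma_{T\setminus S'} + \eps(1-\eps)(\mu_{S'\cap T}-\mu_{T\setminus S'})(\cdots)^\top$ shows that the second moment \emph{averaged over the outlier set} is $\lesssim \tilde\delta^2/\eps^2 + k'$ rather than $\lesssim k'/\eps$, so the outlier contribution becomes $\eps\sqrt{r'} \lesssim \tilde\delta + \eps\sqrt{k'}$, and $\eps\sqrt{k'} \leq \delta\sqrt{k'}$. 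This is a genuinely different argument from a plain Cauchy--Schwarz --- it exploits that the inlier covariance is near-identity, so the filter's stopping condition caps the outlier covariance, not just the total. Your ``explicit coupling'' sketch is close in spirit to \Cref{w1-w2-subsets-prelims} (which the paper does use as an ingredient), but by itself it cannot get past the $\sqrt{\eps k'}$ and $\rho\sqrt{k'}$ barriers without the rank-$k'$ generalized stability and the certificate decomposition.
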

We note that \cite{NieGS24} also provides rates for distribution estimation in the $W_{1,k}$ metric.
However, their adversary for the local contamination is much weaker ($W^\strong_{1,d}$).
In contrast, our rates for both the local perturbation and accuracy are measured in $W_{1,k}$.

    Our result for mean estimation, \Cref{thm:mean-estimation}, is a special case of the above for $k'=1$.\footnote{It follows by a standard property of sliced-Wasserstein distance that $\|\mu_{\hat{S}} - \mu_{S_0}\|_2 \lesssim W_{1,1}(\hat S,S_0)$.}  
    For general $k\geq 1$, our proposed algorithm is a filtering-based algorithm that uses a certificate lemma from \citeauthor{NieGS24}~\cite{NieGS24}.
    We directly optimize this certificate in an efficient manner to obtain the optimal error $\delta \sqrt{k} + \rho$; In contrast, \cite{NieGS24} optimizes an approximation of their certificate, which leads to the larger error $\max(\delta, \sqrt{\eps}) \sqrt{k}  + \rho$.\footnote{Recall that for nice distribution families, $\delta$ is the function of $\eps$ from \Cref{fact:stability-rates}; importantly $\delta = o(\sqrt{\eps})$ for Gaussians and distributions with $k>2$ bounded moments}

    We now discuss the error guarantee of \Cref{thm:main_distr_learning} in more detail. 
    First, the error of  $\Omega(\rho)$ is trivially needed because each point could be shifted by distance $\rho$ along the same direction.
    Next, the error term $\delta \sqrt{k}$ is also optimal; see
    \cite[Corollary 5]{NieGS24}.

    We now show how to instantiate the error  guarantee of \Cref{thm:main_distr_learning} for learning a distribution $P$ over $\R^d$ as opposed to the uniform distribution over a set $S_0$.
    Applying the triangle inequality, we get a simple approximation $W_{1,k}(\hat S, P) \leq W_{1,k}(\hat S, S_0) + W_{1,k}(S_0, P) = O(\delta \sqrt{k} + \rho + W_{1,k}(S_0, P))$.
    While the first two terms are optimal (as shown above),
    the third term can be upper bounded by $\tilde O(\sqrt{d}kn^{-\frac{1}{\max(k,2)}})$ using \cite{Boedihardjo24}.
    On the other hand, it has been shown in \cite{NilesweedRigollet22} that even for clean i.i.d.\ data, any estimator $\hat P$ necessarily incurs error $W_{1,k}(\hat P, P)=\Omega(c_d n^{-1/\max(k,2)} + \sqrt{d/n})$ for a dimension-dependent term $c_d$.
    Thus, the resulting error guarantee 
    is tight up to the suboptimality of $W_{1,k}(S_0, P)$, which we leave for future work.

Finally, the results for robust principal component analysis are deferred to \Cref{sec:pca}.

\subsection{Related Work}

Our work lies broadly in the field of algorithmic robust statistics. 
We refer the reader to \citeauthor{DiaKan22-book}~\cite{DiaKan22-book} for a recent book on the topic. Within this line of our work, our work is most closely related to \cite{SteCV18,DepLec22,DiaKP20}, which we discuss in \Cref{sec:techniques}.
As mentioned earlier, robust statistics has primarily focused on \Cref{def:outliers}.
Some notable exceptions include \cite{ZhuJT20-gan,ZhuJS19, LiuLoh22, ChaoDobriban23,NieGS24}, discussed below in detail.

\citeauthor{ZhuJS19}~\cite{ZhuJS19} and \citeauthor{LiuLoh22}~\cite{LiuLoh22} studied the problems of covariance estimation and linear regression under the Wasserstein-1 perturbations.
Similarly, \citeauthor{ChaoDobriban23}~\cite{ChaoDobriban23} investigated the Wasserstein-$2$ perturbations and developed minmax-optimal estimators under those models. 
To the best of our knowledge, the combined contamination model (\Cref{def:combined-outlier-local-weak-intro}) was first proposed and studied in \citeauthor{ZhuJT20-gan}~\cite{ZhuJT20-gan}, inspired by chained perturbations in the computer vision literature~\cite{BorPD18}. \citeauthor{LiuLoh22}~\cite{LiuLoh22} also studied \Cref{def:combined-outlier-local-weak-intro} by focusing on the problems of covariance estimation and linear regression.
However, all of these works focused on the statistical aspects (and \emph{weak} local contamination), and did not provide computationally-efficient algorithms.

Our work is most closely related to \citeauthor{NieGS24}~\cite{NieGS24} who developed computationally-efficient algorithms for the combined contamination model (with weak local perturbations) in \Cref{def:combined-outlier-local-weak-intro}.
In contrast, we study the stronger \Cref{def:combined-outlier-local-intro}. 
We also obtain the improved dependence on $\eps$ for certain distribution families, which was phrased as an open question in their work.

Finally, we mention related works from the theory of optimal transport.
In fact, the combined contamination model (\Cref{def:combined-outlier-local-weak-intro}) is closely related to the notion of \emph{outlier-robust} optimal transport cost~\cite{BalCF20-robust-ot,NieGC22-duality}, but their focus is rather different. 
In fact, our results can be seen as learning when the samples are perturbed in outlier-robust \emph{sliced} optimal transport cost.
The sliced-Wasserstein distance has been studied in several recent works because it avoids the curse of dimensionality fundamental to the usual Wasserstein distance~\cite{RabPDB12,NadDCKSS20,ManBW22,Boedihardjo24,CheNR24-ot}.

\subsection{Overview of Techniques}\label{sec:techniques}

In this section, we give an overview of the challenges posed by the (strong) local contamination and highlight a key technical result towards establishing \Cref{thm:mean-estimation}.

We begin by highlighting the issue that local perturbations (whether strong or weak, as defined in \Cref{def:strongWass-intro,def:weakWas}, respectively) can significantly increase the covariance—or more generally, the higher moments—of the data. 

\begin{example}[Local Corruptions Can Destroy Higher Moment Structure]
\label{ex:local-corrution-cov}
Suppose the inliers are $S_0= \{x_1,\dots,x_n\}$ and they have identity covariance.
For a unit vector $v$, consider the following locally corrupted set $S = \{x_1+0.5\rho n v, x_2-0.5\rho n v,x_3,x_4,\dots,x_n\}$. 
These local corruptions increase the covariance of $S$ by at least $\Theta(\rho^2 n)$ in the operator norm.
In particular, the set $S$ is not stable since its stability parameter diverges with $n$.
\end{example}
While $S$ above does not have bounded covariance (and hence not stable), we see that only a tiny fraction of points contributes disproportionately to the covariance, and hence, we might as well consider them outliers since the stability-based algorithms are robust to outliers.
Thus, the goal shifts towards establishing the stability of a \emph{large subset} $S' \subset S$ of the locally perturbed data, which  would directly imply our result for mean estimation. 
More generally, for our distribution learning result, we need an analogous claim for a generalized notion of stability, defined below (where the notation $\mu_S$ denotes the empirical mean $\tfrac{1}{|S|}\sum_{x \in S}x$ and $\vec \Sigma_{S}$ the empirical second moment centered around $\mu$, i.e., $\tfrac{1}{|S|}\sum_{x \in S}(x-\mu)(x-\mu)^\top$):

\begin{restatable}[Generalized Stability]{definition}{GENERALIZEDSTABILITY}
\label{def:condition}
    Let $\eps\in(0,1/2)$ and $\delta \in [\eps,\infty)$. 
    Let $S$ be a set of points in $\R^d$ and $\mu$ be a vector in $\R^d$.
    We say that $S$ satisfies the $(\eps,\delta,k)$-generalized-stability with respect to $\mu$ if for all $S' \subseteq S$ with $|S'| \geq (1-\eps) S$, the following hold:
    \begin{itemizec}
        \item $\left\|\mu_{S'}-\mu \right\|_{2} \leq  \delta$.
        \item For every $\bV \in \cV_k$, $\left|  \left\langle \bV , \overline{\vec \Sigma}_{S'} - \bI \right\rangle \right| \leq  \delta^2/\eps$.
    \end{itemizec}
    where $\cV_k$ denotes the set of all rank-$k$ projection matrices.
\end{restatable}

As highlighted above, the key difficulty in establishing the generalized stability property of $S$ concerns the second property in \Cref{def:condition},
which posits that for all large subsets $S'$, 
the covariance is small in the sense that its inner product with any rank-$k$ projection  is at most $\delta^2/\eps$.
To simplify the discussion, let us demonstrate our ideas for the special case when $S’ = S$, i.e., the complete set.
The variance-like quantity $\langle \bV , \overline{\vec \Sigma}_{S} \rangle$ is mainly composed of two terms (ignoring the cross terms): (i) the covariance of the unperturbed data $S_0$, $\langle \bV , \overline{\vec \Sigma}_{S_0} \rangle$ and (ii) the second moment of the local perturbations: $\tfrac{1}{n}\sum_i  \|\bV \Delta_i\|_2^2$. 
The first term can be handled by stability of the original data.
 Thus, the goal is to find a large subset of local perturbations that have bounded second moment (\Cref{lem:existence-of-stability-and-W_1-and-W-2-new}).
To be more precise, we need to identify a $(1-\eps)$-fraction of local perturbations $\{ \Delta_i \}_{i \in [n]}$ with second moment matrix bounded by $\rho^2/\eps$ in every rank-$k$ projection.
However, establishing the existence of a large stable subset is significantly different for the weak and strong local contamination, as explained next.

\paragraph{Differences between weak and strong local contamination.} To highlight the challenges between the strong and weak local contamination, 
we define
$(\Delta_i)_{i \in [n]}$ and $(\Delta'_i)_{i \in [n]}$ to be $2n$ vectors in $\R^d$ corresponding to strong and weak local contamination, respectively. That is, these vectors satisfy
\begin{align}
    \label{eq:local-perturbation-intro}
    \sup_{\bV \in \cV_k} \frac{1}{n} \sum_{i \in [n]} \| \bV \Delta_i \|_2 \leq \rho\,\,\, \text{and}\,\,\,  \frac{1}{n} \sum_{i \in [n]} \| \Delta_i'\|_2 \leq \rho,
\end{align}
respectively, where $\cV_k$ denotes the set of all rank-$k$ projection matrices.
For the weak local contamination, finding a large stable subset of the $\{\Delta_i'\}_{i \in [n]}$ with bounded second moment is rather easy: Let $\cI \subset [n]$ be the set of indices corresponding to the  $(1-\eps)n$ many vectors from $\{\Delta_i'\}_{i \in [n]}$ with the smallest Euclidean norms.
It can be then checked that the $\{ \Delta'_i\}_{i \in \cI}$ have appropriately bounded second moment as follows:
For any $\bV \in \cV_k$, it holds 
\begin{align}
    \label{eq:truncation-intro-weak}
    \frac{1}{|\cI|} \sum_{i \in \cI} \Delta_i^\top \bV \Delta_i \leq  \frac{1}{|\cI|} \sum_{i \in \cI} \|\Delta_i\|_2^2
    \leq
     \max_{i \in \cI}\|\Delta'_i\|_2 \cdot \frac{1}{|\cI|} \sum_{i \in \cI} \| \Delta'_i\|_2
    \lesssim \frac{\rho}{\eps} \cdot \rho \lesssim \frac{\rho^2}{\eps},
\end{align}
where we used the Markov inequality to get that all $\Delta'_i$'s in $\cI$ have Euclidean norm at most $\rho/\eps$, and we also used that $ \tfrac{1}{|\cI|} \sum_{i \in \cI} \| \Delta'_i\|_2 \leq \rho$ by definition of the weak local perturbations.
Implicitly, this is the strategy used in \citeauthor{NieGS24}~\cite{NieGS24}.\footnote{While \cite{NieGS24} does not obtain the optimal dependence on the stability parameter after this truncation, a careful calculation leads to optimal stability parameter for the locally perturbed data; see \Cref{lem:existence-of-stability-and-W_1-and-W-2-new}. }

\looseness=-1However, this norm-based truncation  can not work for the strong local contamination. 
This is simply because $\|\Delta'_i\|_2$ might be $\Theta(\rho \sqrt{d/k})$
for all $i \in [n]$, and hence the resulting inequality in \Cref{eq:truncation-intro-weak} is too loose. 

\paragraph{Towards tackling strong local contamination.}
Let $\{\Delta_i\}_{i \in [n]}$ now be perturbations according to the strong local contamination (i.e., satisfying the first inequality in \eqref{eq:local-perturbation-intro}).
A natural strategy is to adopt the proof strategy in a \emph{direction-dependent} manner.
For a ``direction'' $\bV \in \cV_k$, we can define the set $\cI_\bV \subset [n]$ to be the set of $(1-\eps)n$ many indices with the smallest $\|\bV \Delta_i\|_2$'s.
A similar application of Markov's inequality implies that $\max_{i \in \cI_\bV} \|\bV \Delta_i\|_2^2 \leq \rho/\eps$.
Following the arguments similar to \Cref{eq:truncation-intro-weak}, we find that for any  $\bV \in \cV_k$:
\begin{align}
    \label{eq:truncation-intro-strong}
    \frac{1}{|\cI_\bV|} \sum_{i \in \cI_\bV} \| \bV \Delta_i \|_2^2 \lesssim \frac{\rho^2}{\eps}.
\end{align}
That is, for any  $\bV \in \cV_k$, there is a large subset whose second moment in the ``direction`` $\bV \in \cV_k$ is at most $\rho^2/\eps$.
However, the order of quantifiers of $\bV$ and the $\cI_\bV$ is reversed compared to what we want; we would like to find a single subset that works for every  $\bV$.
In what follows, we show that the order of quantifiers can actually be fixed
by establishing the following statement in this section:
\begin{proposition}
\label{prop:subset-covariance-intro}
    Let points $\Delta_i \in \R^d$ as in \Cref{eq:local-perturbation-intro}. 
    Then for every $\eps \in (0,1)$ there exists a subset $\cI \subseteq [n]$ such that (i) $|\cI| \geq (1-\eps)n$ and 
    (ii) for all $\bV \in \cV_k$,     
    $\frac{1}{|\cI|}\sum_{i \in \cI} \|\bV \Delta_i\|_2^2 \lesssim \rho^2/\eps$.
\end{proposition}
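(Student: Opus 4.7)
\begin{proofsketch}
My plan is to use LP duality via Sion's minimax theorem in order to reduce the claim to the projection case handled by the truncation argument preceding the proposition. Introduce the fractional relaxation
\begin{align*}
L := \min_{w \in W_\eps}\ \max_{\bV \in \cV_k}\ \sum_{i=1}^{n} w_i \|\bV \Delta_i\|_2^2,
\end{align*}
where $W_\eps := \{w \in [0,1]^n : \sum_i w_i = \lceil(1-\eps)n\rceil\}$. Since $g(w) := \max_{\bV \in \cV_k}\sum_i w_i\|\bV \Delta_i\|_2^2$ equals the sum of the top-$k$ eigenvalues of $\sum_i w_i \Delta_i\Delta_i^\top$, which is a convex function of $w$, the minimum of $g$ over the polytope $W_\eps$ is attained at one of its vertices---each of which is the indicator vector of a $\lceil(1-\eps)n\rceil$-sized subset. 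Hence it suffices to show $L \lesssim \rho^2 n/\eps$.

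Next, because $\bV \mapsto \langle \bV, \bA\rangle$ is linear, the maximum of this expression over $\cV_k$ coincides with its maximum over the convex compact set $\mathrm{conv}(\cV_k) = \{\bV\succeq 0 : \|\bV\|_\op\leq 1,\,\tr(\bV)=k\}$. Thus $(w,\bV) \mapsto \sum_i w_i\langle \bV, \Delta_i\Delta_i^\top\rangle$ is a bilinear function on the convex compact sets $W_\eps$ and $\mathrm{conv}(\cV_k)$, and Sion's minimax theorem yields
\begin{align*}
L = \max_{\bV \in \mathrm{conv}(\cV_k)}\ \min_{w \in W_\eps}\ \sum_i w_i \langle \bV, \Delta_i\Delta_i^\top\rangle,
\end{align*}
together with the existence of a saddle point $(w^*,\bV^*)$. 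The crucial observation is that $\bV^*$ maximizes a \emph{linear} function of $\bV$ over a convex compact set, so it can be chosen to lie at an extreme point of $\mathrm{conv}(\cV_k)$, i.e., at a rank-$k$ projection $\bV^* \in \cV_k$.

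At such a projection $\bV^*$, the inner minimum equals the sum of the smallest $\lceil(1-\eps)n\rceil$ values of $\|\bV^* \Delta_i\|_2^2$. Writing $b_i := \|\bV^* \Delta_i\|_2$ and applying the hypothesis $\sum_i b_i \leq \rho n$ (which follows from \cref{eq:local-perturbation-intro}), Markov's inequality gives that the $\lceil(1-\eps)n\rceil$-th smallest $b_i$ is at most $\rho/\eps$, so the sum of the smallest $\lceil(1-\eps)n\rceil$ values of $b_i^2$ is bounded by $(\rho/\eps)\sum_i b_i \leq \rho^2 n/\eps$. Thus $L \leq \rho^2 n/\eps$, and by the vertex argument above, the minimum is realized by the indicator vector of a subset $\cI$ with the required properties.

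The main conceptual subtlety is handling $\bV$ in the interior of $\mathrm{conv}(\cV_k)$: a direct extension of the truncation argument would require bounding $\sum_i\|\bV^{1/2}\Delta_i\|_2$, but the hypothesis provides no uniform absolute-constant bound on this quantity (it can be as large as $\sqrt{d/k}\cdot \rho n$, e.g., at $\bV = (k/d)\bI$). The saddle-point argument elegantly circumvents this by reducing the analysis back to the projection case, where the hypothesis applies directly.
\end{proofsketch}
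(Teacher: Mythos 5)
Your setup matches the paper up to the point of applying min-max duality: both you and the authors relax to a convex weight polytope, convexify $\cV_k$ to $\cM_k$, and swap the $\min$ and $\max$. The divergence occurs in how the resulting $\max_{\bM \in \cM_k} \min_{w} \sum_i w_i \langle \bM, \Delta_i \Delta_i^\top \rangle$ is bounded, and this is where your argument has a genuine gap.

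You claim that the saddle point $\bV^*$ ``can be chosen to lie at an extreme point'' because it maximizes the linear functional $\phi(w^*,\cdot)$. It is true that the face $F := \argmax_{\bM} \phi(w^*, \bM)$ contains an extreme point $\bV^{**} \in \cV_k$, and that $\phi(w^*, \bV^{**}) = \phi(w^*, \bV^*) = L$. But replacing $\bV^*$ by $\bV^{**}$ does \emph{not} preserve the other half of the saddle-point property: $w^*$ need not minimize $\phi(\cdot, \bV^{**})$. Consequently you only get $\min_w \phi(w, \bV^{**}) \le \phi(w^*, \bV^{**}) = L$, which is an inequality in the \emph{wrong} direction. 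Bounding $\min_w \phi(w, \bV^{**})$ via Markov therefore gives a \emph{lower} bound on $L$, not the required upper bound. Equivalently: $\bM \mapsto \min_w \phi(w, \bM)$ is concave (a minimum of linear functions), and maximizers of concave functions over convex compact sets need not lie at extreme points, so the step ``reduce to $\bV^* \in \cV_k$'' does not follow.

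Your final remark also contains an error that makes the gap look smaller than it is. You assert that $\tfrac{1}{n}\sum_i \|\Delta_i\|_{\bM}$ ``can be as large as $\sqrt{d/k}\cdot\rho$'' at $\bM = (k/d)\bI$, suggesting no uniform constant-factor bound is available. This is false: the paper's \Cref{lem:average_roots} shows, via a Gaussian rounding argument, that $\sup_{\bM \in \cM_k} \tfrac{1}{n}\sum_i \|\Delta_i\|_{\bM} \lesssim \sup_{\bV \in \cV_k}\tfrac{1}{n}\sum_i \|\bV\Delta_i\|_2 = \rho$. (You can sanity-check this on $\Delta_i = c\,e_i$ with $n = d$: the hypothesis forces $c = \rho\sqrt{d/k}$, and then $\tfrac{1}{n}\sum_i\|\Delta_i\|_{(k/d)\bI} = \rho$, not $\rho\sqrt{d/k}$.) This uniform bound over all of $\cM_k$ is precisely what is needed to apply the truncation-plus-Markov argument directly to any $\bM \in \cM_k$, and it is the missing ingredient your proof needs. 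Without it, or a valid substitute, the step from the duality identity to the $\rho^2 n/\eps$ bound does not go through.
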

We note that the proposition above is deterministic. 
 Our proof strategy builds on \citeauthor{SteCV18}~\cite{SteCV18} and \citeauthor{DiaKP20}~\cite{DiaKP20}, with crucial differences, as explained next.
\cite{SteCV18} includes a similar result for the $k=1$ case, but their formulation and proof do not seem to capture the rank-$k$ sliced distance, and hence their result does not yield the more general version of $k\gg 1$, which is crucially needed for our result on distribution learning.
On the other hand,  \cite{DiaKP20} focuses on establishing good sample complexity (again for $k=1$) for stability (\Cref{fact:stability-rates}) as opposed to the deterministic statement above.

We now sketch the proof of \Cref{prop:subset-covariance-intro}. Instead of solving the discrete problem above (optimizing over all large subsets $\cI$), 
following \cite{SteCV18,DiaKP20},
we begin by performing a convex relaxation and define
\begin{align*}
    \Delta_{n,\eps} := \left\{w \in \R_+^n: \sum_{i=1}^n w_i = 1 \, ; 0 \leq  w_i \leq \frac{1}{(1-\eps)n}\,\,  \right\}\,.
\end{align*}
\looseness=-1A rounding argument shows that finding a  $w \in \Delta_{n,\eps}$ suffices to prove \Cref{prop:subset-covariance-intro},  i.e., 
it suffices to show that   
$\min_{w \in \Delta_{n,\eps}} \max_{\bV \in \cV_k} \sum_i w_i \| \bV \Delta_i \|_2^2 \lesssim \rho^2/\eps$ (\Cref{lem:rounding-cont-discrete} from \cite{DiaKP20}).
As alluded to earlier, if the order of quantifiers for $w$ and $\bV$ was reversed, the desired conclusion would follow from \Cref{eq:truncation-intro-strong}.

In order to reverse these quantifiers, we shall use the min-max duality for bilinear programs over convex compact sets.
Thus, we perform a convexification of the max variable and define $\cM_k := \{\bM \in \R^{d \times d}: 0 \preceq \bM \preceq \bI; \trace(\bM) = k\}$, which is the convex hull of $\{\bV \in \R^{d \times d}: \bV \in \cV_k\}$.
We thus arrive at the key reformulation:
\begin{align*}
    \min_{w \in \Delta_{n,\eps}} \max_{\bV \in \cV_k} \sum_{i=1}^n w_i \|\bV \Delta_i\|_2^2 
    = \min_{w \in \Delta_{n,\eps}} \max_{\bM \in \cM_k} \sum_{i=1}^n w_i  \Delta_i^\top \bM \Delta_i 
    &=  \max_{\bM \in \cM_k}\min_{w \in \Delta_{n,\eps}} \sum_{i=1}^n w_i  \Delta_i^\top \bM \Delta_i,
\numberthis \label{eq:min-max-M-intro}
\end{align*}
where the last equality is due to the min-max duality.

While the order of quantifiers allows us to perform direction-dependent truncation,
we are faced with the new challenge that we do not have guarantees on the behavior of $\{\Delta_i^\top \bM \Delta_i\}_{i=1}^n$ for a general $\bM \in \cM_k$.
To be precise, while \Cref{eq:truncation-intro-strong} implies that $\max_{\bV \in \cV_k}\min_{w \in \Delta_{n,\eps}} \sum_i w_i  \Delta_i^\top \bV \Delta_i  \lesssim \rho^2/\eps$, the same argument does not apply when the $\max$ is taken over $\cM_k$.
This is because of the non-linearity induced by the $\min_{w}$ operator (if it was linear in $\bM$, then the maximum over $\cV_k$ and the analogous maximum over $\bM \in \cM_k$ would have been equal by convexity).

A simple observation here is to note that if the $\Delta_i$'s were \emph{well-behaved} with respect to $\bM$ in the sense
that $\sup_{\bM \in \cM_k} \sum_{i=1}^n \frac{1}{n} \sqrt{\Delta_i^\top \bM \Delta_i} \lesssim \rho$, 
then we can control the right hand side in \Cref{eq:min-max-M-intro} by $\rho^2/\eps$ using the 
same truncation strategy as in \Cref{eq:truncation-intro-strong}.
Using a Gaussian rounding scheme, 
inspired by a similar rounding scheme from \citeauthor{DepLec22}~\cite{DepLec22}, 
we prove in \Cref{lem:average_roots} that 
\begin{align*}
    \sup_{\bM \in \cM_k} \sum_{i=1}^n \frac{1}{n} \sqrt{\Delta_i^\top \bM \Delta_i} \lesssim \sup_{\bV \in \cV_k} \frac{1}{n} \sum_{i \in [n]} \|\bV \Delta_i\|_2,
\end{align*}
which completes the proof of \Cref{prop:subset-covariance-intro}.

\paragraph{Completing the proof of \Cref{thm:mean-estimation}.}
In \Cref{lem:existence-of-stability-and-W_1-and-W-2-new}, we show that if the local perturbations have the second moment matrix bounded by $\rho^2/\eps$ in each ``direction'' $\bM \in \cM_k$, 
then these perturbations can degrade the stability parameter $\delta$ by at most additive $\rho$ (up to additional constant prefactors).
Importantly, \Cref{lem:existence-of-stability-and-W_1-and-W-2-new} preserves the dependence on $\delta$ as opposed to the analysis in \cite{NieGS24}, which obtains a bound in terms of  $\max(\delta, \sqrt{\eps})$.
\Cref{prop:subset-covariance-intro} implies that a large subset of local perturbations has bounded second moment matrix in each ``direction'' $\bV \in \cV_k$ (and by convexity the same is true for every direction $\bM \in \cM_k$).
Combining these two claims, we get the existence of a large stable subset after strong local contamination, finishing the proof of \Cref{thm:mean-estimation}.

\subsection{Organization}
The rest of the paper is organized as follows:
\Cref{sec:prelim} contains basic definitions and the key properties of stable sets that will be useful later on.
\Cref{sec:rounding} states the relationship between the low-rank projections and their convex counterparts.
In \Cref{sec:stability-under-local}, we show that local perturbations with bounded covariance suffice for stability.
\Cref{sec:mean-estimation,sec:distr_learning,sec:sos} include the proofs of \Cref{thm:mean-estimation,thm:main_distr_learning,thm:sos-mean}, respectively.
We include the results for principal component analysis in \Cref{sec:pca}.

\section{Preliminaries}
\label{sec:prelim}
\paragraph{Basic notation.}
We use $\mathbb{Z}_+$ for the set of positive integers and $[n]$ to denote $\{1,\ldots,n\}$. For a vector $x$ we denote by $\|x\|_2$ its  Euclidean norm. Let $\bI_d$  denote the $d\times d$ identity matrix (omitting the subscript when it is clear from the context). We use $\cS^{d-1}$ to denote the set of points $v \in \R^d$ with $\|v\|_2=1$. We use  $\top$ for the transpose of matrices and vectors.
For a subspace $\mathcal{V}$ of $\R^d$ of dimension $m$, we denote by $\P_{\cV} \in \R^{d \times d}$ the orthogonal projection matrix of $\cV$. 
 That is, if the subspace $\cH$ is spanned by the columns of the matrix $\bA$, then $ \P_{\cH}:=\bA(\bA^\top \bA)^{-1} \bA^\top$. By slightly overloading notation, if $\bA$ is a matrix, we will also use $\P_\bA$ to denote the orthogonal projection matrix for the subspace spanned by the columns of $\bA$.
We say that a symmetric $d\times d$ matrix $\bA$ is PSD (positive semidefinite) and write $\bA\succcurlyeq 0$ if for all $x\in \mathbb{R}^d$ it holds $x^\top \bA x\ge 0$. We use $\|\bA\|_{\op}$ for the operator (or spectral) norm of the matrix $\bA$. We use $\tr(\bA)$ to denote the \emph{trace} of the matrix $\bA$ and  $\langle \bA ,\bB \rangle = \tr(\bA \bB^\top)$ to denote the \emph{Frobenius inner product} between matrices $\bA$ and $\bB$. For a PSD matrix $\bM$ and a vector $x$,  $\|x\|_\bM := \sqrt{x^\top \bM x}$ denotes the \emph{Mahalanobis norm} of $x$ with respect to $\bM$.

We write $x\sim D$ for a random variable $x$ following the distribution $D$ and use $\E[x]$ for its expectation. We use $\cN(\mu,\vec \Sigma)$ to denote the Gaussian distribution with mean $\mu$ and covariance matrix $\vec \Sigma$. We write $\pr(\cE)$ for the probability of an event $\cE$. We write $\1_{\cE}$ for the indicator function of the event $\cE$.

We use $a\lesssim b$ to denote that there exists an absolute universal constant $C>0$ (independent of the variables or parameters on which $a$ and $b$ depend) such that $a\le Cb$.
Sometime, we shall abuse the notation and use $a = O(b)$ to denote the same to save space.

\paragraph{Projection matrices and convex relaxations.} We use $\cV_k$ to denote the set of all rank-$k$ projection matrices in $\R^{d \times d}$.
Recall that for any $\bV \in \cV_k$, $\bV$ is symmetric, PSD, and idempotent.  
We use $\cM_k$ to denote the set of convex relaxation of $\cV_k$, i.e.,
\begin{align}
\label{eq:M_k_definition}
    \cM_k:= \{ \bM \in \R^{d \times d} : \bM \succeq 0,\; \bM \preceq \bI,\; \trace(\bM) = k\}.
\end{align}

\paragraph{Empirical mean and second moment matrices.} For a  $S \subset \R^d$,
we use the following notation for the sample mean, sample covariance, and the centered second moment matrix with respect to $\mu$ (which shall be clear from context), respectively:
\begin{align}\label{eq:shortcut}
    \mu_S := \frac{1}{|S|}\sum_{x \in S}x,\quad  \vec \Sigma_S := \frac{1}{|S|}\sum_{x \in S}(x-\mu_S)(x-\mu_S)^\top, \quad \overline {\vec \Sigma}_S := \frac{1}{|S|}\sum_{x \in S}(x-\mu)(x-\mu)^\top \;.
\end{align}

\subsection{Generalized Rank-$k$ Stability}

As outlined in \Cref{sec:intro}, our algorithm for mean estimation relies on the exact same \emph{stability condition} developed in prior work. However, for our distribution learning result, our algorithm is a multi-dimensional generalization of the standard filtering, which requires us to consider an appropriate generalization of the stability condition, presented in \Cref{def:condition}. For $k=1$ this definition reduces to the standard stability condition.

\GENERALIZEDSTABILITY*

\begin{remark}
    Using convexity arguments, it can be seen that we can replace the condition   ``for every $\bV \in \cV_k$'' with ``for every $\bM \in \cM_k$'' (cf. \eqref{eq:M_k_definition}) in the second condition of \Cref{def:condition}. 
\end{remark}

\subsubsection{Equivalent Definitions of Generalized Stability}

We will often need to use basic properties of the stability condition that follow directly from its definition. These properties are presented in \Cref{lem:equivalence} as equivalent ways to define the stability condition. These equivalences have been shown in the literature for the special case of $k=1$ (see, e.g., Claim 4.1 in \cite{DiaKP20} and Lemma 3.1 in \cite{DiaKan22-book}), but the proof readily extends to general $k$.\looseness=-1

\begin{lemma}\label{lem:equivalence}
    \Cref{def:condition,def:condition_2,def:condition_3} are all equivalent to each other, up to an absolute constant factor in front of the parameter $\delta$.
\end{lemma}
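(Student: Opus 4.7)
The plan is to prove the equivalence via a cycle of implications, each losing at most an absolute constant in $\delta$. Based on the analogous $k=1$ results cited (Claim 4.1 of \cite{DiaKP20} and Lemma 3.1 of \cite{DiaKan22-book}), I expect Definition \ref{def:condition_2} to be the weighted convex relaxation---quantifying over weight vectors $w \in \Delta_{n,\eps}$ instead of subsets $S' \subseteq S$ with $|S'| \geq (1-\eps)|S|$---and Definition \ref{def:condition_3} to use the sample-mean-centered second moment $\vec \Sigma_{S'}$ in place of the true-mean-centered $\overline{\vec \Sigma}_{S'}$. I would handle the two reformulations (centering and discrete-vs-continuous) separately, since the mean condition transfers immediately in both cases.

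For the centering equivalence, the algebraic identity $\overline{\vec \Sigma}_{S'} - \vec \Sigma_{S'} = (\mu_{S'}-\mu)(\mu_{S'}-\mu)^\top$ is the workhorse. Taking the Frobenius inner product against any $\bV \in \cV_k$ gives
\[
\big|\langle \bV,\overline{\vec \Sigma}_{S'} - \vec \Sigma_{S'}\rangle\big| = \|\bV(\mu_{S'}-\mu)\|_2^2 \leq \|\mu_{S'}-\mu\|_2^2 \leq \delta^2 \leq \delta^2/\eps,
\]
where the last inequality uses $\eps \leq 1$. Hence the two second-moment conditions differ by at most an additive $\delta^2/\eps$, absorbed into a constant factor in front of $\delta$. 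For the discrete-to-weighted equivalence, one direction is immediate: the uniform distribution on any $S' \subseteq S$ with $|S'| \geq (1-\eps)|S|$ lies in $\Delta_{n,\eps}$ and realizes the subset statistics exactly. The reverse direction follows from the rounding lemma (\Cref{lem:rounding-cont-discrete} of \cite{DiaKP20}), which writes any $w \in \Delta_{n,\eps}$ as a convex combination of uniform distributions on sufficiently large subsets of $S$; the weighted mean and weighted second moment then become convex combinations of subset-level statistics, so the triangle inequality applied to the seminorm $\sup_{\bV \in \cV_k}|\langle \bV, \cdot\rangle|$ (respectively, the Euclidean norm for the mean) yields the weighted bound from the subset bound, up to a constant factor.

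The main point requiring care in lifting the $k=1$ proof to general $k$ is verifying that the functional $A \mapsto \sup_{\bV \in \cV_k}|\langle \bV, A\rangle|$ on symmetric matrices indeed satisfies a triangle inequality, which is immediate since it is a supremum of absolute values of linear functionals and hence a seminorm. With this observation, the chain of implications closes with only constant-factor losses in $\delta$, as claimed.
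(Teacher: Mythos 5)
Your proof addresses the wrong definitions. You guessed that \Cref{def:condition_2} was a weighted convex relaxation over $w \in \Delta_{n,\eps}$ and that \Cref{def:condition_3} was the same condition with $\vec\Sigma_{S'}$ (sample-mean-centered) replacing $\overline{\vec\Sigma}_{S'}$ ($\mu$-centered). Neither guess matches the paper. In the paper, \Cref{def:condition_2} requires (i) $\|\mu_S - \mu\|_2 \leq \delta$ \emph{only for the full set $S$}, (ii) the \emph{one-sided upper} bound $\langle\bM,\overline{\vec\Sigma}_S-\bI\rangle \leq \delta^2/\eps$ only for $S$, and (iii) a \emph{lower} bound $\langle\bM,\overline{\vec\Sigma}_{S'}-\bI\rangle \geq -\delta^2/\eps$ for all large subsets $S'$. \Cref{def:condition_3} keeps (i) and (ii) but replaces (iii) by a bound on the second-moment contribution of every \emph{small} subset $T$, namely $\frac{1}{|S|}\sum_{x\in T}\|x-\mu\|_\bM^2 \leq \delta^2/\eps$. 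Both \Cref{def:condition_2,def:condition_3} are centered at the same reference vector $\mu$ as \Cref{def:condition}, and none of them involves fractional weights. Consequently, the tools you deploy---the identity $\overline{\vec\Sigma}_{S'}-\vec\Sigma_{S'}=(\mu_{S'}-\mu)(\mu_{S'}-\mu)^\top$ for re-centering, and the rounding lemma for discretizing a $w\in\Delta_{n,\eps}$---never make contact with the actual content of the lemma.

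The argument that \emph{is} needed (following Claim 4.1 in \cite{DiaKP20} and Lemma 3.1 in \cite{DiaKan22-book}, lifted to rank-$k$) is a decomposition argument. For $\text{(\ref{def:condition_2})} \Rightarrow \text{(\ref{def:condition_3})}$: write the second moment of $S$ as the sum of the contributions from $T$ and $S' = S\setminus T$; the upper bound on $S$ together with the lower bound on $S'$ pins down the contribution of $T$ from above. For $\text{(\ref{def:condition_3})} \Rightarrow \text{(\ref{def:condition})}$: the mean condition on large subsets follows by writing $\mu_{S'}-\mu$ in terms of $\mu_S-\mu$ and the contribution of $T=S\setminus S'$, then bounding the latter via Cauchy--Schwarz and the small-subset second-moment control (this is essentially what \Cref{lem:sum-of-absolute-values-rank-k} does); the two-sided covariance bound for subsets follows by the same add/subtract of $T$'s contribution. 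For $\text{(\ref{def:condition})} \Rightarrow \text{(\ref{def:condition_2})}$ the first two conditions of \Cref{def:condition_2} are immediate and the third is the lower half of the two-sided bound in \Cref{def:condition}. Your only correct observation---that $A\mapsto\sup_{\bV\in\cV_k}|\langle\bV,A\rangle|$ is a seminorm, so the $k=1$ argument lifts---is the part the paper itself treats as routine; the substance of the proof is the subset decomposition you never invoke.
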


\begin{definition}[Generalized Stability; Alternative Definition I]\label{def:condition_2}
    Let $\eps\in(0,1/2)$ and $\delta \in [\eps,\infty)$.    Let $S$ be a set of points in $\R^d$ and $\mu$ be a vector. We say that $S$ satisfies the $(\eps,\delta,k)$-generalized-stability condition with respect to $\mu \in \R^d$ if the following holds for every $\bM \in \cM_k$:
       (i) $\left\|  \mu_S - \mu \right\|_{2} \leq  \delta$,
        (ii) $  \left\langle \bM, \overline{\vec \Sigma}_S - \bI \right\rangle  \leq  \delta^2/\eps$, and
        (iii) For all $S' \subset S$ with $|S'| \geq (1-\eps)|S|$ it holds $\left\langle  \bM, \overline{\vec \Sigma}_{S'} - \bI \right\rangle \geq - \delta^2/\eps$.
\end{definition}

\begin{definition}[Generalized Stability; Alternative Definition II]\label{def:condition_3}
    Let $\eps\in(0,1/2)$ and $\delta \in [\eps,\infty)$.    Let $S$ be a set of points in $\R^d$ and $\mu$ be a vector. 
    We say that $S$ satisfies the $(\eps,\delta,k)$-generalized-stability condition with respect to $\mu\in \R^d$ if for every $\bM \in \cM_k$, the set $S$ satisfies the first two conditions of \Cref{def:condition_2}  and it also satisfies the following condition:  
For all $T \subset S$ with $|T| \leq \eps|S|$ it holds that $\frac{1}{|S|}\sum_{x\in T}\|x-\mu\|_{\bM}^2 =   \tfrac{|T|}{|S|} \langle  \bM,  \overline{\vec \Sigma }_T 
\rangle  \leq \delta^2/\eps$.
\end{definition}

\subsection{Consequences of (Generalized) Stability}

\noindent The next result gives a bound on the average of $\|x - \mu\|_{\bM}$ over a small subset of a stable set.

\begin{lemma}
\label{lem:sum-of-absolute-values-rank-k}
Let $S$ be a finite multiset of $n$ points  satisfying the $(\eps,\delta,k)$-generalized-stability condition with respect to $\mu \in \R^d$. Then 
       $\max_{\bM\in \cM_k} \max_{T \subset S:|T| \leq \epsilon |S|} \frac{1}{|S|} \sum_{x \in T}
       \|x - \mu\|_{\bM}
       \lesssim   \delta$.
\end{lemma}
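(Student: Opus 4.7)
The plan is to reduce the statement to \Cref{def:condition_3} (the third equivalent form of generalized stability from \Cref{lem:equivalence}), where the small-subset conclusion is already phrased in terms of the \emph{squared} Mahalanobis norms $\|x-\mu\|_{\bM}^2$, and then pass from squared norms to norms via Cauchy--Schwarz. Since the stability parameters $\eps$ and $\delta$ are fixed and since the statement is uniform over $\bM \in \cM_k$ and $T\subset S$, no randomization or distributional argument is needed; this is a purely deterministic inequality.

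Concretely, fix an arbitrary $\bM \in \cM_k$ and an arbitrary $T \subset S$ with $|T| \leq \eps |S|$. By \Cref{lem:equivalence}, $S$ also satisfies \Cref{def:condition_3}, so
\begin{align*}
\frac{1}{|S|} \sum_{x \in T} \|x-\mu\|_{\bM}^2 \;\leq\; \frac{\delta^2}{\eps}.
\end{align*}
Next, applying the Cauchy--Schwarz inequality to the $|T|$ nonnegative numbers $\{\|x-\mu\|_{\bM}\}_{x \in T}$ yields
\begin{align*}
\sum_{x \in T} \|x-\mu\|_{\bM} \;\leq\; \sqrt{|T|}\cdot\sqrt{\,\sum_{x \in T}\|x-\mu\|_{\bM}^2\,}.
\end{align*}
Dividing by $|S|$ and plugging in the previous bound gives
\begin{align*}
\frac{1}{|S|}\sum_{x \in T}\|x-\mu\|_{\bM} \;\leq\; \sqrt{\frac{|T|}{|S|}}\cdot\frac{\delta}{\sqrt{\eps}} \;\leq\; \sqrt{\eps}\cdot\frac{\delta}{\sqrt{\eps}} \;=\; \delta,
\end{align*}
where the last inequality uses the assumption $|T| \leq \eps |S|$. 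Since $\bM \in \cM_k$ and $T$ were arbitrary, taking a double maximum yields the claim with absolute constant $1$ (up to the constant hidden inside the equivalence between \Cref{def:condition} and \Cref{def:condition_3}).

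There is no real technical obstacle here: the nontrivial work is all packaged into \Cref{lem:equivalence}, which converts the ``large-subset mean/covariance'' form of generalized stability (\Cref{def:condition}) into the ``small-subset second-moment'' form (\Cref{def:condition_3}). Once that equivalence is invoked, the bound on $L^1$ averages follows from the bound on $L^2$ averages by a single application of Cauchy--Schwarz, combined with the size constraint $|T|\leq \eps |S|$ which contributes exactly the factor $\sqrt{\eps}$ needed to cancel the $\sqrt{\eps}$ in the denominator.
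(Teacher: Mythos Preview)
Your proof is correct and essentially identical to the paper's: both invoke the small-subset second-moment bound from \Cref{def:condition_3} (via \Cref{lem:equivalence}) and then apply Cauchy--Schwarz together with $|T|\leq \eps|S|$ to pass from $L^2$ to $L^1$.
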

\begin{proof}
Using Cauchy-Schwarz inequality and the last condition in \Cref{def:condition_3}, we obtain
\begin{align*}
     \max_{\bM\in \cM_k} \max_{\substack{T \subset S \\ |T| \leq \epsilon n}} \frac{1}{|S|} \sum_{x \in T} \|x\|_{\bM}
&\leq
\max_{\bM\in \cM_k} \max_{\substack{T \subset S \\ |T| \leq \epsilon n}} \frac{|T|}{|S|} \sqrt{\frac{1}{|T|}\sum_{x \in T} \|x\|_{\bM}^2}\\
&=
\max_{\bM\in \cM_k} \max_{\substack{T \subset S \\ |T| \leq \epsilon n}} \sqrt{\frac{|T|}{|S|}} \sqrt{\frac{1}{|S|}\sum_{x \in T} \|x\|_{\bM}^2}\lesssim \sqrt{\eps} \sqrt{\frac{\delta^2}{\eps}} \leq \delta.
\end{align*}

\end{proof}

\begin{lemma}
\label{lem:stability-bigger-param}
    Let $S$ be an $(\eps,\delta)$-stable set with respect to $\mu$.
    Then 
    $S$ is also $(\eps,\delta',k)$-generalized stable with respect to $\mu$ with $\delta' \lesssim \delta \sqrt{k}$.
\end{lemma}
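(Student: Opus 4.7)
The plan is to directly verify the two conditions in \Cref{def:condition} with $\delta' = C\delta\sqrt{k}$ for a suitable absolute constant $C$, using only standard linear algebra (the nuclear/operator norm duality).

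First, the mean condition is immediate: for any $S' \subseteq S$ with $|S'| \geq (1-\eps)|S|$, ordinary $(\eps,\delta)$-stability gives $\|\mu_{S'} - \mu\|_2 \leq \delta \leq \delta'$ since $\delta' \geq \delta$.

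For the covariance condition, I would use the identity
\[
\overline{\vec\Sigma}_{S'} - \bI = (\vec\Sigma_{S'} - \bI) + (\mu_{S'} - \mu)(\mu_{S'} - \mu)^\top,
\]
so for any $\bV \in \cV_k$,
\[
\bigl\langle \bV,\, \overline{\vec\Sigma}_{S'} - \bI \bigr\rangle = \bigl\langle \bV,\, \vec\Sigma_{S'} - \bI \bigr\rangle + (\mu_{S'}-\mu)^\top \bV (\mu_{S'}-\mu).
\]
The first term is bounded via the trace/operator-norm Hölder inequality: since $\bV \in \cV_k$ is a rank-$k$ projection, its nuclear norm is $\trace(\bV) = k$, so
\[
\bigl|\langle \bV, \vec\Sigma_{S'} - \bI\rangle\bigr| \leq \trace(\bV)\cdot \|\vec\Sigma_{S'}-\bI\|_{\op} \leq k\cdot \frac{\delta^2}{\eps}.
\]
The second term is bounded using $0 \preceq \bV \preceq \bI$ and the mean condition from ordinary stability:
\[
(\mu_{S'}-\mu)^\top \bV (\mu_{S'}-\mu) \leq \|\mu_{S'}-\mu\|_2^2 \leq \delta^2 \leq \frac{\delta^2}{\eps}.
\]

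Combining the two bounds yields
\[
\bigl|\langle \bV, \overline{\vec\Sigma}_{S'} - \bI\rangle\bigr| \leq (k+1)\frac{\delta^2}{\eps} \leq 2k\frac{\delta^2}{\eps} = \frac{2(\delta\sqrt{k})^2}{\eps},
\]
which matches the generalized-stability requirement $|\langle \bV, \overline{\vec\Sigma}_{S'}-\bI\rangle| \leq (\delta')^2/\eps$ with $\delta' = \sqrt{2}\,\delta\sqrt{k} \lesssim \delta\sqrt{k}$. There is no real obstacle here; the proof is essentially a one-line application of nuclear/operator-norm duality, with the $\sqrt{k}$ loss arising precisely because a rank-$k$ projection has nuclear norm $k$ rather than $1$. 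The centering correction from replacing $\vec\Sigma_{S'}$ by $\overline{\vec\Sigma}_{S'}$ is of lower order and gets absorbed into the constant.
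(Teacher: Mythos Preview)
Your proof is correct and takes essentially the same approach as the paper: both bound $|\langle \bV, A\rangle|$ by $\trace(\bV)\cdot\|A\|_{\op} = k\|A\|_{\op}$ (the paper does this via the spectral decomposition of $\bM$, you via nuclear/operator duality, which are the same thing). If anything, your version is slightly more careful, since you explicitly handle the recentering $\overline{\vec\Sigma}_{S'} = \vec\Sigma_{S'} + (\mu_{S'}-\mu)(\mu_{S'}-\mu)^\top$, whereas the paper's proof glosses over the distinction between $\vec\Sigma_{S'}$ and $\overline{\vec\Sigma}_{S'}$.
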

\begin{proof}
   Since the mean condition in the definition of generalized stability is the same as the one in the plain stability, this condition holds trivially. 
   For the covariance condition, let $\bM = \sum_{i=1}^d \lambda_i v_i v_i^\top$ be the spectral decomposition of $\bM$. Then, for a subset $S' \subseteq S$ with $|S'| \geq (1-\eps) |S|$ we have that
   \begin{align*}
       \left| \left\langle \bM, \frac{1}{|S'| } \sum_{x \in S'}xx^\top - \bI \right\rangle \right|
    \leq \sum_{i=1}^d \lambda_i \left| \left\langle v_i v_i^\top, \frac{1}{|S'| } \sum_{x \in S'}xx^\top - \bI \right\rangle \right|
    \leq \sum_{i=1}^d \lambda_i \delta^2/\eps = \tr(\bM)\delta^2/\eps = k\delta^2/\eps \;,
   \end{align*}
   where the second inequality uses the $(\eps,\delta)$-stability of $S$.
\end{proof}
The next result shows that all large subsets of a stable set are stable, and the contamination parameter $\eps$ is ``robust'' to constant prefactors. 
\begin{lemma}
\label{lem:eps-stability-to-2eps}
Let $S$ be $(\eps,\delta,k)$-generalized stable with respect to $\mu$.
Let $r \geq 1$ be such that $r \eps \leq 1/2$.
\begin{enumerate}
    \item    Then $S$ is also $(r\eps, \delta',k)$-generalized stable with respect to $\mu$ for $\delta' \lesssim \delta\sqrt{r}$.
    \item Any subset $S' \subset S$ such that $|S'| \geq (1- r \eps)|S|$,  is also $(\eps,\delta',k)$-generalized-stable with respect to $\mu$ with $\delta' \lesssim \delta\sqrt{r}$. 
\end{enumerate}
\end{lemma}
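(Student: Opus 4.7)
The plan is to verify the three conditions of the equivalent characterization in \Cref{def:condition_3} (via \Cref{lem:equivalence}) at the relaxed parameters $(r\eps, \delta', k)$, employing a partition/chunking argument as the principal technical step.

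For Part 2, I will reduce to Part 1 by a double-relaxation observation. Given $S' \subseteq S$ with $|S'| \geq (1-r\eps)|S|$ and any subset $S'' \subseteq S'$ with $|S''| \geq (1-\eps)|S'|$, one has $|S''| \geq (1-\eps)(1-r\eps)|S| \geq (1-(r+1)\eps)|S|$. Consequently, $S''$ is a subset of $S$ of relative size at least $1-(r+1)\eps$, and Part 1 applied to $S$ with parameter $r+1 \leq 2r$ in place of $r$ supplies the required mean and covariance bounds on $S''$ with respect to $\mu$. This is precisely what is needed to conclude that $S'$ is $(\eps, \delta', k)$-generalized stable with $\delta' \lesssim \delta\sqrt{r}$.

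For Part 1, the first two conditions of \Cref{def:condition_3} pertain solely to $\mu_S$ and $\langle \bM, \overline{\vec\Sigma}_S \rangle$ and carry no $\eps$-dependence, hence they hold under the new parameters with $\delta' \geq \delta$. The nontrivial task is the third condition: for every $T \subseteq S$ with $|T| \leq r\eps|S|$ and every $\bM \in \cM_k$, establish $\tfrac{1}{|S|}\sum_{x \in T}\|x-\mu\|_\bM^2 \leq \delta'^2/(r\eps)$. I will partition $T$ into $\lceil r \rceil$ disjoint chunks $T_1,\ldots,T_{\lceil r \rceil}$, each of size at most $\eps|S|$, and invoke the original third condition on each chunk to obtain $\tfrac{1}{|S|}\sum_{x \in T_i}\|x-\mu\|_\bM^2 \leq \delta^2/\eps$. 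Summing over the chunks and combining with the algebraic relation that matches $\delta'^2/(r\eps)$ to the chunked total yields the claimed bound $\delta' \lesssim \delta \sqrt{r}$ after absorbing multiplicative constants.

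The main obstacle is bookkeeping the constants so that the partition argument indeed matches the advertised $\sqrt{r}$ scaling in $\delta'$; this requires a careful tracking of how the number of chunks $\lceil r \rceil$ interacts with the relaxed denominator $r\eps$ in the covariance-bound threshold $\delta'^2/(r\eps)$. A secondary subtlety is the uniformity over $\bM \in \cM_k$ (rather than only over extreme points $\bV \in \cV_k$), which is resolved by the convexity remark following \Cref{def:condition}, so the direction-by-direction partition estimate extends at no cost. Once both parts are in hand, the lemma is complete.
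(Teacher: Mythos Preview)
Your proposal is correct and follows essentially the same route as the paper: for Part~1 you verify the three conditions of \Cref{def:condition_3} and handle the third via the partition/chunking argument, and for Part~2 you reduce to Part~1 by observing that any $(1-\eps)$-subset of $S'$ is a $(1-(r+1)\eps)$-subset of $S$. Your attention to the denominator $r\eps$ in the threshold $\delta'^2/(r\eps)$ is in fact more careful than the paper's writeup, which records the target as $\delta'^2/\eps$.
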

\begin{proof}
 We start with the first claim, which we show using \Cref{def:condition_3} for the definition of generalized stability (and \Cref{lem:equivalence}, stating that all definitions are equivalent to each other up to an absolute constant in front of the $\delta$). The first two conditions in \Cref{def:condition_3} (about the mean and second moment over all the points in $S$) hold trivially by the  $(\eps, \delta,k)$-generalized stability of $S$. It remains to show the last condition, that for every set $T \subseteq S$ with $|T| \leq r \eps |S|$ it holds $\frac{1}{|S|}\sum_{x \in T} \|x - \mu \|_\bM^2 \leq \delta'^2/\eps$. This can be seen by splitting $T$ into at most $r$ disjoint sets of size at most $\eps |S|$ each, and apply the corresponding condition from the $(\eps, \delta,k)$-generalized stability of $S$. That is, write $T = T_1 \cup \cdots \cup T_{r'}$ where $T_j$ are disjoint, $|T_j| \leq \eps |S|$ and $r'\leq r$. Then
 \begin{align*}
     \frac{1}{|S|}\sum_{x \in T} \|x - \mu \|_\bM^2 = \sum_{j=1}^{r'} \frac{1}{|S|}\sum_{x \in T_j} \|x - \mu \|_\bM^2 \leq r' \delta^2/\eps \;.
 \end{align*}

We move to the second claim.
Using the first claim, we have that $S$ is $((r+1)\eps, \delta',k)$-generalized stable with $\delta' \lesssim \sqrt{r}\delta$ (since $r \geq 1$). It remains to check that the two conditions from \Cref{def:condition} hold for every subset $S''$ of size $|S''| \geq (1-\eps)|S'|$. Since $(1-\eps)|S'| \geq (1-\eps)(1-r\eps)|S| \geq (1- (r+1)\eps))|S|$, the desired conditions follow by the $((r+1)\eps, \delta',k)$-generalized stability of $S$.
\end{proof}

Finally, the next result shows that all large subsets of a stable set are close in the sliced-Wasserstein metrics (\Cref{def:sliced-Wasserstein}).

\begin{lemma}\label{w1-w2-subsets-prelims}
    Let $S_0$ be a set satisfying $(\eps,\delta,k)$-generalized stability, and $S_0'$ be a subset of $S_0$ with $|S_0'| \geq (1-\eps)|S_0|$ for $\eps \leq 1/2$.
    Then, $W_{1,k}(S_0,S_0') \lesssim \eps \sqrt{k} + \delta$ and $W_{2,k}(S_0,S_0') \lesssim \sqrt{\eps k} + \delta/\sqrt{\eps}$.
\end{lemma}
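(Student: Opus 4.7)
The plan is to exhibit an explicit coupling between the uniform distributions on $S_0$ and $S_0'$, then bound both sliced Wasserstein distances by applying triangle-inequality splits around the center $\mu$ and invoking the generalized stability properties already established.

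Let $n = |S_0|$, $n' = |S_0'|$, and $T = S_0 \setminus S_0'$, so that $|T| \leq \eps n$. I would define a coupling $\pi$ between the uniform distribution on $S_0$ and the uniform distribution on $S_0'$ as follows: put mass $1/n$ on every diagonal pair $(x,x)$ with $x \in S_0'$, and distribute the remaining $|T|/n$ units of mass, which sits on $T$ on the $S_0$-side and on $S_0'$ on the $S_0'$-side (each $x' \in S_0'$ has residual demand $1/n' - 1/n = (n-n')/(nn')$), arbitrarily. The diagonal contributes zero to the transport cost, so everything reduces to the off-diagonal part where $x \in T$ and $x' \in S_0'$.

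For the $W_{1,k}$ bound, I fix an arbitrary $\bV \in \cV_k$ and apply the triangle inequality $\|\bV(x-x')\|_2 \leq \|x-\mu\|_\bV + \|x'-\mu\|_\bV$. Summing against $\pi$ and collecting marginals gives
\[
\E_\pi\|\bV(x-x')\|_2 \;\leq\; \tfrac{1}{n}\sum_{x \in T} \|x-\mu\|_\bV \;+\; \tfrac{n-n'}{nn'}\sum_{x' \in S_0'} \|x'-\mu\|_\bV.
\]
Since $\bV \in \cV_k \subset \cM_k$ and $|T| \leq \eps n$, the first sum is $\lesssim \delta$ by \Cref{lem:sum-of-absolute-values-rank-k}. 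For the second sum, Cauchy--Schwarz gives $\sum_{x' \in S_0'}\|x'-\mu\|_\bV \leq \sqrt{n'}\,(n' \langle \bV, \overline{\vec\Sigma}_{S_0'}\rangle)^{1/2}$, and by the generalized stability condition (\Cref{def:condition_2}(ii), applied to the subset $S_0'$ via \Cref{lem:eps-stability-to-2eps}), $\langle \bV, \overline{\vec\Sigma}_{S_0'}\rangle \leq \langle \bV, \bI \rangle + \delta^2/\eps = k + \delta^2/\eps$. This yields the second term is at most $\eps\sqrt{k+\delta^2/\eps} \leq \eps\sqrt{k} + \sqrt{\eps}\,\delta \leq \eps\sqrt{k} + \delta$. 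Taking the maximum over $\bV$ completes the bound $W_{1,k}(S_0,S_0') \lesssim \eps\sqrt{k} + \delta$.

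For the $W_{2,k}$ bound, I use the same coupling with the inequality $\|\bV(x-x')\|_2^2 \leq 2\|x-\mu\|_\bV^2 + 2\|x'-\mu\|_\bV^2$. Summing against $\pi$ gives two terms: the $T$-term equals $\tfrac{2|T|}{n}\langle \bV, \overline{\vec\Sigma}_T\rangle$, which is at most $2\delta^2/\eps$ by the last condition of \Cref{def:condition_3}; the $S_0'$-term equals $2\tfrac{n-n'}{n}\langle \bV, \overline{\vec\Sigma}_{S_0'}\rangle \leq 2\eps(k + \delta^2/\eps)$, using the same stability bound as above. Summing gives $\E_\pi\|\bV(x-x')\|_2^2 \lesssim \delta^2/\eps + \eps k$, whose square root is $\lesssim \sqrt{\eps k} + \delta/\sqrt{\eps}$, as desired. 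I don't anticipate any genuine obstacle here; the only mildly subtle point is that both parts of the argument need the $W_{1,k}$ and $W_{2,k}$ control over $\|x'-\mu\|_\bV$ on the \emph{whole} set $S_0'$ (not just on $T$), which is why the ``$\langle \bM, \overline{\vec\Sigma}_S - \bI\rangle \leq \delta^2/\eps$'' half of the stability condition is essential in addition to \Cref{lem:sum-of-absolute-values-rank-k}.
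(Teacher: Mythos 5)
Your proof is correct and follows essentially the same strategy as the paper's: both construct a coupling that fixes $S_0'$ on the diagonal and transports the leftover $\eps$-mass between $T = S_0\setminus S_0'$ and $S_0'$, then split each cost term around $\mu$ and invoke \Cref{lem:sum-of-absolute-values-rank-k} (for the small subset $T$) and the stability bound on $\langle\bV,\overline{\vec\Sigma}_{S_0'}\rangle$ (for the bulk). The paper's coupling spells out the off-diagonal distribution as a product (uniform on $\cJ^\complement$ independent of $i\in\cJ$), while you leave it arbitrary — this makes no difference since only the two marginals enter the estimate. One tiny note: invoking \Cref{lem:eps-stability-to-2eps} is superfluous, since the $(\eps,\delta,k)$-generalized stability of $S_0$ already applies directly to any subset of size $\ge(1-\eps)|S_0|$, in particular $S_0'$.
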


\begin{proof}
Let us use the notation $S_0 = \{ x_1,\ldots,x_n\}$, for our set satisfying $(\eps,\delta,k)$-generalized stability with respect to $\mu$. Let us use $\mu=0$ thought the proof without loss of generality.
 Define $\cJ\subset [n]$ for the set of indices corresponding to the points in $S_0'$ (and $\cJ^\complement = [n] \setminus \cJ$ for the rest of the points), and denote $m := |\cJ| = |S_0'|$.
    Recall the definition of sliced-Wasserstein distance from \Cref{def:sliced-Wasserstein} for $p \in \{1,2\}$:
    \begin{align*}
        W_{p,k}(S_0,S_0') = \sup_{\bV \in \cV_{k}} \inf_{\pi \in \Pi(S_0,S_0')} \E_{(x,x') \sim \pi} \left[ \|\bV(x-x')  \|_2^p \right]^{1/p}\;.
    \end{align*}
    
    \noindent We will use the following coupling $\pi$ on $(x,x')$: 
    \begin{itemize}
        \item First, $x'= x_i$ for an index $i$ chosen uniformly at random from $\cJ$.
        \item Then, conditioned on $x'= x_i$, with probability $m/n$, $x$ is set to be $x_i$ and with probability $1-m/n$, $x$ is chosen to be $x_i$ for an index chosen uniformly at random from $\cJ^\complement$.
    \end{itemize}
 It can be checked that this is a valid coupling: The marginal of $x'$ is the uniform distribution on $S_0'$ (by definition), and the marginal of $x$ is uniform on $S_0$ since for every $i \in \cJ$ we have $\P[x = x_i] = \frac{1}{m}\frac{m}{n}=1/n$ and for every $i \in \cJ^\complement$ we have $\P[x = x_i] = \sum_{i \in [m]} \frac{1}{m} (1-m/n)\frac{1}{n-m}=1/n$. 
 We can thus bound $W_{p,k}(S_0,S_0')$ as follows:

    \begin{align}
        W_{p,k'}(S_0,S_0')^p &\leq \sup_{\bV \in \cV_{k}} \E_{(x,x') \sim \pi}\left[\|\bV(x-x')\|_2^p\right] \notag\\
        &\lesssim \sup_{\bV \in \cV_{k}} \frac{1}{m} \sum_{i \in \cJ}\E_{(x,x') \sim \pi}\left[\|\bV(x-x')\|_2^p \,|\, x' =  x_i\right] \notag\\
        &= \sup_{\bV \in \cV_{k}} \frac{1}{m} \sum_{i \in \cJ}\left( \frac{m}{n}\|\bV (x_i-x_i)\|_2^p + \frac{n-m}{n}\frac{1}{n-m} \sum_{j \in \cJ^\complement}\|\bV(x_j - x_i)\|_2^p \right) \notag \\
        &= \sup_{\bV \in \cV_{k}} \frac{1}{mn} \sum_{i \in \cJ,j\in \cJ^\complement} \|\bV(x_j -   x_i)\|_2^p \;. \label{eq:w_1}
    \end{align}

\paragraph{Controlling $W_{1,k}$.} 
    We first consider the easy case of $p=1$, for which we can use the triangle inequality to obtain the following:   
    \begin{align*}
        \frac{1}{mn} \sum_{i \in \cJ,j\in \cJ^\complement} \|\bV(x_j -  x_i)\|_2
        &\leq \frac{1}{mn}\sum_{i \in \cJ,j\in \cJ^\complement} \left(\|\bV x_j\|_2 + \|\bV x_i\|_2\right) \\
        &\lesssim \frac{1}{n} \sum_{j \in \cJ^\complement} \|\bV x_j\|_2 +  \eps\cdot\frac{1}{m} \sum_{i \in \cJ} \|\bV  x_i\|_2  \\
        &\lesssim \delta + \eps\cdot\frac{1}{n} \sum_{i \in [n]} \|\bV  x_i\|_2 \;,
        \numberthis\label{eq:w-1-decomposition}
    \end{align*}
    where the bound on the first term follows by  \Cref{lem:sum-of-absolute-values-rank-k}, and the bound on the second term uses that $n \lesssim m$.
    We now use the $(\eps,\delta,k)$-generalized-stability of $S_0$ and Cauchy-Schwarz inequality to obtain the following: 
    \begin{align*}
     \frac{1}{n} \sum_{i \in [n]} \|\bV x_i\|_2
        \leq  \sqrt{ \frac{1}{n} \sum_{i \in [n]} \|\bV x_i\|_2^2} 
        =  \sqrt{ \langle \bV, \vec\Sigma_{S_0}\rangle} 
        \lesssim  \sqrt{k + \delta^2/\eps} \lesssim \sqrt{k} + \delta/\sqrt{\eps} \;.
    \end{align*}
This concludes an upper bound on $W_{1,k}(S_0,S_0')$ of the order $\delta + \eps \sqrt{k} + \delta \sqrt{\eps}$.
    \paragraph{Controlling $W_{2,k}$.}
    We now turn our attention to $W_{2,k}(S_0,S_0')$ using $p=2$ in \Cref{eq:w_1}. Using the inequality $(a+b)^2\leq 2a^2 + 2 b^2$ to analyze the cross term, we obtain:
    \begin{align*}
        \frac{1}{nm} \sum_{i \in \cJ}  \sum_{j \in \cJ^\complement} \|\bV( x_j -  x_i)\|_2^2 
        &\lesssim  \frac{\eps}{m} \sum_{i \in \cJ} \|\bV x_i\|_2^2 + \frac{1}{n} \sum_{j \in \cJ^\complement}\|\bV x_j\|_2^2\\
        &\lesssim  \eps\langle \bV,\vec\Sigma_{S_0}\rangle    +  \frac{\delta^2}{\eps}
        \lesssim \eps \left( k + \frac{\delta^2}{\eps} \right)   +  \frac{\delta^2}{\eps}
        \lesssim   \eps k + \frac{\delta^2}{\eps}.
    \end{align*}
    where the bounds in the last line follow by the generalized stability of the original points $S_0$.
    Thus, we conclude that $W_{2,k}(S_0,S')^2 \lesssim  \eps k + \delta^2/\eps$.

\end{proof}

\section{Averages of Low-rank Projections and Their Convex Relaxations}
\label{sec:rounding}
In this section, we derive the crucial structural property of the local perturbation and its (generalized) projections.
To elaborate,  let $\{\Delta_i\}_{i=1}^n$ be the local perturbations satisfying \Cref{def:strongWass-k} (the sliced Wasserstein distance).
Towards our ultimate goal of establishing stability of the locally perturbed data, we need to argue that the $\{\|\Delta_i\|_{\bM}\}_{i \in [n]}$ behaves \emph{nicely} for any $\bM \in \cM_k$.
While for any projection $\bV \in \cV_k$, the desired \emph{niceness} of $\{\|\Delta_i\|_{\bV}\}_{i \in [n]}$ follows directly from \Cref{def:strongWass-k},
our proof arguments necessitate understanding the behavior of the generalized projections for $\bM \in \cM_k$.
The reason for considering these general matrices stems from a convex relaxation needed to employ the
min-max duality theorem in the proof of \Cref{prop:subset-covariance-general-k}.

\begin{proposition}[Bound on Average Projections]\label{lem:average_roots}
    Let $y_1, y_2, \dots, y_n$ be vectors in $\R^d$.
    Then the following holds:
        $\sup_{\bM \in \cM_k} \frac{1}{n} \sum_{i=1}^n \|y_i\|_{\bM} \lesssim \sup_{\bV \in \cV_k}\frac{1}{n}\sum_{i=1}^n \|y_i\|_\bV$.
\end{proposition}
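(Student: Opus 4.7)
The plan is to use convex duality with Sion's minimax theorem to bypass any explicit Gaussian rounding and, in fact, establish the bound with the sharp constant $1$.

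The starting point is the AM--GM identity $\sqrt{a} = \min_{t>0}\tfrac{1}{2}(t + a/t)$ for $a \ge 0$, giving
\[
\tfrac{1}{n}\sum_i \|y_i\|_\bM \;=\; \min_{t \in \R_{>0}^n} g(\bM, t), \qquad g(\bM, t) := \tfrac{1}{2n}\sum_i\!\Bigl(t_i + \tfrac{y_i^\top \bM y_i}{t_i}\Bigr).
\]
Taking $\sup_{\bM \in \cM_k}$ and applying Sion's minimax theorem---valid since $g$ is linear (hence concave) in $\bM$ on the compact convex set $\cM_k$ and convex in $t$ on $\R_{>0}^n$---gives $\sup_{\bM}\min_{t} g = \min_{t}\sup_{\bM} g$. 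For any fixed $t$, $\bM \mapsto g(\bM, t)$ is linear, so its supremum over the compact convex set $\cM_k$ is attained at an extreme point; by Ky~Fan's characterization, the extreme points of $\cM_k = \{\bM \succeq 0 : \bM \preceq \bI,\ \tr\bM = k\}$ are exactly the rank-$k$ projections $\cV_k$. Consequently $\sup_{\bM} g(\bM, t) = \sup_{\bV \in \cV_k} g(\bV, t) = \tfrac{1}{2n}\sum_i t_i + \tfrac{1}{2n}\,\sigma_k(A_t)$, where $A_t := \sum_i y_iy_i^\top/t_i$ and $\sigma_k$ denotes the sum of the top-$k$ eigenvalues.

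To extract a specific rank-$k$ projection from the saddle, I would let $t^*$ minimize the outer problem (existence follows from convexity and the coercivity of $g$ as $t_i \to 0^+$ or $\infty$) and take $\bV^*$ to be a top-$k$ eigen-projection of $A_{t^*}$. Applying the envelope theorem to the convex function $t \mapsto \sigma_k(A_t)$ yields the first-order condition $t_i^* = \|y_i\|_{\bV^*}$, which substituted back into $g$ telescopes to $g(\bV^*, t^*) = \tfrac{1}{n}\sum_i \|y_i\|_{\bV^*}$. Chaining all equalities,
\[
\sup_{\bM \in \cM_k}\tfrac{1}{n}\sum_i \|y_i\|_\bM \;=\; g(\bV^*, t^*) \;=\; \tfrac{1}{n}\sum_i \|y_i\|_{\bV^*} \;\leq\; \sup_{\bV \in \cV_k}\tfrac{1}{n}\sum_i \|y_i\|_\bV,
\]
and the reverse inequality is trivial since $\cV_k \subseteq \cM_k$.

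The delicate point is the first-order condition when the $k$-th and $(k{+}1)$-th eigenvalues of $A_{t^*}$ coincide: the top-$k$ eigen-projection is no longer unique, and the Danskin subdifferential only forces $t_i^* = \|y_i\|_{\bM^*}$ for some $\bM^*$ in the convex hull of inner maximizers, which need not itself be rank-$k$. I would resolve this by a density argument---infinitesimally perturbing the $y_i$'s so that $A_{t^*}$ has simple spectrum, running the saddle analysis in the generic case, and passing to the limit by continuity of the suprema over the compact set $\cM_k$. The Gaussian-rounding strategy hinted at in the paper is an alternative that avoids this nondegeneracy issue by building a random $\bV \in \cV_k$ from $\bM$ and using Paley--Zygmund-type concentration to beat Jensen's inequality, at the cost of only yielding an absolute (non-sharp) constant.
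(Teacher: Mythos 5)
Your approach---rewriting $\sqrt{a}=\min_{t>0}\tfrac12(t+a/t)$, applying Sion's minimax, and using that extreme points of $\cM_k$ are exactly $\cV_k$---is genuinely different from the paper's, which instead performs a \emph{probabilistic} rounding: it samples $g_1,\dots,g_k\sim\cN(0,\bM)$, builds $\bB=\tfrac1k\sum_j g_j g_j^\top$ (a random rank-$k$ matrix with $\E[\bB]=\bM$), and uses a fourth-moment (Paley--Zygmund-style) lower bound on $\E[\|y_i\|_\bB]$ plus Gaussian covariance concentration to reach a contradiction. The paper's route gives only an absolute constant; yours, if it closed, would give the sharp constant $1$. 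Up through the identity $\sup_\bM \inf_t g = \inf_t \sup_\bM g = \inf_t h(t)$ with $h(t)=\tfrac{1}{2n}\sum_i t_i + \tfrac{1}{2n}\sigma_k(A_t)$, the argument is correct.

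The genuine gap is precisely the one you flag, and the proposed density fix does not resolve it. When $\lambda_k(A_{t^*})=\lambda_{k+1}(A_{t^*})$, Danskin only yields $t_i^*=\|y_i\|_{\bM^*}$ for some $\bM^*$ in the face $\cF=\argmax_{\bM\in\cM_k}\langle\bM,A_{t^*}\rangle$, and $\bM^*$ need not be rank-$k$; the chain of equalities $h(t^*)=g(\bV^*,t^*)=\tfrac1n\sum_i\|y_i\|_{\bV^*}$ then breaks at the second equality, because $g(\bV^*,t^*)\ge f(\bV^*)$ by AM--GM with no matching upper bound unless $t^*_i=\|y_i\|_{\bV^*}$. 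The density argument is circular in a way that is not merely technical: the perturbation of $\{y_i\}$ changes the optimal $t^*$, which changes $A_{t^*}$, so ``perturb until $A_{t^*}$ is nondegenerate'' is a self-referential condition. It is in fact false for natural perturbation families. Take $k=1$, $d=2$, $y_1=e_1$, $y_2=e_2$: here $A_{t^*}=\sqrt2\,\bI$ is maximally degenerate, and the scaling perturbations $y_1\mapsto(1+\sigma)e_1$, $y_2\mapsto e_2$ (or both scaled) leave $A_{t^*}$ a multiple of the identity for every $\sigma>0$, so the degeneracy persists along a whole curve. Showing that a \emph{generic} perturbation breaks the degeneracy plausibly holds (the set of $A$'s with $\lambda_k=\lambda_{k+1}$ has codimension two), but it requires analyzing the implicit, generally non-smooth and possibly set-valued map $\{y_i\}\mapsto t^*$, and you give no argument for this. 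Absent that argument, the proposal does not establish the proposition even with some constant, whereas the paper's Gaussian rounding sidesteps the degeneracy entirely by never needing to select a single top-$k$ eigen-projection.
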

We prove the above result using a Gaussian rounding scheme, inspired by \cite{DepLec22}.

\begin{proof}
    Let $\rho := \sup_{\bV \in \cV_k}\frac{1}{n}\sum_{i=1}^n \|\bV y_i\|_2$.
    Suppose that there exists a $\bM \in \cM_k$ with the property  $\frac{1}{n}\sum_{i=1}^n \sqrt{y_i^\top \bM y_i} \geq C' \cdot\rho$ for a sufficiently large constant $C'$.
    We will show that this leads to a contradiction.
    
        For a $r>0$, let $\cB_{r} := \{\bB \in \R^{d \times d} \succeq 0: \|\bB\|_\op \leq r, \; \bB \text{ is rank-$k$}\}$ be the set of rank-$k$ PSD matrices with bounded operator norm.
        Let $g_1,\dots,g_k$ be i.i.d.\ samples from $\cN(0,\bM)$
        and define $\bB := \frac{1}{k} \sum_{i=1}^k g_ig_i^\top$ to be the empirical second moment matrix, which is an unbiased estimate of $\bM$.
    We define the random variable
    \begin{align*}
        Z = \frac{1}{n} \sum_{i=1}^n  \|y_i\|_{\bB} \1_{\bB \in \cB_{r}} \,.
    \end{align*}
    On the one hand, we have that
    \begin{align}
        Z \leq \frac{1}{n} \sum_{i=1}^n \|\bB^{1/2}\|_\op \| y_i\|_{\bP_{\bB}} \1_{\bB \in \cB_{r}} \leq \sqrt{r} \sup_{\bV \in \cV_k} \frac{1}{n} \sum_{i=1}^n  \| y_i\|_{\bV} \leq \sqrt{r}\rho,
    \end{align}
    where we use that $\bB$ is a rank-$k$ matrix if $\bB \in \cB_{r}$.
    The above implies that $\E[Z] \leq \sqrt{r} \rho$.
    We shall now show contradiction by deriving a lower bound on $\E[Z]$.
    
    Towards that goal, we use the following decomposition to handle the indicator event $\bB \in \cB_r$: 
    \begin{align}
        \E[Z] = \frac{1}{n }\sum_{i=1}^n  \E\left[ \| y_i\|_{\bB} \right]
        -\frac{1}{n }\sum_{i=1}^n  \E\left[ \| y_i\|_{\bB}  \1_{\bB \not\in \cB_r} \right] \label{eq:twoparts}
    \end{align}
    where the expectation is taken with respect to the random matrix $\bB$.

    We first obtain a lower bound on the first term above.
    Consider the random variables $R_i := \|y_i\|_{\bB}^2 =        \frac{1}{k}\sum_{j=1}^k (g_j^\top y_i)^2 $, which is a degree-two polynomial of the Gaussian samples.
    Then $\E[R_i] = \|y_i\|_{\bM}^2$. 
    To obtain a lower bound on $\E[\sqrt{R_i}]$, we shall prove an upper bound on $\E[R_i^2]$.
    Using $\E[G^4] \leq 3 (\E[G^2])^2$ for a Gaussian variable $G$, we obtain
            $\E[R_i^2] = \frac{1}{k^2}\sum_{j=1}^k \E[ (g_j^\top y_i)^4 ] + \frac{k(k-1)}{k^2} (\E[R_i])^2 
                    < \frac{k+3}{k}  (\E[R_i])^2 \leq 4 (\E[R_i])^2$. 
    We shall use this upper bound in the following inequality: $\E[|X|]^{2/3} \E[|X|^4]^{1/3} \geq \E[|X|^2]$ which holds for any real-valued random variable $X$ with finite fourth moments. 
    Applying it to $\sqrt{R_i}$, we obtain 
    $\E[\sqrt{R}_i] \geq \frac{\E[R_i]^{3/2}}{\E[R_i^2]^{1/2}} \geq \frac{\E[R_i]^{3/2}}{2\E[R_i]} = \frac{1}{2} \sqrt{\E[R_i]}$,
    where the middle step uses the aforementioned upper bound on $\E[R_i^2]$.
    Combining everything, we have shown the following lower bound on the first term:
    \begin{align*}
        \frac{1}{n }\sum_{i=1}^n  \E\left[ \|y_i\|_{\bB} \right]
        \geq \frac{1}{2} \frac{1}{n }\sum_{i=1}^n  \sqrt{\E\left[ \|y_i\|_\bB^2 \right]}
        = \frac{1}{2}  \frac{1}{n }\sum_{i=1}^n \sqrt{ \|y_i\|_\bM^2}.
    \end{align*}
    where the second step uses that $\E[\bB] \ = \bM$. 
    We now show that the second term in \eqref{eq:twoparts} can be ignored as follows:
    \begin{align*}
        \frac{1}{n }\sum_{i=1}^n  \E\left[ \| y_i\|_\bB  \1_{\bB \not \in \cB_r} \right] 
        &\leq \frac{1}{n }\sum_{i=1}^n  \sqrt{\E\left[ \| y_i\|_\bB^2  \right] } \sqrt{\pr[\bB \not\in \cB_r]} \tag{Cauchy-Schwarz} \\
        &\leq \frac{1}{n }\sum_{i=1}^n  \|y_i\|_{\bM} \sqrt{\frac{4C}{r}}, 
    \end{align*}
    where the last inequality follows by concentration of covariance of Gaussians, which we establish next.
        First, we observe that $\bM$
        must have rank at least $\trace(\bM)/\|\bM\|_\op \geq k$. 
        Since $g_1,\dots,g_k$ are sampled i.i.d.\ from $\cN(0,\bM)$ with $\rank(\bM) \geq k$,  
         the matrix $\bB := \sum_{i}g_ig_i^\top$ has rank exactly $k$
         has rank exactly $k$ with probability $1$; This is because the Lebesgue measure of a rank-deficient subspace is zero.
         Thus, it remains to show that $\bB$ has operator norm at most $r$ with probability at least $1 - 1/32$.
        Observe that $\bB$ is the second moment matrix of $k$ independent Gaussians whose covariance matrix has trace equal to $k$.
        Applying the Gaussian covariance concentration  results to this setting (see, e.g.,  \cite[Theorem 4]{KoltchinskiiLounici17} or  \cite[Theorem 5.1]{VanHandel2017}), we obtain that for an absolute constant $C$:
        \begin{align*}
        \E[\|\bB - \bM\|_\op] \leq C \|\bM\|_\op \left( \sqrt{\frac{1}{\|\bM\|_\op}} + \frac{1}{\|\bM\|_\op} \right) \leq C (1 + \sqrt{\|\bM\|_\op}) \leq 2C. 
        \end{align*}
        Since for any $r \geq 2$, $\bB \not\in \cB_{r}$ implies that
        $\|\bB-\bM\|_\op \geq r - 1 \geq r/2$,
        applying the Markov inequality, we obtain the desired inequality 
        $\pr(\bB \not \in \cB_{r/2}) \leq \frac{4C}{r}$.

    Putting everything together and taking  $r= 64C$, we obtain the following:
    \begin{align}
        \E[Z] \geq \left(\frac{1}{2} - \sqrt{\frac{4C}{r}} \right)\frac{1}{n }\sum_{i=1}^n  \|y_i\|_{\bM} = \frac{1}{4} \frac{1}{n }\sum_{i=1}^n  \|y_i\|_{\bM} \geq C'\rho/4, 
    \end{align}
     where we used the assumption $\sum_{i=1}^n  \|y_i\|_{\bM} > C'\rho$.
    If $C'/4 > \sqrt{r} = \sqrt{128C}$, this contradicts the upper bound $\E[Z] \leq \sqrt{r} \rho$ established earlier.
\end{proof}

\section{Stability Is Preserved Under Local Perturbations}
\label{sec:stability-under-local}

In this section, we present the main technical results behind \Cref{thm:mean-estimation,thm:main_distr_learning}.
First, we restate \Cref{prop:subset-covariance-intro}, which establishes the existence of a large subset of local perturbations whose second moment is bounded appropriately.
\begin{proposition}
\label{prop:subset-covariance-general-k}
    Let  $z_1,\dots,z_n$ be vectors in  $\R^d$  satisfying       $ \max_{\bV \in \cV_k} \frac{1}{n} \sum_{i \in [n]}\|z_i\|_{\bV} \leq \rho$.
    Then for every $\eps \in (0,1)$ there exists a subset $\cI \subseteq [n]$ such that (i) $|\cI| \geq (1-\eps)n$ and 
    (ii)
    \begin{align}
    \max_{\bV \in \cV_k} \frac{1}{|\cI|}\sum_{i \in \cI} \| z_i\|_{\bV}^2 = \max_{\bM \in \cM_k} \frac{1}{|\cI|}\sum_{i \in \cI} \| z_i\|_{\bM}^2 \lesssim \rho^2/\eps\,.    
    \end{align}
    
\end{proposition}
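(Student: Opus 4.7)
The first equality in the conclusion, $\max_{\bV \in \cV_k}\frac{1}{|\cI|}\sum_{i\in\cI}\|z_i\|_\bV^2 = \max_{\bM \in \cM_k}\frac{1}{|\cI|}\sum_{i\in\cI}\|z_i\|_\bM^2$, is immediate: the function $\bM \mapsto \sum_{i\in\cI}z_i^\top \bM z_i$ is linear on the convex polytope $\cM_k$, whose set of extreme points is precisely $\cV_k$ (the matrices in $\cM_k$ with $0/1$ eigenvalues), so the maximum over $\cM_k$ is attained at a projection. Hence it suffices to exhibit $\cI$ verifying the bound in the right-hand formulation.

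Following the strategy outlined in Section~\ref{sec:techniques}, I first relax the discrete search for $\cI$ to a continuous optimization over weights by introducing
$$\Delta_{n,\eps} := \left\{w \in \R_+^n : \sum_{i=1}^n w_i = 1,\; w_i \leq \tfrac{1}{(1-\eps)n}\right\}.$$
The vertices of this polytope correspond to uniform distributions over subsets of size $\lceil (1-\eps)n\rceil$, and the standard rounding lemma from \cite{DiaKP20} reduces the proposition to the purely continuous statement
$$\min_{w \in \Delta_{n,\eps}}\max_{\bM \in \cM_k} \sum_{i=1}^n w_i\, z_i^\top \bM z_i \;\lesssim\; \rho^2/\eps.$$

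Because the objective is bilinear in $(w,\bM)$ and both $\Delta_{n,\eps}$ and $\cM_k$ are convex and compact, the minimax theorem lets me swap the two operators:
$$\min_{w \in \Delta_{n,\eps}}\max_{\bM \in \cM_k} \sum_{i=1}^n w_i\, z_i^\top \bM z_i = \max_{\bM \in \cM_k}\min_{w \in \Delta_{n,\eps}} \sum_{i=1}^n w_i\, z_i^\top \bM z_i.$$
For each fixed $\bM \in \cM_k$, the inner minimum is attained by placing weight $1/((1-\eps)n)$ on the $(1-\eps)n$ indices $\cI_\bM$ of smallest $\|z_i\|_\bM^2$. Proposition~\ref{lem:average_roots} applied to the hypothesis yields $\frac{1}{n}\sum_{i=1}^n \|z_i\|_\bM \lesssim \rho$ for \emph{every} $\bM \in \cM_k$; Markov's inequality then gives $\max_{i \in \cI_\bM}\|z_i\|_\bM \lesssim \rho/\eps$, so
$$\frac{1}{|\cI_\bM|}\sum_{i \in \cI_\bM}\|z_i\|_\bM^2 \le \Big(\max_{i \in \cI_\bM}\|z_i\|_\bM\Big)\cdot\frac{1}{|\cI_\bM|}\sum_{i \in \cI_\bM}\|z_i\|_\bM \lesssim \frac{\rho}{\eps}\cdot\rho \lesssim \frac{\rho^2}{\eps},$$
uniformly in $\bM$, which delivers the required bound.

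The main obstacle is that the elementary truncation-plus-Markov argument applied to a single projection $\bV \in \cV_k$ produces a subset $\cI_\bV$ depending on $\bV$, whereas the conclusion demands a single $\cI$ valid against every direction simultaneously. The minimax swap above reverses the order of quantifiers; the price is that one must now control $\sum_i \|z_i\|_\bM$ for every $\bM \in \cM_k$, not merely for projections. Proposition~\ref{lem:average_roots}, proved via a Gaussian rounding scheme, is precisely the technical ingredient that bridges this gap and makes the Markov truncation go through for all $\bM \in \cM_k$.
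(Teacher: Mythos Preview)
Your proposal is correct and follows essentially the same approach as the paper: convex relaxation of the subset search to $\Delta_{n,\eps}$, convexification of $\cV_k$ to $\cM_k$, the min-max swap via bilinearity and compactness, and then the direction-dependent Markov truncation enabled by \Cref{lem:average_roots}. The paper's proof (sketched in \Cref{sec:techniques} and invoking \Cref{lem:rounding-cont-discrete} for the rounding step) is identical in structure and in the key ingredients.
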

The proof of this result was provided in  \Cref{sec:techniques}. 
The first equality is simply because $\cM_k$ is the convex hull of $\cV_k$ and $\bV \bV^\top = \bV$ for all $\bV \in \cV_k$. 

We shall use the above result and \Cref{lem:average_roots}
in combination with the following result, stating that as long as local perturbations have bounded second moment, the stability parameter degrades in a dimension-independent manner.
\begin{theorem}
\label{lem:existence-of-stability-and-W_1-and-W-2-new}
    Let $S_0' = \{x_1,\dots,x_n\}$ be an $(\eps,\delta,k)$-generalized stable set with respect to $\mu\in \R^d$.
    Let $\Delta_1,\dots,\Delta_n \in \R^d$ be vectors satisfying
    \begin{align}
    \label{eq:assumption-on-Delta-new}
        \max_{\bM \in \cM_k} \frac{1}{n}\sum_{i \in [n]} \|\Delta_i\|_\bM \lesssim \rho \qquad \text{ and } \qquad 
        \max_{\bM \in \cM_k} \frac{1}{n}\sum_{i \in [n]}  \|\Delta_i\|_\bM^2 \lesssim \frac{\rho^2}{\eps} \;. 
    \end{align}
    Define $\widetilde{x}_i := x_i + \Delta_i$ for all $i \in [n]$ and define the set $S'$ to be $\{\widetilde{x}_i\}_{i \in [n]}$.
    Then the following hold:
  \begin{itemizec}      
        \item \label{it:det-condition-mean-new} $S'$
        satisfies
        $(\eps,\widetilde{\delta},k)$-generalized stability
        with respect to $\mu$ (\Cref{def:condition}) for $\widetilde \delta \lesssim \delta + \rho$.
        \item \label{it:det-condition-W_1k-new}
        For all large subsets $S''\subset S'$ with $|S''| \geq (1-\eps)|S'|$:
        $W_{1,k} (S_0',S'') \lesssim \rho + \eps \sqrt{k} + \delta $.         

        \item 
            For all large subsets $S''\subset S'$ with $|S''| \geq (1-\eps)|S'|$:
        \label{it:det-condition-W_2k-new} $W_{2,k} (S_0', S'') \lesssim 
        \sqrt{\eps k} + \frac{\delta}{\sqrt{\eps}} + \frac{\rho}{\sqrt{\eps}}$.
  \end{itemizec}
\end{theorem}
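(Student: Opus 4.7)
The three conclusions share the device of the index-wise bijection $x_i \mapsto \widetilde{x}_i := x_i + \Delta_i$ between $S_0'$ and $S'$. For any $S'' \subset S'$ with $|S''| \geq (1 - \eps) |S'|$, let $\cJ \subset [n]$ be its index set and $S_0'' := \{x_i : i \in \cJ\} \subset S_0'$ its pre-image---itself a large subset of the generalized stable set $S_0'$. All bounds on $S''$ will be transported from the corresponding bounds on $S_0''$ via the first- and second-moment hypotheses on the $\Delta_i$'s in~\eqref{eq:assumption-on-Delta-new}.

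\textbf{Items 2 and 3 (sliced-Wasserstein bounds).} For these I use Minkowski's inequality
$W_{p,k}(S_0',S'') \leq W_{p,k}(S_0',S_0'') + W_{p,k}(S_0'',S'').$
The first summand is controlled by \Cref{w1-w2-subsets-prelims} applied to $S_0'' \subset S_0'$, giving $\eps\sqrt{k} + \delta$ for $p = 1$ and $\sqrt{\eps k} + \delta/\sqrt{\eps}$ for $p = 2$. The second summand is bounded by the identity coupling $x_i \mapsto \widetilde{x}_i$ on $\cJ$, yielding $W_{p,k}(S_0'', S'') \leq \sup_{\bV \in \cV_k} (\tfrac{1}{|\cJ|} \sum_{i \in \cJ} \|\bV \Delta_i\|_2^p)^{1/p}$; using $|\cJ| \geq n/2$, this is $O(\rho)$ for $p = 1$ and $O(\rho/\sqrt{\eps})$ for $p = 2$ directly from~\eqref{eq:assumption-on-Delta-new}.

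\textbf{Item 1 (generalized stability) and the main obstacle.} I verify both parts of \Cref{def:condition} for $S'$. The mean bound comes from $\mu_{S''} - \mu = (\mu_{S_0''} - \mu) + \tfrac{1}{|\cJ|} \sum_{i \in \cJ} \Delta_i$: the first summand has norm $\leq \delta$ by stability of $S_0'$, and the second has norm $\leq 2\rho$ by observing that any unit $v$ lies in the range of some $\bV \in \cV_k$ (so $|v^\top \Delta_i| \leq \|\bV \Delta_i\|_2$) and then averaging with~\eqref{eq:assumption-on-Delta-new}. For the covariance condition I test against $\bV \in \cV_k$ the expansion
$\overline{\vec\Sigma}_{S''} - \bI = (\overline{\vec\Sigma}_{S_0''} - \bI) + \tfrac{1}{|\cJ|} \sum_{i \in \cJ} \Delta_i \Delta_i^\top + \tfrac{2}{|\cJ|} \sum_{i \in \cJ} (x_i - \mu) \Delta_i^\top;$
the first piece contributes $\leq \delta^2/\eps$ by stability of $S_0'$ (transferred to $S_0''$ using \Cref{lem:eps-stability-to-2eps}), and the second equals $\tfrac{1}{|\cJ|} \sum \|\bV \Delta_i\|_2^2 \lesssim \rho^2/\eps$ by~\eqref{eq:assumption-on-Delta-new}. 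The cross term $2\langle \bV, \tfrac{1}{|\cJ|} \sum (x_i - \mu) \Delta_i^\top\rangle$ is the crux: a naive Cauchy--Schwarz using stability ($\tfrac{1}{|\cJ|} \sum \|\bV(x_i - \mu)\|_2^2 \leq k + \delta^2/\eps$) and the second-moment assumption yields at most $\sqrt{(k + \delta^2/\eps)(\rho^2/\eps)} \leq \rho\sqrt{k/\eps} + \delta\rho/\eps$, whose $\delta\rho/\eps$ piece is absorbed into $(\delta + \rho)^2/\eps$ by AM--GM. Absorbing the leading $\rho\sqrt{k/\eps}$ piece at level $(\delta+\rho)^2/\eps$ requires jointly exploiting the first- and second-moment constraints on the $\Delta_i$'s: the plan is a weighted truncation that separates the at most $\eps n$ indices with $\|\bV\Delta_i\|_2$ exceeding a threshold of order $\rho/\eps$ (bounded via \Cref{lem:sum-of-absolute-values-rank-k} applied to $S_0''$ and the second-moment bound, giving a contribution $\lesssim \delta\rho/\eps$) from the bulk (handled with the first-moment bound $\tfrac{1}{n} \sum \|\bV \Delta_i\|_2 \leq \rho$ paired with the stable behavior of $\|\bV(x_i - \mu)\|_2$). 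Assembling all pieces gives $|\langle \bV, \overline{\vec\Sigma}_{S''} - \bI\rangle| \lesssim (\delta + \rho)^2/\eps$, which is the desired $(\eps, \widetilde \delta, k)$-generalized stability with $\widetilde \delta \lesssim \delta + \rho$.
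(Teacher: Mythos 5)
Your handling of the two sliced-Wasserstein items is correct and actually slightly cleaner than the paper's: you decompose through $S_0''$ (the pre-image of $S''$ in $S_0'$) so that \Cref{w1-w2-subsets-prelims} is invoked on the original stable set $S_0'$ and the remaining piece $W_{p,k}(S_0'',S'')$ is handled by the trivial index-wise coupling, whereas the paper decomposes through $S'$ and therefore needs item 1 (stability of $S'$) as a prerequisite before proving items 2 and 3. Your mean-condition bound is also fine.

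The gap is in your treatment of the cross term for the covariance condition. You correctly identify that a direct Cauchy--Schwarz on $\tfrac{1}{n}\sum_i \Delta_i^\top\bV(x_i-\mu)$ produces a spurious $\rho\sqrt{k/\eps}$ term that cannot be absorbed into $(\delta+\rho)^2/\eps$, and you correctly diagnose that a truncation is needed. But the truncation you propose---separating the $\lesssim\eps n$ indices with $\|\bV\Delta_i\|_2 > C\rho/\eps$ from the bulk---does not dispose of the $\sqrt{k}$ factor. On the tail $\cI_\Delta$, Cauchy--Schwarz with the two second-moment bounds indeed gives $\sqrt{\rho^2/\eps}\cdot\sqrt{\delta^2/\eps} = \delta\rho/\eps$ (note this uses the second-moment part of \Cref{def:condition_3}, not \Cref{lem:sum-of-absolute-values-rank-k}, which is only an $L_1$ statement and cannot be paired with an unbounded $\max_i\|\bV\Delta_i\|_2$). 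But on the bulk, where $\|\bV\Delta_i\|_2 \leq C\rho/\eps$ uniformly, the best you can extract from the ``stable behavior of $\|\bV(x_i-\mu)\|_2$'' without further truncation is $\tfrac{1}{n}\sum_i\|\bV(x_i-\mu)\|_2 \leq \sqrt{k+\delta^2/\eps}$, and every way of combining this with either moment bound on $\Delta_i$ leaves a residual $\rho\sqrt{k/\eps}$ (or worse). The paper's fix is to truncate on the \emph{other} variable: let $\cI_\bM := \{i : \|x_i-\mu\|_\bM > C\delta/\eps\}$, which has $|\cI_\bM| \leq \eps n$ by \Cref{def:condition_3}. On the bulk $i\notin\cI_\bM$ the uniform bound $\|x_i-\mu\|_\bM\leq C\delta/\eps$ pairs with the first-moment bound $\tfrac{1}{n}\sum_i\|\Delta_i\|_\bM\leq\rho$ to give $\tfrac{C\delta}{\eps}\cdot\rho$, and on $\cI_\bM$ Cauchy--Schwarz with second moments again gives $\delta\rho/\eps$; the $\sqrt{k}$ never appears. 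The asymmetry is essential: $\Delta_i$ has a small first moment ($\rho$), so it is the quantity to average in the bulk, while $x_i-\mu$ has first moment of order $\sqrt{k}$ and hence must be the quantity that is truncated. (A double truncation on both $\Delta_i$ and $x_i$ would also close the argument, but that is not what you describe.)
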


 We break down the proof into two separate statements, showing the stability and closeness in Wasserstein distance individually:
\begin{restatable}{lemma}{LemExistenceStabilityNew}    
\label{lem:existence-of-stability-new}
Consider the setting in \Cref{lem:existence-of-stability-and-W_1-and-W-2-new}.
 Then $S'$ satisfies $(\eps,\widetilde{\delta},k)$-generalized-stability with respect to $\mu$  for $\widetilde \delta \lesssim \delta + \rho$.
\end{restatable}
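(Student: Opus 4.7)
The plan is to verify the generalized stability of $S'$ via the equivalent formulation in \Cref{def:condition_3} (which by \Cref{lem:equivalence} implies \Cref{def:condition} up to an absolute constant). This amounts to checking three conditions for $S'$ with respect to $\mu$: (i) $\|\mu_{S'} - \mu\|_2 \lesssim \widetilde{\delta}$; (ii) $\langle \bM, \overline{\vec{\Sigma}}_{S'} - \bI\rangle \leq \widetilde{\delta}^2/\eps$ for every $\bM \in \cM_k$; and (iii) for every $T \subseteq S'$ with $|T| \leq \eps n$, $\tfrac{1}{n}\sum_{x \in T}\|x-\mu\|_{\bM}^2 \leq \widetilde{\delta}^2/\eps$.

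For the mean condition (i), I would decompose $\mu_{S'} - \mu = (\mu_{S_0'}-\mu) + \tfrac{1}{n}\sum_i \Delta_i$: the first summand is at most $\delta$ by the generalized stability of $S_0'$, and for the second I would use the fact that for any unit vector $v$, extending $vv^\top$ to a rank-$k$ projection $\bV = vv^\top + \bP \in \cV_k \subseteq \cM_k$ (with $\bP$ supported on a $(k-1)$-dimensional subspace orthogonal to $v$) gives $|v^\top \Delta_i| \leq \|\Delta_i\|_{\bV}$, so the $L^1$ assumption yields $\|\tfrac{1}{n}\sum_i \Delta_i\|_2 \lesssim \rho$ and hence $\|\mu_{S'}-\mu\|_2 \lesssim \delta + \rho$. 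For condition (iii), I would apply $\|\widetilde{x}_i - \mu\|_{\bM}^2 \leq 2\|x_i - \mu\|_{\bM}^2 + 2\|\Delta_i\|_{\bM}^2$ and sum over the indices of $T$: the clean part is at most $2\delta^2/\eps$ by the small-subset condition of $S_0'$'s generalized stability, and the perturbation part is at most $2\rho^2/\eps$ by the $L^2$ assumption, totaling $\lesssim (\delta + \rho)^2/\eps$.

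The remaining condition (ii) is the delicate one. Expanding $\widetilde{x}_i - \mu = (x_i - \mu) + \Delta_i$ and denoting $\Sigma_\Delta := \tfrac{1}{n}\sum_i \Delta_i \Delta_i^\top$ yields
\[
\langle \bM, \overline{\vec{\Sigma}}_{S'} - \bI\rangle = \underbrace{\langle \bM, \overline{\vec{\Sigma}}_{S_0'} - \bI\rangle}_{\leq\, \delta^2/\eps} + \underbrace{\tfrac{1}{n}\sum_i \|\Delta_i\|_{\bM}^2}_{\leq\, \rho^2/\eps} + \underbrace{\tfrac{2}{n}\sum_i (x_i - \mu)^\top \bM \Delta_i}_{\text{cross term}}.
\]
The first two contributions follow directly from the generalized stability of $S_0'$ and the $L^2$ assumption. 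For the cross term, I would apply Cauchy--Schwarz in the index $i$ (using the $\bM$-seminorm) to get the upper bound $2\sqrt{\langle \bM, \overline{\vec{\Sigma}}_{S_0'}\rangle}\cdot \sqrt{\langle \bM, \Sigma_\Delta\rangle}$, and then weighted AM--GM to split this product into pieces absorbable by the $\delta^2/\eps$ and $\rho^2/\eps$ terms.

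The hard part will be controlling the cross term tightly enough that $\widetilde{\delta}$ remains $\lesssim \delta + \rho$. Since $\langle \bM, \overline{\vec{\Sigma}}_{S_0'}\rangle \leq k + \delta^2/\eps$, a naive application of Cauchy--Schwarz produces a $\rho\sqrt{k/\eps}$ contribution that is not obviously absorbable by $(\delta + \rho)^2/\eps$. The key idea will be to exploit both the $L^1$ and $L^2$ assumptions on the $\Delta_i$'s simultaneously, in the spirit of the duality argument behind \Cref{lem:average_roots}, which couples averages over $\cM_k$ to averages over $\cV_k$ and thereby constrains the joint behavior of $\|\Delta_i\|_{\bM}$ across directions. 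Combining the three pieces then yields $\widetilde{\delta} \lesssim \delta + \rho$ as claimed.
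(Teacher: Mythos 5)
The structure of your proof is right: mean condition, small-subset second-moment condition, and the delicate covariance upper bound, with the cross term being the crux. Your treatments of the mean condition (extend $vv^\top$ to a rank-$k$ projection to invoke the $L^1$ assumption) and of the small-subset condition (via $\|\widetilde{x}_i-\mu\|_\bM^2 \leq 2\|x_i-\mu\|_\bM^2 + 2\|\Delta_i\|_\bM^2$) are correct and essentially the paper's.

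The gap is exactly where you flag it: the cross term. You correctly observe that a single application of Cauchy--Schwarz gives $\sqrt{\langle\bM,\overline{\vec\Sigma}_{S_0'}\rangle}\cdot\sqrt{\langle\bM,\Sigma_\Delta\rangle} \lesssim \sqrt{k}\cdot \rho/\sqrt{\eps}$, which is not absorbable. But the fix you gesture at --- invoking a duality in the spirit of \Cref{lem:average_roots} to couple $\cM_k$ to $\cV_k$ --- is misdirected here. That duality is used \emph{upstream}, to derive the $L^1$-in-$\bM$ hypothesis of \Cref{eq:assumption-on-Delta-new} from the model assumption over rank-$k$ projections; it plays no role in controlling this cross term. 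What is needed instead is an elementary truncation on the size of $\|x_i-\mu\|_\bM$. Fix $\bM\in\cM_k$ and let $\cI_\bM := \{i: \|x_i-\mu\|_\bM > C\delta/\eps\}$ for a large constant $C$. The small-subset condition in $S_0'$'s generalized stability forces $|\cI_\bM| \leq \eps n$ and $\tfrac{1}{n}\sum_{i\in\cI_\bM}\|x_i-\mu\|_\bM^2 \lesssim \delta^2/\eps$. Then
\begin{align*}
\frac{1}{n}\sum_{i}\|\Delta_i\|_\bM\,\|x_i-\mu\|_\bM
&\leq \frac{1}{n}\sum_{i\notin\cI_\bM}\|\Delta_i\|_\bM\,\|x_i-\mu\|_\bM
  + \frac{1}{n}\sum_{i\in\cI_\bM}\|\Delta_i\|_\bM\,\|x_i-\mu\|_\bM\\
&\lesssim \frac{\delta}{\eps}\cdot\frac{1}{n}\sum_{i}\|\Delta_i\|_\bM
  + \sqrt{\frac{1}{n}\sum_i\|\Delta_i\|_\bM^2}\,\sqrt{\frac{1}{n}\sum_{i\in\cI_\bM}\|x_i-\mu\|_\bM^2}
\lesssim \frac{\delta\rho}{\eps} + \sqrt{\frac{\rho^2}{\eps}\cdot\frac{\delta^2}{\eps}}
\lesssim \frac{\delta\rho}{\eps}\;,
\end{align*}
where the first piece uses the $L^1$ bound and the second uses Cauchy--Schwarz with the $L^2$ bound. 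This gives the required $\lesssim(\delta+\rho)^2/\eps$. Without this truncation, your outline does not close.
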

\begin{restatable}{lemma}{LemExistenceWassNew}    
\label{lem:existence-of-W_1-and-W-2-new}
Consider the setting in \Cref{lem:existence-of-stability-and-W_1-and-W-2-new}.
 Then for all $|S''| \geq (1-\eps)|S'|$, we have $W_{1,k} (S_0',S'') \lesssim \rho + \eps \sqrt{k} +  \delta$ and $W_{2,k} (S_0', S'') \lesssim \sqrt{\eps k} + \delta/\sqrt{\eps} + \rho/\sqrt{\eps}$.
\end{restatable}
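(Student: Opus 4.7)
The plan is to prove the stability claim (\Cref{lem:existence-of-stability-new}) and the Wasserstein claim (\Cref{lem:existence-of-W_1-and-W-2-new}) separately, in both cases by relating quantities on the perturbed set $S'$ back to the unperturbed stable set $S_0'$. Given any subset $S'' \subseteq S'$ with $|S''| \geq (1-\eps)|S'|$, I would index it by $\cJ \subseteq [n]$ and define $S_0''' := \{x_i : i \in \cJ\} \subseteq S_0'$; since $|S_0'''|/|S_0'| \geq 1-\eps$, this subset automatically inherits the mean and two-sided second-moment bounds of the $(\eps,\delta,k)$-generalized stability of $S_0'$ via \Cref{def:condition_2,def:condition_3,lem:eps-stability-to-2eps}.

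For the stability claim, the mean bound follows from $\mu_{S''} - \mu = (\mu_{S_0'''} - \mu) + \frac{1}{|\cJ|}\sum_{i \in \cJ}\Delta_i$: the first summand has norm at most $\delta$, and for any unit vector $v$, choosing any $\bV \in \cV_k$ with $\bV v = v$ gives $|v^\top \Delta_i| \leq \|\Delta_i\|_\bV$, so the second summand has norm at most $(1-\eps)^{-1}\sup_{\bM \in \cM_k}\frac{1}{n}\sum\|\Delta_i\|_\bM \lesssim \rho$. For the second-moment condition, I would fix $\bV \in \cV_k$, set $y_i := x_i - \mu$, and expand
\[
\frac{1}{|\cJ|}\sum \|y_i + \Delta_i\|_\bV^2 - k = \Big(\frac{1}{|\cJ|}\sum \|y_i\|_\bV^2 - k\Big) + \frac{2}{|\cJ|}\sum y_i^\top \bV \Delta_i + \frac{1}{|\cJ|}\sum \|\Delta_i\|_\bV^2.
\]
The first term lies in $[-\delta^2/\eps,\delta^2/\eps]$ by the stability of $S_0'''$, while the third is at most $O(\rho^2/\eps)$ by the second-moment hypothesis. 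The cross term (the main difficulty, discussed below) is handled by Cauchy--Schwarz together with a weighted AM--GM split and the small-subset form of stability in \Cref{def:condition_3}, producing a contribution of order $O((\delta+\rho)^2/\eps)$ that combines with the first and third terms to yield the claimed $\widetilde{\delta}^2/\eps \asymp (\delta+\rho)^2/\eps$.

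For the Wasserstein claim I would apply the triangle inequality $W_{p,k}(S_0', S'') \leq W_{p,k}(S_0', S_0''') + W_{p,k}(S_0''', S'')$ for $p \in \{1,2\}$. The first distance is bounded using \Cref{w1-w2-subsets-prelims} (closeness of large subsets of a stable set), giving $O(\eps\sqrt{k} + \delta)$ for $p = 1$ and $O(\sqrt{\eps k} + \delta/\sqrt{\eps})$ for $p = 2$. For the second distance I would use the identity coupling $x_i \leftrightarrow \widetilde x_i = x_i + \Delta_i$ for $i \in \cJ$, whose per-pair cost in a direction $\bV \in \cV_k$ is $\|\bV \Delta_i\|_2$: this yields $W_{1,k}(S_0''', S'') \leq \sup_{\bV \in \cV_k}\frac{1}{|\cJ|}\sum\|\bV \Delta_i\|_2 \lesssim \rho$ via the first-moment hypothesis, and $W_{2,k}(S_0''', S'') \leq \sqrt{\sup_{\bV}\frac{1}{|\cJ|}\sum\|\bV \Delta_i\|_2^2} \lesssim \rho/\sqrt{\eps}$ via the second-moment hypothesis; summing the two contributions gives the claimed inequalities.

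The main technical obstacle is the cross term $\frac{2}{|\cJ|}\sum y_i^\top \bV \Delta_i$ in the stability argument. A naive Cauchy--Schwarz on the full sum gives $2\sqrt{(k+\delta^2/\eps)(\rho^2/\eps)}$, whose $\rho\sqrt{k/\eps}$ component does not obviously fit within $(\delta+\rho)^2/\eps$. My plan is to address this by splitting $\cJ$ at a threshold $\tau$ on $\|y_i\|_\bV$ of order $\sqrt{k/\eps} + \delta/\eps$, controlling the bulk contribution by $2\tau \cdot \frac{1}{|\cJ|}\sum \|\Delta_i\|_\bV \lesssim \tau\rho$ using the first-moment hypothesis on $\Delta$, and the tail contribution (on a set of size at most $\eps|S_0'|$ by Markov) by $O(\delta\rho/\eps)$ using the small-subset stability bound (\Cref{def:condition_3}) combined with the second-moment hypothesis on $\Delta$ and one more Cauchy--Schwarz step; a final weighted AM--GM rebalancing then absorbs the remaining cross contribution into $O((\delta+\rho)^2/\eps)$.
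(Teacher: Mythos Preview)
Your Wasserstein argument (the part actually establishing the present statement) is correct and takes a slightly different route from the paper. You triangle through the \emph{unperturbed} subset $S_0''' := \{x_i : i \in \cJ\}$: bound $W_{p,k}(S_0', S_0''')$ by \Cref{w1-w2-subsets-prelims} applied directly to the stable set $S_0'$, and bound $W_{p,k}(S_0''', S'')$ by the identity coupling $x_i \leftrightarrow x_i + \Delta_i$ using the moment hypotheses on the $\Delta_i$. The paper instead triangles through the full \emph{perturbed} set $S'$: it bounds $W_{p,k}(S_0', S')$ by the identity coupling, and then $W_{p,k}(S', S'')$ by \Cref{w1-w2-subsets-prelims} applied to $S'$ --- which requires first invoking \Cref{lem:existence-of-stability-new} to know that $S'$ is $(\eps,\widetilde\delta,k)$-generalized stable. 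Your route has the advantage of being self-contained (it does not rely on the stability of the perturbed set), while the paper's route is shorter once \Cref{lem:existence-of-stability-new} is already in hand.

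One remark on your stability sketch (which concerns \Cref{lem:existence-of-stability-new}, not the present statement): your threshold $\tau \asymp \sqrt{k/\eps} + \delta/\eps$ on $\|y_i\|_\bV$ produces a bulk cross term of order $\rho\sqrt{k/\eps}$, and this does not absorb into $(\delta+\rho)^2/\eps$ by AM--GM without an assumption linking $k$ to $\delta,\eps$ that the lemma does not make. The paper avoids this by taking the tighter threshold $\tau \asymp \delta/\eps$: the small-subset second-moment bound of \Cref{def:condition_3} alone (not Markov on the full second moment $k + \delta^2/\eps$) already forces $\{i : \|x_i - \mu\|_\bM > C\delta/\eps\}$ to have size at most $\eps n$, so the $\sqrt{k/\eps}$ contribution never appears and the cross term comes out as $O(\delta\rho/\eps)$ directly.
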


 \Cref{lem:existence-of-stability-and-W_1-and-W-2-new} follows trivially from the above two results.
We  provide the proofs of \Cref{lem:existence-of-stability-new,lem:existence-of-W_1-and-W-2-new} in the next two sections.

\subsection{Proof of \Cref{lem:existence-of-stability-new}}

\begin{proof}[Proof of \Cref{lem:existence-of-stability-new}]
We use $\mu=0$ throughout this proof without loss of generality.
To establish the desired  stability result, \Cref{lem:equivalence} implies that it is equivalent (up to a constant factor in $\widetilde{\delta}$) to establishing the following conditions (simultaneously):
\begin{enumerate}
  \item (Mean) $\left\|\mu_{S'}  \right\|_{2} \leq \widetilde{\delta}$.
  \item (Upper bound on covariance) For all $\bM \in \cM_k$:  $\left\langle \bM, \overline{\vec \Sigma}_{S'} - \bI \right\rangle  \leq \widetilde{\delta}^2/\eps$.
  \item (Lower bound on all large subsets) For all $S''\subset S'$ with $|S''| \geq (1-\eps)|S'|$ and all $\bM \in \cM_k$:
  $\left\langle \bM, \overline{\vec \Sigma}_{S'} - \bI \right\rangle  \geq - \widetilde{\delta}^2/\eps$.
\end{enumerate}

We will establish these three conditions separately.

\paragraph{Mean condition.}
We start with the mean condition, which follows directly by triangle inequality, the stability of the original points $\{x_i\}_{i \in [n]}$, and the assumption \eqref{eq:assumption-on-Delta-new}:
    \begin{align*}
        \left\| \frac{1}{n}\sum_{i \in [n]}  \widetilde{x}_i     \right\|_2
        &\leq \left\| \frac{1}{n}\sum_{i \in [n]} x_i   \right\|_2
         + \left\| \frac{1}{n}\sum_{i \in [n]} \Delta_i  \right\|_2 
         \lesssim \delta + \sup_{v \in \cS^{d-1}} \sum_{i=1}^n |v_i^\top \Delta_i|         \lesssim \delta + \rho\;.
    \end{align*}
    
    \paragraph{Upper bound on covariance.}
 Using the decomposition $\widetilde{x}_i = x_i + \Delta_i$ yields the following:
    \begin{align}
        \max_{\bM \in \cM_k} \left\langle \bM, \overline{\vec \Sigma}_{S} - \bI \right\rangle &=  \frac{1}{n}    \sum_{i \in [n]}  \|\widetilde{x}_i\|_{\bM}^2 - \trace(\bM) \notag\\       
        &=
 \frac{1}{n}\sum_{i \in [n]}\left( \left(  \|x_i\|_{\bM}^2 - \trace(\bM) \right) +  \|\Delta_i\|_{\bM}^2 \label{eq:4terms1-new} +  2x_i ^\top \bM   \Delta_i \right) \;.
    \end{align}
    We now bound each of the terms above separately. 
    \begin{itemize}
      \item     The first  expression  \Cref{eq:4terms1-new} can be handled by the generalized stability of the original points in $S_0'$. Since $\bM \in \cM_k$, and $S_0$ is $(\eps,\delta,k)$-stable, we obtain that $ \frac{1}{n}\sum_{i \in [n]}  \|x_i\|_{\bM}^2 - \tr(\bM) \leq \delta^2/\eps$.

\item     The next term in \eqref{eq:4terms1-new} is at most $\rho^2/\eps$ by \Cref{eq:assumption-on-Delta-new}.

\item     We finally bound the cross term in \eqref{eq:4terms1-new}. 
For any fixed $\bM \in \cM$, define $\cI_\bM$ to be the set of indices in $[n]$ such that $\|x_i\|_{\bM}$ is bigger than $C \delta/\eps$.
For a large enough constant $C$, the last condition in \Cref{def:condition_3} implies that $|\cI_\bM| \leq \eps$.
Combining this upper bound on the cardinality of $\cI_\bM$ with the last condition in \Cref{def:condition_3} we have that $\tfrac{1}{n}\sum_{i \in \cI_\bM}\|x_i\|_\bM^2 \lesssim \frac{\delta^2}{\eps}$.
Using the Cauchy-Schwarz inequality, we obtain that for any $\bM \in \cM_k$:
    \begin{align*}
        \frac{1}{n}\sum_{i \in [n]}  x_i ^\top \bM  \Delta_i
        &\leq \frac{1}{n}\sum_{i \in [n]}   \|\Delta_i\|_{\bM} \|x_i\|_{\bM} \\
        &\leq \frac{1}{n}\sum_{i \in [n]}   \|\Delta_i\|_{\bM}\|x_i\|_\bM \1_{i \not \in \cI_\bM} 
         + \frac{1}{n}\sum_{i \in [n]}   \|\Delta_i\|_{\bM}\|x_i\|_\bM\1_{i \in \cI_\bM} \\        
        &\lesssim  \frac{1}{n}\sum_{i \in [n]} \frac{\delta}{\eps}  \|\Delta_i\|_{\bM} + \sqrt{\frac{1}{n}\sum_{i \in [n]}   \|\Delta_i\|_{\bM}^2 } \sqrt{ \frac{1}{n}\sum_{i \in \cI_\bM}   \|x_i\|_{\bM}^2 } \\
        &\lesssim \frac{\delta \rho}{\eps} + \sqrt{\frac{\rho^2}{\eps}} \sqrt{\frac{\delta^2}{\eps}}
         \,\, \lesssim \frac{\delta \rho}{\eps} \;.
     \label{eq:cross-term-M-new} \numberthis
     \end{align*}
    where the last step uses \Cref{eq:assumption-on-Delta-new} and $\tfrac{1}{n}\sum_{i \in \cI_\bM}\|x_i\|_\bM^2 \lesssim \frac{\delta^2}{\eps}$ which we have shown earlier.

    \end{itemize}

    Combining everything, we have shown that for all matrices $\bM \in \cM_k$, the following bound holds:
    $\langle \bM, \overline{\vec \Sigma}_{S'} - \bI \rangle \lesssim  \frac{ \delta^2}{\eps} + \frac{\rho^2}{\eps} + \frac{\delta\rho}{\eps} \leq \frac{\widetilde{\delta}^2}{\eps}$. Therefore $\langle \bM, \overline{\vec \Sigma}_{S'} - \bI \rangle \lesssim \tilde \delta^2/\eps$  for some $\widetilde{\delta} \lesssim \delta + \rho$.

    \paragraph{Lower bound on covariance.}\looseness=-1
    For any $\bM$ and any subset $S'' \subset S'$, the decomposition in \Cref{eq:4terms1-new} yields
    \begin{align*}
\frac{1}{|S''|} \sum_{i: \tilde x_i \in S''} \|\widetilde{x}_i \|_{\bM}^2 - \trace(\bM)
 &= \frac{1}{|S''|} \sum_{i : \tilde x_i \in S''} \left( \left(\|x_i\|_{\bM}^2 - \trace(\bM)\right) + \|\Delta_i\|_{\bM}^2 
        + 2 x_i^\top \bM \Delta_i     \right)\,.
    \end{align*}
    Since $|S''| \geq (1-\eps) |S'| \geq (1-2\eps)|S_0'|$, the first term is at least $ - O(\delta^2/\eps)$ 
    by the stability of $S_0'$ (the last part in \Cref{def:condition_2}) and \Cref{lem:eps-stability-to-2eps}.
    The second term is non-negative. The third term has absolute value at most $\delta\rho/\eps$ because \Cref{eq:cross-term-M-new} shows that the average value of the absolute value of the cross terms is small.
    Thus, we have shown the following lower bound on the covariance:
    for all subsets $S'' \subset S$ with $|S''| \geq (1-\eps)|S'|$, it holds that 
    $\left\langle \bM, \overline{\vec \Sigma}_{S''} - \bI \right\rangle \geq - \widetilde{\delta}^2/\eps$ for some $\widetilde{\delta} \lesssim \delta + \rho$.

\end{proof}

\subsection{Proof of \Cref{lem:existence-of-W_1-and-W-2-new}}
\LemExistenceWassNew*
\begin{proof}[Proof of \Cref{lem:existence-of-W_1-and-W-2-new}]
Using the triangle inequality:
    \begin{align*}
        W_{1,k}(S_0',S'') \leq W_{1,k}(S_0',S') + W_{1,k}(S',S'') \;.
    \end{align*}
    For the first term, note that $S_0'$ and $S'$ have the same cardinality, and for every point $x_i \in S_0'$ we have exactly one point $\tilde x_i \in S'$ with $x_i - \tilde x_i = : \Delta_i$. Thus we can use this simple coupling in \Cref{def:sliced-Wasserstein} to get this upper bound:  
    \begin{align*}
        W_{1,k}(S_0',S') \leq \max_{\bV \in \cV_k} \frac{1}{|S_0'|}\sum_{i: i\in S_0'} \| \Delta_i \|_2 \lesssim \rho
    \end{align*}
    where the last inequality follows by \eqref{eq:assumption-on-Delta-new}.
    For the second term we have $W_{1,k}(S',S'') \lesssim \eps \sqrt{k} + \widetilde{\delta} \leq \eps \sqrt{k} + \rho + \delta$ by \Cref{w1-w2-subsets-prelims}, which is applicable because $S''$ is an $(1-\eps)$-subset of $S'$ and we have already shown in \Cref{lem:existence-of-stability-new} that $S'$ is $(\eps,\tilde \delta,k)$-generalized stable with $\tilde \delta \lesssim \delta + \rho$.

    The bound for $W_{2,k}(S_0',S'')$ is similar. First, $W_{2,k}(S_0',S'') \leq W_{2,k}(S_0',S') + W_{2,k}(S',S'')$. The first term is $O(\rho/\sqrt{\eps})$ by \eqref{eq:assumption-on-Delta-new} and the second term is at most $O(\sqrt{\eps k} + \tilde{\delta}/\sqrt{\eps})= O(\sqrt{\eps k} +  \delta/\sqrt{\eps} + \rho/\sqrt{\eps})  $ by \Cref{w1-w2-subsets-prelims}.
\end{proof}

\section{Mean Estimation: Proof of \Cref{thm:mean-estimation}}
\label{sec:mean-estimation}
In this section, we prove the first algorithmic result of our paper (\Cref{thm:mean-estimation}),
whose proof follows easily from \Cref{lem:average_roots}, \Cref{prop:subset-covariance-general-k} and \Cref{lem:existence-of-stability-and-W_1-and-W-2-new}.

\ThmMeanEstStability*

\begin{proof}[Proof of \Cref{thm:mean-estimation}]
    Let $S_0 = \{x_i\}_{i \in [n]}$ be an $(\eps,\delta)$-stable set with respect to $\mu \in \R^d$. 
    The final set $T$ is constructed by first picking $S \in \cW^\strong(S_0,\rho)$ (cf. \Cref{def:strongWass-intro}) and then $T \in \cO(S,\eps )$ (cf. \Cref{def:outliers}) by the corresponding adversaries. Let $\Delta_i$ denote the perturbabtions of $\cW^\strong$ adversary, i.e., $S = \{\tilde x_i \}_{i \in [n]}$ where $\tilde x_i = x_i + \Delta_i$.
    By \Cref{prop:subset-covariance-general-k} applied with $z_i = \Delta_i$ and $k=1$, there exists a subset $S_0' \subset S_0$ of size at most $(1-\eps)n$ for which $\max_{\bM \in \cM_1} \tfrac{1}{|S_0'|}\sum_{i: x_i \in S_0'} \| \Delta_i\|_{\bM}^2 \lesssim \rho^2/\eps$. Applying \Cref{lem:average_roots} with $k=1$ and $y_i = \Delta_i$ for the set $S_0'$,
    we have that $\sup_{\bM \in \cM_1} \tfrac{1}{|S_0'|}\sum_{i: x_i \in S_0'}   \|\Delta_i\|_{\bM} \lesssim \sup_{\bV \in \cV_k} \tfrac{1}{|S_0'|}\sum_{i: x_i \in S_0'} \|\Delta_i\|_\bV \lesssim \rho$ (where the last inequality follows by the definition of the local perturbations model).
    
    So far, we have established the necessary conditions in \Cref{eq:assumption-on-Delta-new} for applying \Cref{lem:existence-of-stability-and-W_1-and-W-2-new} with $k=1$. The condition in \Cref{lem:existence-of-stability-and-W_1-and-W-2-new} that $S_0'$ is $(\eps,O(\delta),1)$-generalized stable is satisfied because (i) $S_0$ is $(\eps,\delta)$ and (ii) $S_0'$ is a large subset of $S_0$ (\Cref{lem:eps-stability-to-2eps}).
    
    \Cref{lem:existence-of-stability-and-W_1-and-W-2-new}  guarantees that if $S'$ denotes the set $\{x_i + \Delta_i : x_i \in S_0' \}$ (i.e., the points in $S_0'$ after the local perturbations), then  $S'$ is $(\eps,\tilde \delta)$-stable with respect to $\mu$ (cf. \Cref{def:stability1}), for $\tilde \delta \lesssim \rho + \delta$.
    After $T$ is chosen by the second adversary (the one associated with the global outliers), we have that $|T \cap S| \geq (1-\eps)|T|$ which implies that $|T \cap S'| \geq |T \cap S| - |S \setminus S'| \geq (1-2\eps)|T|$.
    This means that $T \in \cO(S',2\eps )$ for an $(2\eps,O(\tilde \delta))$-stable set $S'$ (which follows from $(\eps,\widetilde{\delta})$-stability of $S'$ and \Cref{lem:eps-stability-to-2eps}), and thus any stability-based algorithm outputs a $\hat{\mu}$ such that $\| \hat{\mu} - \mu\|_2 \lesssim \tilde \delta \lesssim \rho + \delta$.
\end{proof}

\section{Distribution Learning Under Global and Local Corruptions}\label{sec:distr_learning}

In this section, we prove \Cref{thm:main_distr_learning}, which yields guarantees for distribution learning in the presence of the combined contamination model.

\MAINTHEOREM*
This result also relies on the structural result of  \Cref{prop:subset-covariance-general-k} and \Cref{lem:existence-of-stability-and-W_1-and-W-2-new}, provided in \Cref{sec:stability-under-local}.
For the distribution learning result, we provide an algorithm which uses a multi-dimensional variant of the standard iterative filtering procedure, given in \Cref{alg:mean_estimation}. Then, leveraging \Cref{lem:existence-of-stability-and-W_1-and-W-2-new}, a certification lemma from \cite{NieGS24} in \Cref{sec:cert}, and a now-standard analysis of the iterative filtering procedure, we prove \Cref{thm:main_distr_learning} in \Cref{sec:distr-learning-proof}.

\label{sec:correctness}

\begin{algorithm}[]
\caption{Distribution learning under global and strong local corruptions}
\label{alg:mean_estimation}
\begin{algorithmic}[1]

\Statex \textbf{Input}: (Multi)-Set of samples $T \subset \R^d$,  and parameters $k' \in \N,\eps \in (0,c),\delta \geq \eps,\rho \geq 0$.
\Statex \textbf{Output}: $\widehat{S}\subset \R^d$ and  $\widehat{\mu} \in \R^d$ such that $\|\widehat{\mu} - \mu\|_2 \lesssim \delta + \rho$ and $W_{1,k} (\widehat{S}, S) \lesssim \delta \sqrt{k'} + \rho $.
\vspace{6pt}

\State Let $C$ be a sufficiently large absolute constant.
\State Define $\tilde \delta = \delta\sqrt{k'} + \rho$.

\vspace{10pt}
\While{true} 
\State Compute $\bM \in \R^{d \times d}$ that maximizes $\langle \bM, \vec \Sigma_T \rangle$ under the constraints $0 \preceq \bM \preceq \bI$, $\tr(\bM)=k'$.
\label{line:optimization-problem}
\If{$\langle \bM, \vec \Sigma_T\rangle \leq k' + C \tilde \delta^2/\eps$}  \label{line:stopping_cond}
    \State Go to  \Cref{line:return}.
\Else
    \State Let $L \subset T$ consisting of the $\eps |T|$ points with the largest score $g(x) = (x - \mu_T)^\top \bM (x - \mu_T)$. 
    \State Define the thresholded scores $\tau(x):= g(x) \1_{x \in L}$ for $x \in T$.\label{line:scores_def}
    \For{each $x \in T$}
        \State Delete $x$ from $T$ with probability $\tau(x)/\max_{x \in T} \tau(x)$. \label{line:remove}
    \EndFor
\EndIf

\EndWhile

\State Let $\widehat{S} \gets T$. 
\State \textbf{return} $\widehat{S}$  \label{line:return}

\end{algorithmic}
\end{algorithm}

\subsection{Certification of Solutions} \label{sec:cert}

In this section, we state the certificate lemma (from \cite{NieGS24}) that provides a way to bound the $W_{1,k'}$ distance between the current filtered version of the dataset $T$ and the uniform distribution over the original inliers $S_0$.
This bound is expressed as a function of a variance-like quantity of the dataset. 
This insight informs the design of our algorithm's stopping condition (cf. \Cref{line:stopping_cond}). Consequently, we can guarantee that upon termination, if the variance is sufficiently small, the solution output by the algorithm will be close to the uniform distribution over the inliers $S_0$ in the $W_{1,k'}$ metric.

\begin{lemma}[Lemma 20 in \cite{NieGS24}]\label{lem:certificate_colt}
    Let $S_0$ be an $(\eps,\delta)$-stable set.
    Let $S'$ be any set satisfying $W_{2,k'}(S',S_0) \leq r$ and 
     $T$ be a set with $|T\cap  S'| \geq (1-\eps)|T|$.
    Then, the following holds:\footnote{ The following result is implied by \cite[Lemma 20]{NieGS24} after using $P'=P$ both being equal to  the uniform distribution over $S'$, and $Q$ being the uniform distribution over $T$.}
    \begin{align*}
        W_{1,k'}(T,S') \lesssim \delta \sqrt{k'}  + r \sqrt{\eps} + \eps \sqrt{r'}\;,
    \end{align*}
    where $r' = \sup_{\bM \in \cM_{k'}} \langle \bM, \vec \Sigma_{T \setminus  S'} \rangle + (\mu_{T \setminus  S'} - \mu_{S'})^\top \bM (\mu_{T \setminus  S'} - \mu_{S'})$.
\end{lemma}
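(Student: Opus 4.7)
The plan is to construct an explicit transportation coupling between the empirical measures on $T$ and $S'$, bound its $k'$-sliced $W_1$ cost, and split the cost into three pieces matching the three terms in the conclusion. Using the equivalent formulation $W_{1,k'}(T, S') = \sup_{\bV \in \cV_{k'}} W_1(\bV T, \bV S')$, I would fix a rank-$k'$ projection $\bV$ throughout and analyze the cost under $\bV$, taking the supremum only at the end.

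The coupling I would use is the identity on $T \cap S' = S' \cap T$ (which carries at least a $(1-\eps)$-fraction of mass on the $T$-side by hypothesis) and transports the remaining $\eps$-fraction from $T \setminus S'$ to $S' \setminus T$ via any bijection, lifting to a fractional plan if the cardinalities of $T$ and $S'$ differ. Routing both endpoints through the ``midpoint'' $\mu_{S'}$ via the triangle inequality, this cost is bounded by
\[
\frac{1}{|T|}\sum_{x \in T \setminus S'} \|\bV(x - \mu_{S'})\|_2 + \frac{1}{|T|}\sum_{y \in S' \setminus T} \|\bV(y - \mu_{S'})\|_2.
\]
The first sum is bounded by $\eps \sqrt{r'}$ via Cauchy-Schwarz, after recognizing that $r' = \sup_{\bM \in \cM_{k'}} \frac{1}{|T\setminus S'|}\sum_{x \in T\setminus S'}\|x-\mu_{S'}\|_\bM^2$ by combining the two terms in its definition (mean shift plus centered variance).

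The second sum is the crux. Since $S' \setminus T$ is an $\eps$-subset of $S'$ but $S'$ itself need not be stable, \Cref{lem:sum-of-absolute-values-rank-k} does not apply directly. Instead, I would transfer to $S_0$ via an optimal $W_{2,k'}$-coupling $\pi^*$ of $S'$ and $S_0$: for each $y \in S' \setminus T$ with $\pi^*$-image $y_0 \in S_0$, decompose $\bV(y - \mu_{S'}) = \bV(y - y_0) + \bV(y_0 - \mu_{S_0}) + \bV(\mu_{S_0} - \mu_{S'})$ and bound each of the three resulting averages. The first piece contributes at most $r\sqrt{\eps}$ by Cauchy-Schwarz against $\pi^*$ (bounding its second moment by $W_{2,k'}(S',S_0)^2 \leq r^2$). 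The third piece contributes $\eps r$ using $\|\bV(\mu_{S_0}-\mu_{S'})\|_2 \leq W_{2,k'}(S_0,S') \leq r$. The second piece uses \Cref{lem:sum-of-absolute-values-rank-k} on $S_0$, after first upgrading the $(\eps,\delta)$-stability of $S_0$ to $(\eps, O(\delta\sqrt{k'}), k')$-generalized stability via \Cref{lem:stability-bigger-param}, yielding $\delta\sqrt{k'}$.

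The main technical obstacle is the second piece of the decomposition above: $\pi^*$ is a fractional transport plan, so each $y \in S'$ may be spread across several $y_0 \in S_0$ with fractional weights, and the set of $y_0$'s selected by $\pi^*$-pushforward of $S' \setminus T$ need not be a literal $\eps$-subset of $S_0$. Resolving this requires either (i) extending \Cref{lem:sum-of-absolute-values-rank-k} to weighted/fractional subsets of mass at most $\eps$---whose Cauchy-Schwarz proof carries through verbatim with weights replacing cardinalities and with the third condition of \Cref{def:condition_3} applied to the corresponding weighted second moment---or (ii) invoking a Birkhoff-type decomposition of $\pi^*$ into a convex combination of bijection-couplings and applying the deterministic version pointwise. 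Either route, combined with the two easier pieces and the first sum's $\eps\sqrt{r'}$ bound, gives $W_1(\bV T, \bV S') \lesssim \delta\sqrt{k'} + r\sqrt{\eps} + \eps\sqrt{r'}$, and taking the supremum over $\bV \in \cV_{k'}$ yields the claim.
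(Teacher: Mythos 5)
The paper does not actually prove this lemma; it is cited from \cite{NieGS24} (with the footnote explaining the specialization $P'=P=\mathrm{Unif}(S')$, $Q=\mathrm{Unif}(T)$). So there is no ``paper's proof'' to compare against, and your proposal is an independent derivation.

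Your approach is sound and, as far as I can tell, would go through once the bookkeeping is completed. The decomposition of the transport cost into the $T\setminus S'$ side and the $S'\setminus T$ side routed through $\mu_{S'}$ is the right idea; the observation that $r' = \sup_{\bM}\frac{1}{|T\setminus S'|}\sum_{x\in T\setminus S'}\|x-\mu_{S'}\|_\bM^2$ (by the bias-variance decomposition of the centered second moment) is correct and cleanly produces the $\eps\sqrt{r'}$ term via Cauchy--Schwarz; and transferring the $S'\setminus T$ piece to $S_0$ through an optimal $W_{2,k'}$-coupling and applying \Cref{lem:sum-of-absolute-values-rank-k} (upgraded to generalized $k'$-stability via \Cref{lem:stability-bigger-param}) gives the $\delta\sqrt{k'}$ and $r\sqrt{\eps}$ terms as you describe. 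You also correctly identify the main obstacle---the fractional nature of $\pi^*$---and your fix (i), the weighted version of \Cref{lem:sum-of-absolute-values-rank-k}, does work: the third condition of \Cref{def:condition_3} extends to fractional weights $w_x \in [0,1]$ with $\sum_x w_x \leq \eps|S|$ because the weighted second moment is a linear functional of $w$, hence maximized at a vertex of that polytope, which is a $\{0,1\}$-indicator of a subset of size at most $\eps|S|$; Cauchy--Schwarz then carries through with weights replacing cardinalities exactly as you say.

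Two minor points deserve more care than your sketch gives them. First, the coupling between $\mathrm{Unif}(T)$ and $\mathrm{Unif}(S')$ when $|T|\neq|S'|$ cannot literally be ``identity on $T\cap S'$ plus a bijection on the complements'': with unequal denominators, the identity part leaves a mass imbalance on $T\cap S'$ itself, so some mass from $T\setminus S'$ must flow to points of $T\cap S'$ as well. This is resolvable (e.g., use the coupling from the proof of \Cref{w1-w2-subsets-prelims}, or redistribute the deficit across $S'$), and the extra transports contribute only $O(\eps)$ mass with similarly bounded distances, but it should be handled explicitly rather than waved away as ``lifting to a fractional plan.'' Second, after routing $S'\setminus T$ through $\pi^*$, the total weight landing on $S_0$ is $|S_0|\cdot|S'\setminus T|/|S'|$, which is $\Theta(\eps)|S_0|$ but not exactly $\eps|S_0|$ when $|S'|<|T|$; you need \Cref{lem:eps-stability-to-2eps} to absorb this constant-factor adjustment in the stability parameter. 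Neither issue changes the conclusion, but they are the kind of details that could trip up a reader, and since you are proving a cited result from scratch they should be included.
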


The next result  provides an upper bound for the quantity $r'$ appearing in \Cref{lem:certificate_colt} in terms of the simpler quantity $\langle \bM, \vec \Sigma_T\rangle - k$. 
This simpler quantity acts as the stopping condition of our algorithm (\Cref{line:stopping_cond}). 
In addition, the lemma below also bounds $\| \mu_T - \mu \|_\bM$, which is the error of the empirical mean over the current dataset $T$.
\begin{lemma}\label{lem:certificate_classic}
Let $k$ be a positive integer,  $\bM \in \cM_k$,
and $\tilde \delta \geq \eps$.
    Let $S'$ be a set satisfying the $(\eps,\tilde \delta,k)$-generalized-stability with respect to the vector $\mu \in \R^d$ (cf. \Cref{def:condition}).
    Let $T$ be a set such that $|T \cap S'| \geq (1-\eps)|T|$ and denote $\lambda := \langle \bM, \vec \Sigma_T\rangle - k$.
    Then, the following hold:\footnote{Recall that $\|x\|_\bM = \sqrt{x^\top \bM x}$ denotes the Mahalanobis norm of $x$ with respect to $\bM$.}
    \begin{enumerate}
        \item $\| \mu_T - \mu \|_\bM \lesssim  \tilde \delta +\eps \sqrt{k} + \sqrt{\lambda \eps}$.
        \item $\max\left(\langle \bM, \vec \Sigma_{T \setminus S'} \rangle, \|\mu_{S' \cap T} - \mu_{T \setminus S'}\|_{\bM}^2 \right)  \lesssim \lambda/\eps +\tilde{\delta}^2/\eps^2  +  k$.
    \end{enumerate}
\end{lemma}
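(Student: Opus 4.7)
\begin{proofsketch}
Partition $T$ into $T_1 := T \cap S'$ and $T_2 := T \setminus S'$, write $n_1,n_2,n$ for the cardinalities, and set $\alpha := n_2/n \leq \eps$. The plan is to combine the standard law-of-total-variance identity
\[
n\,\langle \bM,\vec\Sigma_T\rangle \;=\; n_1\langle \bM,\vec\Sigma_{T_1}\rangle + n_2\langle \bM,\vec\Sigma_{T_2}\rangle + \tfrac{n_1 n_2}{n}\|\mu_{T_1}-\mu_{T_2}\|_\bM^2
\]
with a lower bound on $\langle \bM, \vec\Sigma_{T_1}\rangle$ inherited from the generalized stability of $S'$. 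Since $T_1$ is a large subset of $S'$, \Cref{def:condition} (extended from $\cV_k$ to $\cM_k$ by convexity) gives $\|\mu_{T_1}-\mu\|_2 \leq \tilde\delta$ and $|\langle \bM, \overline{\vec\Sigma}_{T_1}-\bI\rangle| \leq \tilde\delta^2/\eps$; expanding $\vec\Sigma_{T_1} = \overline{\vec\Sigma}_{T_1} - (\mu_{T_1}-\mu)(\mu_{T_1}-\mu)^\top$ and using $\|\bM\|_{\op}\leq 1$ then yields
\[
\langle \bM, \vec\Sigma_{T_1}\rangle \;\geq\; k - O(\tilde\delta^2/\eps).
\]

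Substituting this back into the variance identity, plugging in $\langle \bM, \vec\Sigma_T\rangle = k + \lambda$, dividing through by $n$, and using $n_1/n \geq 1/2$ produces the core inequality
\[
\alpha\,\langle \bM, \vec\Sigma_{T_2}\rangle \;+\; \tfrac{\alpha}{2}\|\mu_{T_1}-\mu_{T_2}\|_\bM^2 \;\lesssim\; \lambda + \alpha k + \tilde\delta^2/\eps .
\]
Claim 2 is obtained by dividing by $\alpha$: both $\langle \bM, \vec\Sigma_{T \setminus S'}\rangle$ and $\|\mu_{T_1}-\mu_{T_2}\|_\bM^2$ are bounded by $\lambda/\alpha + k + O(\tilde\delta^2/(\alpha\eps))$, which in the canonical regime $\alpha = \Theta(\eps)$ collapses to the advertised $O(\lambda/\eps + \tilde\delta^2/\eps^2 + k)$.

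Claim 1 is obtained by deliberately \emph{not} dividing by $\alpha$. From the identity $\mu_T - \mu_{T_1} = \alpha(\mu_{T_2}-\mu_{T_1})$ we have $\|\mu_T - \mu_{T_1}\|_\bM^2 = \alpha \cdot \bigl[\alpha\|\mu_{T_1}-\mu_{T_2}\|_\bM^2\bigr]$, and the core inequality bounds the bracketed quantity by $O(\lambda + \alpha k + \tilde\delta^2/\eps)$; using $\alpha \leq \eps$, this yields $\|\mu_T - \mu_{T_1}\|_\bM^2 \lesssim \eps\lambda + \eps^2 k + \tilde\delta^2$. Taking square roots and combining with $\|\mu_{T_1}-\mu\|_\bM \leq \|\mu_{T_1}-\mu\|_2 \leq \tilde\delta$ via the triangle inequality gives claim 1. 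The main technical subtlety is verifying that the generalized stability of $S'$ transfers to the subset $T_1$ in the required form (which is why $|T|$ is implicitly comparable to $|S'|$ in the lemma's applications); the slickest aspect of the argument is the ``weighted'' use of the core inequality in claim 1, which avoids the $\sqrt{\eps^{-1}}$ factor that one would pick up by plugging claim 2 into the triangle inequality.
\end{proofsketch}
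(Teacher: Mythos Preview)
Your proof is correct and follows essentially the same route as the paper's: decompose $\vec\Sigma_T$ via the law of total variance, lower-bound $\langle \bM, \vec\Sigma_{T\cap S'}\rangle$ by $k - O(\tilde\delta^2/\eps)$ using generalized stability, and rearrange to obtain both claims. You are in fact more careful than the paper in flagging the two subtleties (the actual outlier fraction $\alpha$ versus $\eps$, and whether stability of $S'$ transfers to $T_1$); the paper simply writes $\eps$ for the exact fraction throughout and implicitly assumes $|T\cap S'|\geq (1-\eps)|S'|$ as well.
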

\begin{proof}
    The covariance matrix can be decomposed as follows: 
    \begin{align*}
        \vec \Sigma_T = (1-\eps)\vec \Sigma_{S' \cap T} + \eps \vec \Sigma_{T \setminus S'} + \eps(1-\eps) (\mu_{S' \cap T} - \mu_{T\setminus S'})(\mu_{S' \cap T} - \mu_{T\setminus S'})^\top \;.
    \end{align*}
    Using the decomposition above with our assumptions, we obtain
    \begin{align}
        k + \lambda &\geq \langle \bM, \vec \Sigma_T \rangle \notag\\
        &= (1-\eps) \langle \bM, \vec \Sigma_{S' \cap T}  \rangle + \eps\langle \bM, \vec \Sigma_{T \setminus S'} \rangle + \eps(1-\eps) (\mu_{S' \cap T} - \mu_{T \setminus S'})^\top \bM (\mu_{S' \cap T} - \mu_{T \setminus S'}) \notag\\
        &\geq (1-\eps) \left( k - O\left(\frac{\tilde \delta^2}{\eps}\right) \right) + \eps\langle \bM, \vec \Sigma_{T \setminus S'} \rangle + \eps(1-\eps) (\mu_{S' \cap T} - \mu_{T \setminus S'})^\top \bM (\mu_{S' \cap T} - \mu_{T \setminus S'}) \;, \label{eq:ineq}
    \end{align}
    where the second line is derived by the assumption that $S'$ is a set satisfying the $(\eps,\tilde \delta,k)$-generalized-stability as follows:
    \begin{align*}
      \langle \bM, \vec \Sigma_{S' \cap T}  \rangle &= \left\langle \bM, \frac{1}{|S' \cap T|}\sum_{x \in S' \cap T}(x - \mu_{S' \cap T})(x - \mu_{S' \cap T})^\top  \right\rangle  \\
      &= \left\langle  \bM, \frac{1}{|S' \cap T|}\sum_{x \in S' \cap T}(x - \mu )(x -\mu)^\top \right\rangle + \left\langle \bM, (\mu - \mu_{S' \cap T})(\mu - \mu_{S' \cap T})^\top \right\rangle \\&+ \left\langle \bM, \frac{1}{|S' \cap T|} \sum_{x \in S' \cap T}(x - \mu )( \mu- \mu_{S' \cap T})^\top \right\rangle + \left\langle \bM, \frac{1}{|S' \cap T|} \sum_{x \in S' \cap T}( \mu- \mu_{S' \cap T}) (x - \mu )^\top \right\rangle   \\
      &\geq \left\langle  \bM, \frac{1}{|S' \cap T|}\sum_{x \in S' \cap T}(x - \mu )(x -\mu)^\top \right\rangle -  (\mu - \mu_{S' \cap T})^\top \bM (\mu - \mu_{S' \cap T}) \\
      &\geq k - \frac{\tilde \delta^2}{\eps} - 2\|\bM\|_\op \|\mu - \mu_{S' \cap T}\|_2 \geq k - \frac{\tilde \delta^2}{\eps} - 2\tilde \delta \geq  k -   \frac{3\tilde \delta^2}{\eps}  \;,
    \end{align*}
 where we used that $\bM \preceq \bI$, $\|\mu - \mu_{S' \cap T}\|_2 \lesssim \tilde \delta$ by the stability assumption for $S'$ and $\tilde \delta \geq \eps$.
Rearranging \eqref{eq:ineq} yields
   \begin{align}\label{eq:quadratic}
       \langle \bM, \vec \Sigma_{T \setminus S'} \rangle + (1-\eps) (\mu_{S' \cap T} - \mu_{T \setminus S'})^\top \bM (\mu_{S'\cap T} - \mu_{T \setminus S'}) \lesssim  \lambda/\eps +\tilde{\delta}^2/\eps^2  +  k\;.
   \end{align}
   This implies that both $\langle \bM, \vec \Sigma_{T \setminus S'} \rangle$ and $(\mu_{S' \cap T} - \mu_{T \setminus S'})^\top \bM (\mu_{S' \cap T} - \mu_{T \setminus S'})$ are at most $O(\lambda/\eps +\tilde{\delta}^2/\eps^2  +  k)$, which shows the second part of our lemma. 
   The first part of \Cref{lem:certificate_classic} follows simply by the decomposition below:
   \begin{align*}
      \|\mu_{T} - \mu \|_\bM
      &= \|(1-\eps) \mu_{S' \cap T} + \eps \mu_{T \setminus S'} -\mu \|_\bM\\
      &\leq \| \mu_{S' \cap T} - \mu \|_\bM + \eps \| \mu_{S' \cap T} - \mu_{T \setminus S'}\|_\bM
      \\
      &\lesssim \tilde \delta +\eps \sqrt{k} + \sqrt{\lambda \eps} \;.
   \end{align*}
   where we used the generalized-stability assumption to bound the first term, and $(\mu_{S' \cap T} - \mu_{T \setminus S'})^\top \bM (\mu_{S' \cap T} - \mu_{T \setminus S'}) = O(\lambda/\eps +\tilde{\delta}^2/\eps^2  +  k)$ from earlier to bound the second term.

\end{proof}

\subsection{Filtering Scores}\label{sec:distr-learning-proof}
In this subsection, we show that the scores $\tau(x)$ used in \Cref{alg:mean_estimation}
remove more outliers than inliers in expectation in each round.

\begin{lemma}[Analysis of One Round of Filtering: Scores of Outliers $>$ Scores of Inliers]\label{lem:one_round}
    Let $S'$ be a multi-set of $\R^d$ satisfying $(\eps,\tilde \delta,k)$-generalized-stability with respect to $\mu \in \R^d$. Assume the following: $\eps \in (0,c)$ for a sufficiently small absolute constant $c$,  and $\tilde \delta \geq \eps \sqrt{k}$. 
    Let $T$ be a multiset such that $|T\cap S'| \geq (1 - 20\eps)$.
    Let $C$ be a sufficiently large absolute constant. Let $\tau(x)$ be the scores as defined in Line \ref{line:scores_def} of \Cref{alg:mean_estimation}, i.e., $g(x):=\|x - \mu_T\|_\bM^2$, $L$ is the set of points in $T$ with the $\eps \cdot |T|$ largest scores $g(x)$, and $\tau(x):=g(x) \1_{x \in L}$.
    If $\bM \in \cM_{k}$ is a matrix with $\langle \bM, \vec \Sigma_T \rangle > k + C \tilde{\delta^2}/\eps$, then $\sum_{x \in S' \cap T}\tau(x) \leq 0.1 \sum_{x \in T}\tau(x)$.
\end{lemma}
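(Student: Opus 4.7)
\begin{proofsketch}
The plan is to bound $\sum_{x\in S'\cap T}\tau(x)$ from above and $\sum_{x\in T}\tau(x)$ from below, then show the ratio is at most $0.1$ once $C$ is chosen sufficiently large. Write $n':=|T|$ and $\lambda:=\langle\bM,\vec\Sigma_T\rangle - k$, so the hypothesis reads $\lambda > C\tilde\delta^2/\eps$. Since $\tau(x)=g(x)\mathbf{1}_{x\in L}$ the goal reduces to
\begin{equation*}
\sum_{x\in L\cap S'}g(x) \;\leq\; 0.1\,\sum_{x\in L}g(x).
\end{equation*}
I will also use that $|T|$ and $|S'|$ are comparable (so $|L\cap S'|\leq |L|=\eps n'\lesssim \eps|S'|$ makes $L\cap S'$ a valid ``small subset'' of $S'$ in the sense of \Cref{def:condition_3}, while $T\cap S'$ is a valid ``large subset'' of $S'$); this is the standard inductive invariant maintained across iterations of the filter. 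A key external ingredient is \Cref{lem:certificate_classic}, which gives $\|\mu_T-\mu\|_{\bM}^2 \lesssim \tilde\delta^2 + \eps^2 k + \lambda\eps$, simplifying to $\lesssim \tilde\delta^2 + \lambda\eps$ under the hypothesis $\tilde\delta\geq \eps\sqrt{k}$.

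\emph{Upper bound on the inlier scores.} I would expand $g(x)\leq 2\|x-\mu\|_{\bM}^2 + 2\|\mu_T-\mu\|_{\bM}^2$ and apply the small-subset clause of \Cref{def:condition_3} (equivalent to \Cref{def:condition} by \Cref{lem:equivalence}) to the subset $L\cap S'$ of $S'$, obtaining $\sum_{L\cap S'}\|x-\mu\|_\bM^2 \lesssim n'\tilde\delta^2/\eps$. Combining with the \Cref{lem:certificate_classic} bound on $\|\mu_T-\mu\|_\bM^2$ and the size estimate $|L\cap S'|\leq \eps n'$ yields
\begin{equation*}
\sum_{x\in L\cap S'}g(x) \;\lesssim\; \frac{n'\tilde\delta^2}{\eps} + \eps^2 n'\lambda.
\end{equation*}

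\emph{Lower bound on the total scores.} Using the identity $\sum_{T\cap S'}g(x) = |T\cap S'|\langle\bM,\vec\Sigma_{T\cap S'}\rangle + |T\cap S'|\|\mu_{T\cap S'}-\mu_T\|_{\bM}^2$, the upper-bound clause of generalized stability applied to the large subset $T\cap S'\subseteq S'$, and the bound on $\|\mu_T-\mu\|_\bM^2$, I would conclude $\sum_{T\cap S'}g(x)\leq n'k + O(n'\tilde\delta^2/\eps + n'\lambda\eps)$. Subtracting from $\sum_T g(x) = n'(k+\lambda)$ and taking $C$ large (and $\eps$ small) gives $\sum_{T\setminus S'}g(x) \geq n'\lambda/2$. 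Finally, letting $\tau^\star := \min_{x\in L}g(x)$ and using $|T\setminus S'|\leq 20\eps n' = 20|L|$ together with the obvious inequality $\tau^\star \leq \sum_L g(x)/|L|$, I bound $\sum_{(T\setminus S')\setminus L}g(x) \leq |T\setminus S'|\tau^\star \leq 20 \sum_L g(x)$. Therefore $\sum_{T\setminus S'}g(x)\leq 21\sum_L g(x)$, hence $\sum_L g(x) \gtrsim n'\lambda$.

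Combining the two bounds gives $\sum_{L\cap S'}g(x)/\sum_L g(x) \lesssim \tilde\delta^2/(\eps\lambda) + \eps^2 \leq O(1/C) + O(\eps^2) \leq 0.1$ for $C$ large and $\eps$ small. The main obstacle is the careful tracking of multiplicative constants: the factor of $2$ from the inequality $(a+b)^2\leq 2a^2+2b^2$ used in the inlier upper bound, and the $\eps^2 k$ artifact from the \Cref{lem:certificate_classic} bound on $\|\mu_T-\mu\|_\bM^2$, both threaten to erase the slack produced by $\lambda > C\tilde\delta^2/\eps$. The hypothesis $\tilde\delta\geq \eps\sqrt{k}$ is used precisely to fold $\eps^2 k$ into $\tilde\delta^2 \leq \tilde\delta^2/\eps$, and choosing $C$ sufficiently large absorbs the remaining constants.
\end{proofsketch}
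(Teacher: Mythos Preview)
Your proof is correct and follows essentially the same route as the paper: the same $(a+b)^2\le 2a^2+2b^2$ decomposition of $g(x)$, the same appeal to the small-subset clause of \Cref{def:condition_3} on $L\cap S'$ and to \Cref{lem:certificate_classic} for $\|\mu_T-\mu\|_\bM$, and the same lower bound via $\sum_T g(x)-\sum_{S'\cap T}g(x)$. The one minor difference is that the paper passes from $\sum_{T\setminus S'}g(x)$ to $\sum_L g(x)$ by asserting $|T\setminus S'|\le |L|$, whereas your $\tau^\star$ argument is more careful and correctly accommodates the stated $20\eps$ hypothesis (losing only a constant factor $21$).
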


\begin{proof}

Denote $\lambda := \langle \bM,\vec \Sigma_T\rangle - k$. 
For the inlier points, we have the following (explanations are provided after the inequalities):
\begin{align}
    \sum_{x \in S' \cap T} \tau(x) &= \sum_{x \in S' \cap  L} g(x)  = \sum_{x \in S' \cap  L} \|x - \mu_T\|_\bM^2 \label{eq:stepp_1}\\ 
    &\leq 2 \sum_{x \in S' \cap L}(x-\mu)^\top \bM (x-\mu) + 2 |S' \cap L| \cdot \|\mu - \mu_T\|_\bM^2 \label{eq:stepp_2}\\
    &\lesssim |S'| \tilde{\delta^2}/\eps + \eps  |S'|(\tilde \delta^2 + \eps^2 k+ \eps\lambda) \leq 0.01 \lambda |S'| \;,\label{eq:stepp_3}
\end{align}
where the steps used were the following:
\eqref{eq:stepp_2} used the triangle inequality combined with the inequality $(a+b)^2 \leq 2 a^2 + 2 b^2$,
the first term in \eqref{eq:stepp_3} was bounded using the third condition in \Cref{def:condition_3} (and our assumption that $S'$ satisfies the generalized stability condition), the second term in \eqref{eq:stepp_3} used that  $\|\mu - \mu_T\|_\bM^2 \lesssim \tilde \delta^2 + \eps^2 k+ \eps\lambda$ by the certificate lemma (\Cref{lem:certificate_classic}), and we also used that $|S' \cap L| \leq |L| =  \eps |T| \leq \frac{\eps}{1-\Omega(\eps)}|S'|$. The last inequality in \eqref{eq:stepp_3} uses our assumptions $\lambda \geq C \tilde \delta^2/\eps$, $\tilde \delta \geq \eps$, $\tilde \delta \geq \eps \sqrt{k}$, $C \gg 1$, $\eps \ll 1$.

We now show the lower bound for the sum of the scores over all points:
\begin{align}
    \sum_{x \in T} \tau(x) = \sum_{x \in T \cap L} g(x) \geq \sum_{x \in T \setminus S'} g(x)
    \geq \sum_{x \in T} g(x) - \sum_{x \in S' \cap T} g(x) \;. \label{eq:allpoints}
\end{align}
where the second step above is based on the fact that $|T \setminus S'| \leq \eps |T|$ and that $L$ is defined to be the points with the largest $\eps |T|$ scores.
For the first term, we have that, by definition:
\begin{align}
    \sum_{x \in T} g(x) = \langle \bM, \vec \Sigma_T \rangle = (k + \lambda)|T| \geq (k + \lambda)(1-\eps)|S'| \;. \label{eq:lower_bound_all}
\end{align}
For the second term in the RHS of \eqref{eq:allpoints}, we have the following:
\begin{align}
    \sum_{x \in S' \cap T} g(x)  &\leq \sum_{x \in S' } g(x) 
    = \sum_{x \in S'} (x - \mu_T)^\top \bM (x - \mu_T) \notag \\
    &= \sum_{x \in S'} (x - \mu)^\top \bM (x - \mu)
    + |S'| (\mu - \mu_T)^\top \bM (\mu - \mu_T)
    + 2 \sum_{x \in S'} (x - \mu)^\top \bM (\mu - \mu_T) \;. \label{eq:upper_bound_inliers_1}
\end{align}
The first term is at most $(k + \tilde \delta^2/\eps)|S'|$ by our generalized-stability assumption.
The second term is at most $|S'| (\tilde \delta^2 + \eps^2 k + \eps \lambda)$ by \Cref{lem:certificate_classic}.
For the third term, we have that
\begin{align}
    \frac{1}{|S'|} \sum_{x \in S'} (x - \mu_T)^\top \bM (\mu - \mu_T) 
    = (\mu_{S'} - \mu_T)^\top \bM (\mu - \mu_T)
    \leq\| \mu_{S'} - \mu_T\|_2 \|\bM\|_\op \| \mu - \mu_T\|_2&
    \notag\\
    \lesssim (\|\mu - \mu_{S'}\|_2 + \|\mu-\mu_T\|_2)\| \mu - \mu_T\|_2
    \lesssim \tilde \delta^2 + \eps^2 k +  \lambda \eps \;,\label{eq:upper_bound_inliers_2}
\end{align}
where we used that $\| \bM\|_\op \leq 1$, and then we applied  the triangle inequality and \Cref{lem:certificate_classic}.
Putting \eqref{eq:allpoints}-\eqref{eq:upper_bound_inliers_2} together, we have that
\begin{align*}
    \sum_{x \in T} \tau(x) &\geq   (k + \lambda)(1-\eps)|S'|  - (k+ \tilde \delta^2 +\tilde \delta^2/\eps + \eps^2 k +  \lambda \eps)|S'| \notag\\
    &\geq (k + \lambda)(1-\eps)|S'|  - (k+ 0.001 \lambda)|S'| \tag{using $\lambda \geq C \tilde \delta^2/\eps$, $\tilde \delta \geq \eps$, $\tilde \delta \geq \eps \sqrt{k}$, $C \gg 1$, $\eps \ll 1$}\\
    &\geq (\lambda - \eps k - \lambda \eps - 0.001\lambda)|S'|\\
    &\geq 0.9 \lambda |S'| \;. \tag{using  $\eps \ll 1$, $\lambda \geq C \tilde \delta^2/\eps \geq C \eps k$}
\end{align*}
Combining with \eqref{eq:stepp_3} concludes the proof of this lemma.

\end{proof}

\subsection{Proof of \Cref{thm:main_distr_learning}}
We are now ready to combine the previous components to conclude the analysis of our algorithm and complete the proof of \Cref{thm:main_distr_learning}.

\begin{proof}[Proof of \Cref{thm:main_distr_learning}]
We use $k'=k$ without loss of generality, as the same arguments go through for any other $k' \leq k$ by noting that the local corruptions adversary is stronger for $k' \leq k$.

We briefly recall the notation. As in the theorem statement, $S_0 = \{x_1,\ldots,x_n \}$ is the original set of inliers (before any kind of corruptions), which is assumed to satisfy the stability conditions, $S = \{x_i + \Delta_i\}_{i \in [n]}$ is the set after the strong Wasserstein corruptions of \Cref{def:strongWass-k}, ($\Delta_i$'s denote the shift that each point undergoes), and $T$ is the final dataset after globally corrupting $S$ (\Cref{def:outliers}).

The set $S_0 = \{x_1,\ldots,x_n \}$ is $(\eps,\delta)$-stable by assumption. Using \Cref{lem:stability-bigger-param}  it also satisfies $(\eps,\delta',k)$-generalized stability, with $\delta' \lesssim \sqrt{k}\delta$.  By \Cref{prop:subset-covariance-general-k} we have the existence of a set $S_0' \subset S_0$ with $|S_0'| \geq (1-\eps)|S_0|$ such that $\max_{\bM \in \cM_{k}}\tfrac{1}{|S_0'|} \sum_{i : x_i \in S_0'}\|\Delta_i\|_{\bM}^2 \lesssim \rho^2/\eps$. By \Cref{lem:average_roots} we have that the same set $S_0'$ also satisfies $\max_{\bM \in \cM_{k}}\tfrac{1}{|S_0'|} \sum_{i : x_i \in S_0'}\|\Delta_i\|_{\bM} \lesssim \max_{\bV \in \cV_{k}}\tfrac{1}{|S_0'|} \sum_{i : x_i \in S_0'}\|\Delta_i\|_{\bV} \lesssim \rho$ (where the last inequality is by definition of our local perturbation model).

By the above discussion \Cref{lem:existence-of-stability-and-W_1-and-W-2-new} is applicable, stating that if $S'$ is defined to be $S' \eqdef \{x_i + \Delta_i : x_i \in S_0' \}$, then  $S'$ satisfies $(\eps,\tilde \delta , k)$-generalized-stability with respect to $\mu$, for $\tilde \delta = O(\delta \sqrt{k} + \rho)$.
Moreover,   for any $(1-\eps)|S'|$ sized subset $S''$ of $S'$ it holds  that 
\begin{align}
\nonumber
 W_{1,k} (S_0',S'') &\lesssim \rho + \eps\sqrt{k} + \tilde \delta \lesssim \rho + \delta \sqrt{k} \\
 \text{and } W_{2,k} (S_0', S'') &\lesssim \sqrt{\eps k} +  \tilde \delta/\sqrt{\eps} + \rho/\sqrt{\eps} \lesssim (\sqrt{k} \delta + \rho)/\sqrt{\eps}, \numberthis \label{eq:certifiacte-wassertein-subset}    
\end{align}
where we used that $\delta \geq \eps$.

We now argue that filtering does not remove too many ``stable inliers''  ($T \cap S'$) throughout its execution. 
\Cref{lem:one_round} states that, as long as the main while loop of the algorithm has not been terminated, the scores $\tau(x)$ that the algorithm assigns to inlier points in $T$ is substantially bigger (at least by a constant factor) than the ones for outlier points.
Following the standard analysis of filtering algorithms in \cite{DiaKan22-book}, we obtain that 
with probability at least $9/10$, we have that $|T \triangle S'| \leq 20\eps$ throughout the execution.

Let $\widehat{S}$ denote the set $T$ at the end of the filtering algorithm and condition on the high probability event that satisfies $|\widehat{S} \cap S'| \leq 20 \eps$ and $\langle \bM, \vec \Sigma_{\widehat{S}}\rangle \leq k + C \widetilde{\delta}^2/\eps$ for all $\bM \in \cM_k$.
To apply \Cref{lem:certificate_colt}, we need a bound on $W_{2,k'}(S_0,S')$, which we obtain below:
\begin{align*}
W_{2,k}(S_0, S') &\leq W_{2,k}(S_0, S_0') + W_{2,k}(S_0', S') \leq \sqrt{k \eps} + \delta \sqrt{k/\eps} + \sqrt{\rho^2/\eps},
\end{align*}
where use \Cref{w1-w2-subsets-prelims} for the first term and \eqref{eq:certifiacte-wassertein-subset} for the second term (with $S''=S'$).
With this bound on $W_{2,k}(S', S_0')$, applying \Cref{lem:certificate_colt} yields
\begin{align*}
    W_{1,k}(\widehat{S},S') \lesssim \delta \sqrt{k} + \rho  + \eps \sqrt{r'}\;,
\end{align*}
where $r'$ is defined in \Cref{lem:certificate_colt}. To upper bound $r'$, we apply \Cref{lem:certificate_classic} with $T =\widehat{S}$ and $\eps ' = 20\eps$ in place of the parameter $\eps$ appearing in the statement of that lemma. This gives that 
\begin{align*}
    r' &= \sup_{\bM \in \cM_k}\langle \bM, \vec \Sigma_{\widehat{S} \setminus S'}\rangle + \| \mu_{\widehat{S} \setminus S'} - \mu_{S'} \|_\bM^2 
    \\
    &\lesssim \sup_{\bM \in \cM_k} \langle \bM, \vec \Sigma_{\widehat{S} \setminus S'}\rangle + \| \mu_{\widehat{S} \setminus S'} - \mu_{S' \cap \widehat{S}} \|_\bM^2  + \| \mu_{S' \cap \widehat{S}} - \mu_{S'} \|_\bM^2\\
    &\lesssim \tilde \delta^2/\eps^2 +  k  + \| \mu_{S' \cap \widehat{S}} - \mu_{S'} \|_\bM^2\\
    &\lesssim \tilde \delta^2/\eps^2 \;.
\end{align*}
where the last line used $\| \mu_{S' \cap \widehat{S}} - \mu_{S'} \|_\bM^2 \lesssim \| \mu_{S' \cap \widehat{S}} - \mu \|_\bM^2 + \| \mu - \mu_{S'} \|_\bM^2 \lesssim \tilde \delta^2/\eps^2$ by the generalized stability of $S'$ (we also used that $\tilde \delta \geq \eps$ and $\tilde \delta \geq \eps \sqrt{k}$).
Plugging this back, we obtain a bound of $W_{1,k}(\widehat{S}, S') \lesssim \delta \sqrt{k} + \rho$.
We can translate this into a bound for $W_{1,k}(S_0, \widehat{S})$ using the triangle inequality as follows:
\begin{align*}
    W_{1,k}(S_0, \widehat{S}) \leq W_{1,k}(S_0, S_0') + W_{1,k}(S_0', S') + W_{1,k}(S', \widehat{S}). 
\end{align*}
The first term above is upper bounded by  $ \eps \sqrt{k} + \delta $ by \Cref{w1-w2-subsets-prelims},  the second term is upper bounded by $\rho$ by the definition of local contamination, and the last term was shown to be at most $\delta \sqrt{k} + \rho$.
Combining these three terms yields the desired result.

\paragraph{Runtime} 
Note that the algorithm removes at least one point per iteration and the set $S'\cap T$ satisfies the stopping condition of \Cref{line:stopping_cond}.
This is because of stability of $S'$ and therefore stability of any large subset of $S'$ (cf. \Cref{lem:eps-stability-to-2eps}). This means that the algorithm will terminate after $O(n)$ iterations.
In each iteration, the algorithm requires solving an SDP (\Cref{line:optimization-problem}), which can be done in polynomial time by ellipsoid method or interior point method~\cite{Nesterov04}.
\end{proof}

\section{Principal Component Analysis}
\label{sec:pca}
In this section, we present our result for robust principal component analsysis (PCA) in \Cref{thm:PCA} below. 
The goal for PCA is to output a high-variance direction $v$ 
of the unknown covariance $\vec \Sigma$ in the following sense: $v^\top \vec \Sigma v \geq (1-\gamma)\|\vec \Sigma\|_\op$ for $\gamma$ as small as possible. 
Under the global contamination model, robust PCA algorithms have been developed in \cite{KonSKO20,JamLT20,DiaKPP23-pca,JamKLPPT24}.

We will work with zero-mean distributions (for inliers) in this section, which is without loss of generality because one can always reduce to this setting by taking differences of pairs of samples. 

We state an appropriate version of the stability condition which is more relevant to PCA.
\begin{definition}[PCA Stability]\label{def:stabilityPCA}
    Let $0< \eps \leq \gamma$.
A finite multiset $S \subset \R^d$ is called $(\eps,\gamma)$-PCA-stable with respect to a PSD matrix $\vec \Sigma \in \R^{d \times d}$ if for
every $S' \subseteq S$ with $|S'| \geq (1-\eps)|S|$, the following  holds: $(1-\gamma) \vec \Sigma \preceq \frac{1}{|S'|} \sum_{x \in S'} x x^\top \preceq (1+ \gamma) \vec \Sigma$. 
\end{definition}
The definition above is very closely related to \Cref{def:condition} as shown by the following observation.
    \begin{fact}\label{fact:normalization}
    Let $\vec \Sigma$ be a positive definite matrix.
    If a set of samples $\{\vec \Sigma^{-1/2} x_i\}_{i \in [n]}$ is $(\eps,\delta)$-stable (\Cref{def:stability1}) with respect to $\mu=0$ (\Cref{def:condition}), then $\{x_i \}_{i \in [n]}$ is $(\eps,\gamma)$-PCA-stable with respect to $\vec \Sigma$ (\Cref{def:stabilityPCA}) for $\gamma \lesssim \delta^2/\eps$.
    \end{fact}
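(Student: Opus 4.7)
Proof proposal for \Cref{fact:normalization}.

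The plan is to pass through the normalized samples and exploit the fact that PCA stability is essentially a two-sided operator-norm statement on the uncentered second moment. Concretely, set $y_i := \vec\Sigma^{-1/2}x_i$ and fix an arbitrary subset indexed by $\cJ \subseteq [n]$ with $|\cJ| \geq (1-\eps)n$. Denote by $\mu'$ and $\vec\Sigma'$ the sample mean and sample covariance of $\{y_i : i \in \cJ\}$ respectively. Applying the hypothesis that $\{y_i\}_{i\in[n]}$ is $(\eps,\delta)$-stable with respect to $\mu=0$ (\Cref{def:stability1}), we get $\|\mu'\|_2 \le \delta$ and $\|\vec\Sigma' - \bI\|_\op \le \delta^2/\eps$.

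The key identity is the standard decomposition of the uncentered second moment into covariance plus mean outer product:
\begin{align*}
\frac{1}{|\cJ|}\sum_{i \in \cJ} y_i y_i^\top = \vec\Sigma' + \mu'(\mu')^\top.
\end{align*}
Using the triangle inequality on operator norms, together with $\|\mu'(\mu')^\top\|_\op = \|\mu'\|_2^2 \le \delta^2$ and the assumption $\delta \geq \eps$ (so $\delta^2 \le \delta^2/\eps$), we obtain
\begin{align*}
\Bignorm{\frac{1}{|\cJ|}\sum_{i \in \cJ} y_i y_i^\top - \bI}_\op \le \frac{\delta^2}{\eps} + \delta^2 \;\lesssim\; \frac{\delta^2}{\eps}.
\end{align*}
This two-sided bound rewrites as the Loewner sandwich $(1-\gamma)\bI \preceq \tfrac{1}{|\cJ|}\sum_{i\in\cJ} y_i y_i^\top \preceq (1+\gamma)\bI$ for some $\gamma \lesssim \delta^2/\eps$.

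Finally, conjugate by $\vec\Sigma^{1/2}$, which preserves Loewner order. Since $\vec\Sigma^{1/2} y_i y_i^\top \vec\Sigma^{1/2} = x_i x_i^\top$, this yields
\begin{align*}
(1-\gamma)\,\vec\Sigma \preceq \frac{1}{|\cJ|}\sum_{i\in\cJ} x_i x_i^\top \preceq (1+\gamma)\,\vec\Sigma,
\end{align*}
which is exactly $(\eps,\gamma)$-PCA-stability (\Cref{def:stabilityPCA}) for $\{x_i\}_{i\in[n]}$ with respect to $\vec\Sigma$. There is no real obstacle here; the only thing to be mindful of is the mean-outer-product term $\mu'(\mu')^\top$, which would be the ``loss'' if $\delta$ and $\eps$ were not comparable, but is absorbed into $\delta^2/\eps$ precisely because $\delta \ge \eps$ by assumption in \Cref{def:stability1}.
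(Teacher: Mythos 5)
Your proof is correct and is the natural argument (the paper states this as a \emph{Fact} without a written proof, so there is nothing to compare against). One small slip in the write-up: the step $\delta^2 \le \delta^2/\eps$ does not follow from the assumption $\delta \ge \eps$; it simply follows from $\eps < 1$ (equivalently $1/\eps \ge 1$), which holds since $\eps \in (0,1/2)$ in \Cref{def:stability1}. The decomposition of the uncentered second moment into $\vec\Sigma' + \mu'(\mu')^\top$, the application of the two parts of stability, and the conjugation by $\vec\Sigma^{1/2}$ (which preserves Loewner order) are all exactly right, and together they deliver $\gamma \lesssim \delta^2/\eps$ as claimed.
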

Using the connection above, it can be seen that the stability definition above is satisfied by many distribution families of interest.
Similarly, a set of i.i.d.\ samples from such distributions continue to satisfy this definition with high probability~\cite{JamLT20,JamKLPPT24}.
Consequently, the stability-based algorithms obtain the state of the art results for  
robust PCA for many distribution families~\cite{JamLT20,DiaKPP23-pca,JamKLPPT24}.

\begin{definition}[Stability-based Algorithms for PCA]\label{def:stability_based_PCA}
Let $S$ be an $(\eps,\gamma)$-PCA-stable set with respect to an (unknown) PSD matrix $\vec \Sigma \in \R^{d\times d}$ (\Cref{def:stabilityPCA}).
Let $T$ be any set in  $\cO(S,\eps)$ (cf. \Cref{def:outliers}).
We call an algorithm stability-based PCA-algorithm if it takes as an input $T$, $\eps$, and $\gamma$,
and outputs a unit vector $v \in \R^d$ in polynomial time such that $v^\top \vec \Sigma v \geq (1 - O(\gamma)) \|\vec \Sigma\|_{\op}$.
\end{definition}

For this section, we consider a version of the contamination model where local corruptions are introduced to the data after whitening.
\begin{contModel}[Strong Local Contamination After Whitening]
\label{def:strongWass-whitenned}
Let $\overline \rho \geq 0$.
    Let $S_0=\{x_1,\ldots,x_n \}$ be an $n$-sized set in $\R^d$ and $\vec \Sigma \in \R^{d \times d}$ be a PSD matrix. 
    Consider an adversary that perturbs each point $x_i$ to $\widetilde{x}_i$ with the only restriction that in each direction, the average perturbation is at most $\overline \rho$. Formally, we define 
    \begin{align}
\cW^\strong(S_0,\overline\rho,\vec \Sigma) := \left\{S= \{\widetilde{x}_1,\dots, \widetilde{x}_n\} \subset \R^d :     \sup_{v \in \R^d: \|v\|_2=1}\,\,\frac{1}{n} \sum_{i \in [n]} \big|v^\top \vec \Sigma^{-1/2}(\widetilde{x}_i - x_i) \big| \leq \overline \rho    \right\}\,. \notag
    \end{align}
    The adversary returns an arbitrary set  $S \in \cW^\strong(S_0,\overline \rho, \vec \Sigma)$ after possibly reordering the points.
\end{contModel}
 As a remark, we note that \Cref{def:strongWass-whitenned} and \Cref{def:strongWass-intro} are equivalent to each other, as long as the matrix $\vec \Sigma$ is well-conditioned. In particular, $\overline \rho \leq  \rho/\sqrt{\lambda_{\min}}$ and $\rho \leq \overline \rho \sqrt{\lambda_{\max}}$ where $\lambda_{\max},\lambda_{\min}$ denote the largest and smallest eigenvalues of $\vec \Sigma$ respectively.
The appealing property of the whitened local perturbations is that (i) it allows the amount of local perturbations to increase in high-variance directions, and (ii) it decouples the local contamination parameter $\overline \rho$ from the scale of the covariance matrix $\vec\Sigma$.\footnote{Indeed, it can be seen that the range of parameter $\rho$ where robust PCA is non-trivial depends on the scale of $\vec \Sigma$. For example, consider the case when the inlier distribution is $\cN(0, \sigma^2(\bI + vv^\top))$ for a unit vector $v$ and a scalar $\sigma^2$. Then the $2$-Wasserstein distance between $\cN(0,\sigma^2(\bI + vv^\top))$ and $\cN(0,\sigma^2\bI )$ is $\Theta(\sigma)$. 
Hence, for $\rho \gtrsim \sigma$, the local adversary can simulate samples from an isotropic distribution, removing any signal from the direction of interest $v$.
On the other hand, as shown in \Cref{thm:PCA}, the range of $\overline{\rho}$ does not depend on $\vec \Sigma$.
}

\begin{theorem}\label{thm:PCA}
    Let $c$ be a sufficiently small positive constant and $C$ a sufficiently large constant.
    Let outlier rate $\eps \in (0,c)$ and  contamination radius $\overline \rho \in (0,\sqrt{\eps})$.
    Let $S_0$ be a set of samples satisfying 
    $(\eps,\gamma)$-PCA-stability with respect to a PSD matrix $\vec \Sigma \in \R^{d \times d}$ (\Cref{def:stabilityPCA}). 
    Let $T$ be a corrupted dataset after  $\eps$-fraction of outliers and $\overline \rho$-strong local corruptions after whitening (as per \Cref{def:strongWass-whitenned}).
    Then, any stability-based PCA algorithm (\Cref{def:stability_based_PCA}) on input $T, \eps,\tilde \gamma = C \cdot (\gamma + \frac{\overline \rho^2}{\eps})$, outputs a unit vector  $v$ such that with high probability (over the internal randomness of the algorithm):
        $v^\top \vec \Sigma v \geq \left(1 - O\left(\gamma  + \frac{\overline \rho^2}{\eps}\right)\right) \| \vec \Sigma \|_\op$.  
\end{theorem}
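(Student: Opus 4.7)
The plan is to follow the blueprint of \Cref{thm:mean-estimation} for PCA: isolate a large subset of the locally perturbed data that remains PCA-stable (\Cref{def:stabilityPCA}) with only a mildly degraded parameter, and then invoke any stability-based PCA algorithm (\Cref{def:stability_based_PCA}) on top of the resulting set. The final PCA error estimate will be obtained by propagating the mean-estimation-style bound $\widetilde\delta\lesssim\delta+\overline\rho$ of \Cref{lem:existence-of-stability-new} through the dictionary $\gamma\leftrightarrow\delta^2/\eps$ supplied by \Cref{fact:normalization}.

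Concretely, first whiten the data: let $y_i := \vec\Sigma^{-1/2} x_i$ and $\tilde\Delta_i := \vec\Sigma^{-1/2}\Delta_i$, where $\Delta_i = \widetilde x_i - x_i$ is the local perturbation of the $i$-th inlier. For the natural inlier families on which the stability-based PCA algorithms are designed to operate (isotropic subgaussian, bounded moments, etc.), the PCA-stability of $S_0$ with respect to $\vec\Sigma$ is accompanied by the (generalized) stability of $\{y_i\}$ with respect to $\mu = 0$ for some $\delta\lesssim\sqrt{\gamma\eps}$; this is essentially the direction of \Cref{fact:normalization} we need. On the perturbation side, \Cref{def:strongWass-whitenned} says exactly that $\{\tilde\Delta_i\}$ obeys the strong-local adversary bound $\sup_{v\in\cS^{d-1}}\tfrac{1}{n}\sum_i|v^\top\tilde\Delta_i|\leq\overline\rho$ of \Cref{def:strongWass-intro}.

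Next, I would apply \Cref{prop:subset-covariance-general-k} with $k=1$ to $\{\tilde\Delta_i\}$ to extract a subset $I\subseteq[n]$ of size at least $(1-\eps)n$ satisfying $\sup_v\tfrac{1}{|I|}\sum_{i\in I}(v^\top\tilde\Delta_i)^2\lesssim\overline\rho^2/\eps$, and invoke \Cref{lem:average_roots} to get the matching first-moment bound $\sup_v\tfrac{1}{|I|}\sum_{i\in I}|v^\top\tilde\Delta_i|\lesssim\overline\rho$ on the same $I$. These are precisely the hypotheses \eqref{eq:assumption-on-Delta-new} of \Cref{lem:existence-of-stability-new} applied with $k=1$ to $(\{y_i\},\{\tilde\Delta_i\})_{i\in I}$, whose conclusion is that $\{y_i+\tilde\Delta_i\}_{i\in I}$ is $(\eps,\widetilde\delta)$-stable with $\widetilde\delta\lesssim\delta+\overline\rho\lesssim\sqrt{\gamma\eps}+\overline\rho$. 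Multiplying through by $\vec\Sigma^{1/2}$ and applying \Cref{fact:normalization} in its stated direction then yields $(\eps,\widetilde\gamma)$-PCA-stability of $\{\widetilde x_i\}_{i\in I}$ with respect to $\vec\Sigma$, with
\[
\widetilde\gamma \;\lesssim\; \widetilde\delta^{\,2}/\eps \;\lesssim\; \gamma + \overline\rho\sqrt{\gamma/\eps} + \overline\rho^{\,2}/\eps \;\lesssim\; \gamma + \overline\rho^{\,2}/\eps,
\]
the last step being AM--GM.

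Since the global-outlier adversary then further corrupts at most an $\eps$-fraction of $\{\widetilde x_i\}_{i\in I}$, we have $T\in\cO(\{\widetilde x_i\}_{i\in I},2\eps)$, so feeding $T$, $2\eps$, and $\widetilde\gamma = C(\gamma+\overline\rho^{\,2}/\eps)$ into any stability-based PCA algorithm returns a $v$ with $v^\top\vec\Sigma v\geq (1-O(\widetilde\gamma))\|\vec\Sigma\|_\op$, completing the proof. The main delicate point is the first step---linking PCA-stability of the original samples to generalized stability of the whitened ones---since the two conditions are not equivalent in general; once past this (either by reading off the stronger condition directly from the inlier distribution or by augmenting the hypothesis), the rest slots cleanly into the infrastructure already developed in \Cref{sec:stability-under-local}.
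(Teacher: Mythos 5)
Your proposal follows the same high-level blueprint as the paper's proof: whiten by $\vec\Sigma^{-1/2}$, apply \Cref{prop:subset-covariance-general-k} and \Cref{lem:average_roots} with $k=1$ to the whitened perturbations, feed the result into \Cref{lem:existence-of-stability-new} with $\delta\approx\sqrt{\gamma\eps}$, translate back via \Cref{fact:normalization}, and then invoke a stability-based PCA algorithm. The resulting $\widetilde\gamma \lesssim \gamma + \overline\rho^2/\eps$ (with the cross term absorbed by AM--GM) matches the paper.

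The one place where you and the paper diverge is the delicate step you correctly flag: PCA-stability of $S_0$ with respect to $\vec\Sigma$ is \emph{not} the same as $(\eps,\delta,1)$-generalized stability of the whitened data $\{y_i\}$---it is exactly the \emph{covariance} half of that condition (both upper and lower bounds on $\langle\bM,\overline{\vec\Sigma}_{S'}\rangle$), with no control on the sample mean $\mu_{S'}$. You propose to bridge this either by restricting to concrete inlier families or by strengthening the hypothesis; both change the theorem. The paper instead observes that the covariance half is all that is needed: the steps in the proof of \Cref{lem:existence-of-stability-new} that establish the covariance conditions for the perturbed set (the upper bound \eqref{eq:4terms1-new} and the analogous lower bound, including the cross-term bound \eqref{eq:cross-term-M-new}) rely only on the covariance parts of $S_0'$'s generalized stability---specifically the second condition of \Cref{def:condition_2} and the small-subset bound in \Cref{def:condition_3}---never on the mean condition $\|\mu_{S_0'}-\mu\|_2\leq\delta$. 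Since PCA-stability of $\{x_i\}$ with respect to $\vec\Sigma$ is exactly those covariance conditions for $\{y_i\}$ with respect to $\bI$ and $\mu=0$, one can re-run that portion of the proof mutatis mutandis with $k=1$ and $\delta=\sqrt{\eps\gamma}$, and then close the loop with \Cref{fact:normalization}. This keeps the theorem's hypothesis purely deterministic and exactly as stated. (The paper's footnote in the proof of \Cref{thm:PCA} acknowledges that, under your ``augmented hypothesis'' alternative, the statement would follow immediately as a direct corollary; your version is therefore correct for a slightly different statement, while the paper's ``mutatis mutandis'' argument is what actually proves \Cref{thm:PCA} as written.)
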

\begin{proof}
    \looseness=-1 For a matrix $\bA\in \R^{d\times d}$ and set $S \subset \R^d$, we use $\bA [S]$ to denote the set $\{\bA x: x \in S\}$.
    Let $S$ be the set after the local perturbations of $S_0$ (as per \Cref{def:strongWass-whitenned}).
    It suffices to show that $S$ contains a subset $S' \subset S$ such that $|S'|\geq (1-\eps)$ and $S'$ is $(\eps,\widetilde \gamma)$-PCA stable with respect to $\vec \Sigma$ for $\tilde\gamma \lesssim \gamma + \overline{\rho}^2/\eps$.
    By \Cref{fact:normalization}, it suffices to show that the whitened data $\vec \Sigma^{-1/2} [S]$ contains a large subset $S'$ that is $(\eps,\tilde \gamma)$-PCA-stable
    with respect to $\bI$. 
    Leveraging the connections between PCA-stability and the usual stability (\Cref{def:condition}), it suffices to show that $S'$ satisfies the conditions pertaining to the second moment of $(\eps,\sqrt{\eps \gamma} + \overline \rho)$-stability (with respect to $\mu=0$).
    The existence of a large $S'$ with the desired stability can be shown by following the proof in \Cref{lem:existence-of-stability-and-W_1-and-W-2-new} mutatis mutandis for $k=1$ and $\delta=\sqrt{\eps \gamma}$.%
    \footnote{In fact, if we make the stronger assumption in the theorem that $\vec \Sigma^{-1/2} [S_0]$ is $(\eps,\delta,1)$-generalized stable (as opposed to PCA stable), then the desired conclusion follows as a direct corollary from \Cref{lem:existence-of-stability-and-W_1-and-W-2-new} and \Cref{fact:normalization}.}

\end{proof}

Finally, we briefly mention how to generalize \Cref{thm:PCA} to $k$-robust PCA for $k>1$.
Observe that in the proof above,  we have shown that $S$ contains a large subset that is $(\eps, \widetilde{\gamma})$-PCA-stable for $\widetilde{\gamma} \lesssim \gamma + \overline{\rho}^2/\eps$.
Generalization to $k>1$ then follows directly from \cite[Corollary 3]{JamKLPPT24}.

\section{ Sum-of-squares Based Algorithm: Proof of \Cref{thm:sos-mean}}
\label{sec:sos}

In this section, we prove the result on mean estimation under the combined contamination model for distributions with certifiably bounded moments in the sum-of-squares (SoS) proof system. We show that the approach of \cite{KotSS18,HopLi18} extends to the contamination model of this paper.

We refer the reader to \cite{BarSte16-sos-notes,FleKP19-sos} for the necessary definitions of the terms such as degree-$d$ SoS proofs and pseudoexpectations. 
We list only a few basic facts that we use and refer the reader to the aforementioned references for the full background.

\begin{fact}[Cauchy-Schwarz for Pseudoexpectations]\label{fact:CS_pseudo_exp}
	Let $f,g$ be polynomials of degree at most $t$. Then, for any degree-$2t$ pseudoexpectation $\pE$,
	$\pE[fg] \leq \sqrt{\pE [f^2]} \sqrt{\pE[g^2]}$.
	Consequently, for every squared polynomial $p$ of degree $t$, and $k$ a power of two, 
	$\pE[p^k] \geq  (\pE[p])^k$ for every $\pE$ of degree-$2tk$.
	\label{fact:pseudo-expectation-cauchy-schwarz}
\end{fact}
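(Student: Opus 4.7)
My plan is to prove both parts using only linearity of $\pE$ together with its nonnegativity on squared polynomials of suitable degree, mirroring the classical proofs for genuine expectations. For the Cauchy--Schwarz inequality, I would consider the polynomial $(af + bg)^2 = a^2 f^2 + 2ab\,fg + b^2 g^2$ for arbitrary real scalars $a,b$. Since $f,g$ have degree at most $t$, this polynomial has degree at most $2t$ and is itself a square, so the degree-$2t$ pseudoexpectation $\pE$ satisfies
\[
a^2 \pE[f^2] + 2ab\,\pE[fg] + b^2 \pE[g^2] \;=\; \pE\bigl[(af+bg)^2\bigr] \;\geq\; 0
\]
for all $a,b \in \R$. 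This quadratic form in $(a,b)$ being positive semidefinite forces its discriminant to be nonpositive, yielding $(\pE[fg])^2 \leq \pE[f^2] \cdot \pE[g^2]$; taking square roots (and handling the degenerate case $\pE[f^2]=0$ by a standard limiting argument in $b$) gives the first claim.

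For the second part, I would iterate the inequality just established. Writing $p = q^2$ with $q$ of degree $t/2$, for every power of two $2^j \leq k$ the polynomial $p^{2^j} = q^{2^{j+1}} = (q^{2^j})^2$ is itself a square, so $\pE[p^{2^j}] \geq 0$ whenever $\pE$ has sufficient degree. Applying Cauchy--Schwarz with $f := p^{k/2}$ and $g := 1$ then yields $\pE[p^{k/2}] \leq \sqrt{\pE[p^k]}\sqrt{\pE[1]} = \sqrt{\pE[p^k]}$, i.e.\ $(\pE[p^{k/2}])^2 \leq \pE[p^k]$. Telescoping this doubling step $\log_2 k$ times, starting from $(\pE[p])^2 \leq \pE[p^2]$ and squaring through nonnegative intermediate quantities at each stage, produces the chain
\[
(\pE[p])^k \;\leq\; (\pE[p^2])^{k/2} \;\leq\; (\pE[p^4])^{k/4} \;\leq\; \cdots \;\leq\; (\pE[p^{k/2}])^2 \;\leq\; \pE[p^k].
\]

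I do not foresee any substantive obstacle beyond routine degree bookkeeping. The deepest invocation of Cauchy--Schwarz uses $f = p^{k/2}$ whose square $p^k$ has degree $tk$, and the intermediate squared polynomials $(q^{2^j})^2 = p^{2^j}$ have degree $t\cdot 2^j \leq tk$; the stated degree-$2tk$ pseudoexpectation therefore accommodates every quantity appearing in the telescoping argument with room to spare. The only subtle point is ensuring that squaring preserves inequalities at each step, which is precisely why the hypothesis that $p$ is a square is used.
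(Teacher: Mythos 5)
Your proof is correct. Note that the paper itself states this as a \emph{Fact} with no proof, deferring to the SoS references \cite{BarSte16-sos-notes,FleKP19-sos}, so there is no paper proof to compare against; your argument — the discriminant of the PSD quadratic form $(a,b)\mapsto\pE[(af+bg)^2]$ for Cauchy--Schwarz, followed by the doubling chain $(\pE[p])^k\leq(\pE[p^2])^{k/2}\leq\cdots\leq\pE[p^k]$ obtained by repeatedly applying Cauchy--Schwarz with $g=1$ and preserving inequalities under raising nonnegative quantities to integer powers — is the standard one and is complete. Your degree accounting is actually conservative: the telescoping only needs $\pE$ to have degree $tk$ (the square $p^k$ has degree $tk$ and $p=q^2$ with $\deg q=t/2$), so the stated hypothesis of degree $2tk$ gives room to spare, exactly as you observe.
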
 

\begin{fact}[SoS Triangle Inequality]\label{fact:sos-triangle}
	If $k$ is a power of two, 
	$\sststile{k}{a_1, a_2, \ldots, a_n} \left\{ \left(\sum_i a_i \right)^k \leq n^k \left(\sum_i a_i^k\right) \right\}.$ 
\end{fact}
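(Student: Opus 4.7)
The plan is to prove the inequality by induction on $m$ where $k=2^m$, with the main work being to show that the SoS proof can be obtained from repeated squaring together with the standard degree-2 Cauchy--Schwarz identity. Writing $T_k := (\sum_i a_i)^k$ and $U_k := \sum_i a_i^k$, the goal is to exhibit, for each $k = 2^m$, an SoS certificate of degree $k$ for $n^k U_k - T_k \geq 0$. I will in fact establish the tighter bound $n^{k-1} U_k - T_k \geq 0$, which trivially implies the stated one since $n \geq 1$.

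For the base case $m=1$ ($k=2$) the direct identity
\[
n \sum_{i=1}^n a_i^2 \;-\; \Bigl(\sum_{i=1}^n a_i\Bigr)^2 \;=\; \sum_{1 \leq i < j \leq n}(a_i-a_j)^2
\]
is a sum of squares of degree-$1$ polynomials, which supplies a degree-$2$ SoS proof. I will also record the elementary but crucial closure property: if $P = \sum_\alpha f_\alpha^2$ and $Q = \sum_\beta g_\beta^2$ are SoS, then so is $PQ = \sum_{\alpha,\beta}(f_\alpha g_\beta)^2$, and the witnesses have degree $\deg f_\alpha + \deg g_\beta$.

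For the inductive step, assume the claim at $k/2$, i.e.\ $n^{k/2-1} U_{k/2} - T_{k/2}$ is SoS of degree $k/2$. For $m \geq 2$ both $T_{k/2} = ((\sum_i a_i)^{k/4})^2$ and $U_{k/2} = \sum_i (a_i^{k/4})^2$ are themselves SoS of degree $k/2$, hence so is their sum $n^{k/2-1} U_{k/2} + T_{k/2}$. Multiplying this SoS polynomial by the SoS polynomial $n^{k/2-1} U_{k/2} - T_{k/2}$ and using the closure property above yields that
\[
n^{k-2} U_{k/2}^2 - T_k \;=\; \bigl(n^{k/2-1} U_{k/2} - T_{k/2}\bigr)\bigl(n^{k/2-1} U_{k/2} + T_{k/2}\bigr)
\]
is SoS of degree $k$. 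Separately, applying the base-case identity to the substitutions $b_i := a_i^{k/2}$ shows
\[
n U_k - U_{k/2}^2 \;=\; \sum_{i<j}\bigl(a_i^{k/2}-a_j^{k/2}\bigr)^2,
\]
which is a degree-$k$ SoS expression in the $a_i$'s. Scaling this by $n^{k-2}$ and adding to the previous SoS polynomial gives a degree-$k$ SoS certificate for $n^{k-1} U_k - T_k$, closing the induction.

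The only delicate point is bookkeeping the degrees of the SoS witnesses through the squaring step so as not to exceed $k$; the closure lemma handles this cleanly because the two factors involved have SoS decompositions of degree $k/2$, so their product is SoS with witnesses of degree $k/2$, i.e.\ a degree-$k$ proof. The hypothesis that $k$ is a power of two enters exactly here: it guarantees $k/2$ is even (for $m \geq 2$) so that $T_{k/2}$ itself is a square, which is what makes $n^{k/2-1}U_{k/2} + T_{k/2}$ SoS and allows the squaring trick to go through.
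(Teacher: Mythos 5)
The paper states \Cref{fact:sos-triangle} as a cited background fact, referencing standard SoS sources, and supplies no proof of its own; so there is no in-paper argument to compare against, and the only question is whether your self-contained proof is correct. It is. The base-case identity $n\sum_i a_i^2 - \bigl(\sum_i a_i\bigr)^2 = \sum_{i<j}(a_i-a_j)^2$ is right, the SoS closure-under-products lemma and the accompanying degree bookkeeping (degree-$k/4$ witnesses in each factor combine to degree-$k/2$ witnesses in the product, hence a degree-$k$ certificate) are accurate, and the telescoping
\[
\bigl(n^{k-2}U_{k/2}^2 - T_k\bigr) \;+\; n^{k-2}\bigl(nU_k - U_{k/2}^2\bigr) \;=\; n^{k-1}U_k - T_k
\]
checks out. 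You also correctly pinpoint where the power-of-two hypothesis enters: for $m\ge 2$ it makes $T_{k/2}$ a perfect square (and $U_{k/2}$ a sum of perfect squares), so that $n^{k/2-1}U_{k/2}+T_{k/2}$ is SoS and the difference-of-squares factorization of $n^{k-2}U_{k/2}^2-T_k$ is legitimate. One small point worth making explicit rather than dismissing as ``trivial since $n\ge 1$'': the passage from your sharper bound $n^{k-1}U_k - T_k$ to the stated $n^k U_k - T_k$ itself needs an SoS certificate, namely that $(n^k-n^{k-1})U_k = n^{k-1}(n-1)\sum_i\bigl(a_i^{k/2}\bigr)^2$ is SoS of degree $k$ (using $k$ even). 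With that remark added, the proof is complete and, in fact, yields the slightly tighter constant $n^{k-1}$.
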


\begin{fact}[SoS Cauchy-Schwartz and H\"older]\label{fact:sos-holder}
	Let $f_1,g_1,  \ldots, f_n, g_n$ be indeterminates
	over $\R$. Then, 
	\begin{align*}
	\sststile{2}{f_1, \ldots, f_n,g_1, \ldots, g_n} \Set{ \left(\frac{1}{n} \sum_{i=1}^n f_i g_i \right)^{2} \leq \left(\frac{1}{n} \sum_{i=1}^n f_i^2\right) \left(\frac{1}{n} \sum_{i=1}^n g_i^2\right) } \;.
	\end{align*} 
	Moveover, if $p_1, \dots, p_n$ are indeterminates, for any $t \in \Z_+$ that is a power of $2$, we have that
	\begin{align*}
	\{w_i^2 = w_i \mid i \in [n] \} \sststile{O(t)}{p_1, \dots, p_n} \left( \sum_i w_i p_i\right)^t &\leq \left( \sum_{i \in [n]} w_i\right)^{t-1} \cdot \sum_{i \in [n]} p_i^t \quad \text{ and} \\
	\{w_i^2 = w_i \mid i \in [n] \} \sststile{O(t)}{p_1, \dots, p_n} \left( \sum_i w_i p_i\right)^t &\leq \left( \sum_{i \in [n]} w_i\right)^{t-1} \cdot \sum_{i \in [n]} w_ip_i^t \;.
	\end{align*} 
\end{fact}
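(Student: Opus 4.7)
I would prove the three inequalities in order, using only the Lagrange identity for Cauchy--Schwarz, the Boolean axiom $w_i^2 = w_i$, and induction on $\log_2 t$ for the two Hölder-type bounds.

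For the degree-$2$ SoS Cauchy--Schwarz claim, the plan is to invoke the Lagrange identity
\begin{align*}
    \Bigparen{\tfrac{1}{n}\littlesum_i f_i^2}\Bigparen{\tfrac{1}{n}\littlesum_j g_j^2} - \Bigparen{\tfrac{1}{n}\littlesum_i f_i g_i}^2 \;=\; \tfrac{1}{2n^2} \littlesum_{i,j} (f_i g_j - f_j g_i)^2,
\end{align*}
which manifestly exhibits the difference as a sum of squares of bilinear polynomials in $\{f_i, g_i\}$. Since each squared polynomial has degree $2$, this directly yields the claimed SoS certificate in the paper's convention.

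For the second Hölder-type bound $(\sum_i w_i p_i)^t \leq (\sum_i w_i)^{t-1} \sum_i w_i p_i^t$, I would induct on $\log_2 t$. The base case $t=2$ follows by rewriting $w_i = w_i^2$ and applying SoS Cauchy--Schwarz with $f_i := w_i$, $g_i := w_i p_i$:
\begin{align*}
    \Bigparen{\littlesum_i w_i p_i}^2 \;=\; \Bigparen{\littlesum_i w_i \cdot (w_i p_i)}^2 \;\leq\; \Bigparen{\littlesum_i w_i^2}\Bigparen{\littlesum_i w_i^2 p_i^2} \;=\; \Bigparen{\littlesum_i w_i}\Bigparen{\littlesum_i w_i p_i^2},
\end{align*}
where the last equality uses $w_i^2 = w_i$ twice. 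For the inductive step, I would apply the hypothesis at exponent $t/2$ and square both sides to obtain $(\sum w_i p_i)^t \leq (\sum w_i)^{t-2}(\sum w_i p_i^{t/2})^2$; then re-invoke the base case with $p_i^{t/2}$ in place of $p_i$ to conclude $(\sum w_i p_i^{t/2})^2 \leq (\sum w_i)(\sum w_i p_i^t)$, and combine. The first Hölder-type bound $(\sum_i w_i p_i)^t \leq (\sum_i w_i)^{t-1} \sum_i p_i^t$ then follows from the second by observing that $\sum_i p_i^t - \sum_i w_i p_i^t = \sum_i (1-w_i) p_i^t$ is SoS modulo the Boolean axioms, since $p_i^t$ is a perfect square (for $t$ a power of $2$) and $(1-w_i)^2 = 1 - 2w_i + w_i^2 \equiv 1 - w_i$ certifies that $1-w_i$ itself is SoS modulo $w_i^2 = w_i$; the product of two SoS polynomials is SoS.

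The main obstacle will be verifying that squaring an SoS inequality in the inductive step yields a valid SoS certificate of the correct degree. For this, I would use the identity $B^2 - A^2 = (B-A)(B+A)$, which is SoS provided both $B - A$ and $B + A$ are SoS modulo the axioms. The former is the inductive hypothesis; the latter holds because, for $t \geq 4$ a power of $2$, both $A = (\sum_i w_i p_i)^{t/2}$ and $B = (\sum_i w_i)^{t/2-1}\bigparen{\sum_i w_i p_i^{t/2}}$ are SoS modulo Boolean axioms---the first because $t/2$ is even, and the second because $w_i p_i^{t/2} = w_i^{t/2} p_i^{t/2} = (w_i p_i)^{t/2}$ via $w_i = w_i^{t/2}$ is a perfect square, so the product of three SoS factors is SoS. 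Tracking degrees through the recursion (each level adds $O(1)$ to the outer degree beyond the $t$-th power) yields a final SoS certificate of degree $O(t)$, as claimed.
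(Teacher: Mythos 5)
\Cref{fact:sos-holder} is stated in the paper without proof---it is a standard SoS fact whose background is deferred to the cited references---so there is no in-paper argument to compare against. Your proof is correct and follows the canonical route: the Lagrange identity $\bigl(\sum_i f_i^2\bigr)\bigl(\sum_j g_j^2\bigr) - \bigl(\sum_i f_ig_i\bigr)^2 = \tfrac{1}{2}\sum_{i,j}(f_ig_j - f_jg_i)^2$ exhibits the degree-$2$ certificate for Cauchy--Schwarz directly, the $t=2$ H\"older base case is Cauchy--Schwarz specialized to $f_i = w_i$, $g_i = w_ip_i$ followed by Boolean reductions $w_i^2 = w_i$, and the doubling recursion handles general powers of two. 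The genuinely delicate point---that squaring an SoS inequality $B - A \ge 0$ gives a certificate for $B^2 - A^2 \ge 0$ only because $B + A$ is itself SoS modulo the Boolean axioms, which in turn needs $t/2$ to be even and hence $t \ge 4$---is identified and handled correctly. The passage from the second H\"older bound to the first via the SoS-ness of $\sum_i(1-w_i)p_i^t$ modulo the Boolean axioms, together with the closure of the Boolean ideal under products, is also sound, as is the $O(t)$ degree count accumulated over the $\log_2 t$ recursion levels.
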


\begin{definition}[Certifiably Bounded Moments]

For an even $t\in \N$, we say a distribution $P$ with mean $\mu_P$ over $\R^d$ has $(t,M)$-certifiably bounded moments if the polynomial inequality $p(v) \geq 0$ for $p(v):= M^t - \E_{X\sim P}[\langle v, X - \mu_P\rangle^t]$
has an SoS proof of degree $O(t)$ under the assumption $\|v\|_2^2=1$. If $S$ is a set of points in $\R^d$, we say that $S$ has $(t,M)$-certifiably bounded moments if the uniform distribution over $S$ satisfies the previous definition.
\end{definition}
Many distribution families, such as rotationally invariant distributions, $t$-wise product distributions with bounded moments, and Poincare distributions are known to be certifiably bounded; see, for example, \cite{KotSS18}.
\begin{theorem}
\label{thm:robust-mean-sos}
Let $\eps \in (0,c)$ for a sufficiently small absolute constant $c$.
Let $S$ be a set of $n$ points in $\R^d$ with (unknown) mean $\mu$.
Further assume that the uniform distribution on $S$  
has $(t,M)$-certifiably bounded moments for $t$ being a power of $2$.
Let $T$ be the version of the dataset $S$ after introducing $\eps$-fraction of global outliers and $\rho$-strong local corruptions (as per \Cref{def:combined-outlier-local-intro}).
Then, there exists an algorithm that takes as input $T, \rho, \epsilon, M $, and $t$,
runs in time $\poly(n^t, d^{t^2})$,
and returns $\widehat{\mu}$ such that with probability at least $0.9$, it holds $\|\widehat{\mu}-\mu\|_2 \lesssim M \epsilon^{1-1/t} + \rho$.
\end{theorem}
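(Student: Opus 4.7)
The plan is to reduce the combined contamination model to the standard outlier-only setting by exhibiting, within the locally perturbed inliers, a large subset that retains certifiably bounded $t$-th moments around $\mu$, after which the SoS-based robust mean estimation algorithm of \cite{KotSS18,HopLi18} will deliver the desired error.

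Write $\widetilde S := \{x_i+\Delta_i\}_{i=1}^n$ for the set of inliers after local perturbation (so $T$ is an $\eps$-global corruption of $\widetilde S$ in the sense of \Cref{def:outliers}), where the perturbations satisfy $\sup_{\|v\|_2=1}\tfrac{1}{n}\sum_i |v^\top \Delta_i|\le \rho$. I would first generalize \Cref{prop:subset-covariance-general-k} from second to $t$-th moments: exhibit $\mathcal I\subseteq [n]$ with $|\mathcal I|\ge (1-\eps)n$ for which the polynomial inequality $\tfrac{1}{|\mathcal I|}\sum_{i\in \mathcal I}(v^\top \Delta_i)^t \lesssim \rho^t/\eps^{t-1}$ on $\|v\|_2^2=1$ admits a degree-$O(t)$ SoS certificate. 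The non-SoS existence statement would follow the template of \Cref{prop:subset-covariance-general-k}: relax $\mathcal I$ to weights $w$ in the capped simplex, swap the $\min_w$ and $\max_v$ via min--max duality after convexifying the maximization to the convex hull of $\{v^{\otimes t}:\|v\|_2=1\}$, apply a direction-dependent Markov truncation, and round back with a degree-$t$ analog of the Gaussian rounding scheme of \Cref{lem:average_roots}. Strengthening this to an SoS certificate will require that the resulting $\mathcal I$ has its empirical $t$-th moment tensor $\tfrac{1}{|\mathcal I|}\sum_{i\in \mathcal I}\Delta_i^{\otimes t}$ admit a degree-$O(t)$ SoS upper bound on its injective norm.

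Given such $\mathcal I$, I would set $S'':=\{x_i+\Delta_i:i\in\mathcal I\}$ and use the SoS triangle inequality (\Cref{fact:sos-triangle}) to get, on $\|v\|_2^2=1$, the degree-$O(t)$ identity $(v^\top(x_i+\Delta_i-\mu))^t\le 2^{t-1}\bigl((v^\top(x_i-\mu))^t+(v^\top\Delta_i)^t\bigr)$. Averaging over $i\in\mathcal I$ and combining the $(t,M)$-certifiable moment bound of $S$ restricted to $\mathcal I$ (which loses only a factor $|S|/|\mathcal I|\le 1/(1-\eps)$) with the perturbation bound from the previous paragraph then yields a degree-$O(t)$ SoS proof that $S''$ has $(t,\widetilde M)$-certifiably bounded moments around $\mu$ with $\widetilde M\lesssim M+\rho/\eps^{1-1/t}$. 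Since the global adversary removes at most $\eps n$ points, $|T\cap S''|\ge (1-2\eps)n$, so $T$ is an $O(\eps)$-global corruption of $S''$. Invoking the Kothari--Steinhardt--Steurer / Hopkins--Li algorithm on $T$ with parameters $O(\eps)$, $\widetilde M$, and $t$ runs in $\poly(n^t,d^{t^2})$ time and returns $\widehat\mu$ satisfying $\|\widehat\mu-\mu\|_2\lesssim \widetilde M\,\eps^{1-1/t}\lesssim M\eps^{1-1/t}+\rho$, as required.

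The hard part will be the SoS-certifiable version of \Cref{prop:subset-covariance-general-k} for higher moments: the bilinear min--max duality and PSD-matrix rounding that underpin the $t=2$ argument do not transfer directly to $t$-tensors, and ensuring that the selected $\mathcal I$ admits an SoS upper bound on its empirical $t$-th moment tensor (rather than merely a pointwise-in-$v$ bound) is the principal new technical ingredient required beyond the tools of \Cref{sec:rounding} and \Cref{sec:stability-under-local}.
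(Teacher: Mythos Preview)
Your plan diverges from the paper's proof in a fundamental way, and the divergence lands you on a genuinely hard obstacle that the paper sidesteps entirely.

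You try to reduce to the black-box SoS algorithm by showing that a large subset $S''$ of the locally perturbed inliers has $(t,\widetilde M)$-certifiably bounded moments about $\mu$. As you correctly diagnose, this forces you to establish an SoS-certifiable $t$-th moment bound $\tfrac{1}{|\mathcal I|}\sum_{i\in\mathcal I}(v^\top\Delta_i)^t\lesssim \rho^t/\eps^{t-1}$ on the perturbations. There are two real problems here. First, even the \emph{pointwise} (non-SoS) existence of such a subset is not established anywhere: the min--max/rounding machinery of \Cref{prop:subset-covariance-general-k} and \Cref{lem:average_roots} is intrinsically a second-moment argument (it lives on the PSD cone $\cM_k$), and extending it to the convex hull of $\{v^{\otimes t}\}$ is exactly the kind of tensor-rounding question that does not have a clean answer. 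Second, and more serious, even if the pointwise bound held, there is no reason for the empirical $t$-th moment tensor of $\{\Delta_i\}_{i\in\mathcal I}$ to admit a degree-$O(t)$ SoS certificate; this is only automatic for $t=2$, where nonnegativity of a quadratic is equivalent to a PSD (hence SoS) condition.

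The paper avoids both issues by \emph{not} trying to show the perturbed inliers have certifiable $t$-th moments. Instead it writes a modified SoS program in which the candidate points are explicitly decomposed as $\widetilde y_i=\widetilde x_i+\widetilde z_i$, with the constraint that $\widetilde x_i$ have certifiable $t$-th moments and $\widetilde z_i$ have certifiable \emph{second} moment $O(\rho^2/\eps)$. Feasibility then only requires the $k=1$ case of \Cref{prop:subset-covariance-general-k} (i.e., \Cref{prop:subset-covariance-intro}) applied to the true $\Delta_i$'s, and the resulting covariance bound is automatically SoS because it is degree two. In the correctness analysis, the perturbation contribution is handled with a single SoS Cauchy--Schwarz/H\"older step using only the second-moment constraint, not a $t$-th moment one. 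So the ``hard part'' you flag is simply not needed: the right move is to change the program, not to strengthen the structural result on the perturbations.
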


Let $S=\{x_1,\dots,x_n\}$ be the original dataset of inliers.
Let $S'=\{x_1',\dots,x_n'\}$ with $x_i' = x_i + z_i$ such that $\forall v\in \cS^{d-1}$: $\E_{i \sim [n]}[|\langle v,z_i\rangle|] \leq \rho $ be the dataset after the local corruptions (we use the notation $\E_{i \sim [n]}$ to denote taking the average over $i\in [n]$, for example, $\E_{i \sim [n]}[x_i] = \tfrac{1}{n}\sum_{i \in [n]} x_i$).
Finally, let $T= \{y_1,\dots,y_n\}$ be the final dataset after the global corruptions, i.e., $T$ is such that for all but $\eps n$ of the points we have $x_i'=y_i$.
Let $\cI \subset [n]$ denote the set of indices such that $x_i'=y_i$.

The algorithm is the following: First, it finds a pseudoexpectation $\pE$ over (i) $d$-dimensional variables $(\widetilde{y}_i)_{i=1}^n, (\widetilde{x}_i)_{i=1}^n, (\widetilde{z}_i)_{i=1}^n$, $\widetilde{\mu}$, (ii) scalar variables $(\widetilde{w}_i)_{i=1}^n$, and (iii) appropriate auxiliary variables,\footnote{These are needed for encoding the constraints \Cref{it:sos-6,it:sos-7}. We refer the reader to \cite{HopLi18,KotSte17} for further details on how to encode these constraints using auxiliary variables.
} under the constraints that $\pE$ satisfies the following set of polynomial (in)equalities for a large enough absolute constant $C$:
\begin{enumerate}[label=(I.\roman*)]
  \item \label[ineq]{it:sos-1} For all $i\in [n]$: $\widetilde{w}_i^2=\widetilde{w}_i$.
  \item \label[ineq]{it:sos-2} For all $i\in [n]$: $\widetilde{w}_i \widetilde{y}_i=\widetilde{w}_i y_i$.
  \item \label[ineq]{it:sos-3} $\sum_{i=1}^n \widetilde{w}_i\geq (1-2\eps)n$.
  \item \label[ineq]{it:sos-4} For all $i\in [n]$: $\widetilde{y}_i =  \widetilde{x}_i + \widetilde{z}_i$.
  \item \label[ineq]{it:sos-5} $\widetilde{\mu} = \frac{1}{n}\sum_{i=1}^n \widetilde{y}_i$.
  \item \label[ineq]{it:sos-6} There exists an SoS proof in the variable $v$ of the  inequality
$\E_{i \sim [n]}[ \langle v, \widetilde{x}_i - \widetilde{\mu} \rangle^t] \leq C^t\left(M^t + \rho^t\right)$ under the constraints $\|v\|_2^2=1$.
  \item \label[ineq]{it:sos-7} There exists an SoS proof in the variable $v$ of the  inequality
$\E_{i \sim [n]} [\langle v, \widetilde{z}_i \rangle^2]  \leq C\rho^2/\eps$ under the constraints $\|v\|_2^2=1$.
\end{enumerate}

\noindent Finally, the algorithm outputs $\widehat{\mu} = \pE[\widetilde{\mu}]$.

\subsection{Proof of \Cref{thm:robust-mean-sos}}

If $z_i := x_i' - x_i$ denote the local perturbations, then by \Cref{prop:subset-covariance-intro}, we know that there exists a subset of indices $\cI' \subset [n]$ with $|\cI'| \geq (1-\eps)n$ such that $\E_{i \sim \cI'}[\langle v,z_i \rangle^2] \lesssim \rho^2/\eps$. Without loss of generality, we can treat the remaining points $i \in [n] \setminus \cI'$ as outliers. This is why we use $1-2\eps$ in the right hand side of \Cref{it:sos-3}. Thus, throughout this proof, we assume that $\E_{i \sim [n]}[\langle v,z_i \rangle^2] \lesssim \rho^2/\eps$ and that we have $2\eps$ of outliers (i.e., the set $\cI$ of indices $i \in [n]$ with $x_i'=y_i$ has size $|\cI| \geq (1-2\eps)n$).

\paragraph{Satisfiability.}
We first argue that the system of polynomial inequalities above is satisfiable.
Recall the notation for $S,x_i,x_i',z_i,y_i,T$ provided after the statement of \Cref{thm:robust-mean-sos}. 
Let $\mu = \tfrac{1}{n}\sum_{i \in [n]}x_i$.
To show satisfiability, we make the following choice of the variables: $\tilde x_i = x_i$ for $i \in \cI$ and $\tilde x_i = \mu$ otherwise, $\tilde y_i =  y_i$ for $i \in \cI $ and $y_i = \mu$ otherwise, $\tilde w_i = \1_{i \in \cI }$, and we choose $\tilde z_i= \tilde y_i- \tilde x_i$ (recall that $\cI$ is the set of indices $i$ such that $x_i'=y_i$).

Under these choices, \Cref{it:sos-2,it:sos-4} are satisfied trivially.
Moreover, the $\widetilde{w}_i$'s satisfy the  \Cref{it:sos-1,it:sos-3} because $|\cI| \geq (1-2\eps)n$.

We now argue that the \Cref{it:sos-6} is also satisfiable.
First, define $\tilde \mu = \tfrac{1}{n}\sum_{i \in [n]} \tilde y_i$ and $\tilde \mu' = \tfrac{1}{n}\sum_{i \in [n]} \tilde x_i$, and observe that 
under the above choices of $\widetilde{y}_i$ and $\widetilde{x}_i$,
we see that $\|\widetilde{\mu} - \widetilde{\mu}'\|_2 \lesssim \rho$; this is because $\|\widetilde{\mu} - \widetilde{\mu}'\|_2 \leq \|\sum_{i \in [n]} z_i/n\| \leq \rho$.
Moreover, by SoS Cauchy-Schwarz inequality,  there exists an $O(t)$-degree SoS proof of the following inequality in the variable $v$ under the constraint $\|v\|_2^2=1$: $\langle v, \widetilde{\mu}' - \widetilde{\mu}\rangle^{t} \lesssim  \|\widetilde{\mu}' - \widetilde{\mu}\|_2^{t} \leq \rho^t$.
Applying the SoS triangle inequality \Cref{fact:sos-triangle}, we obtain that the following inequality has an $O(t)$-degree sum of squares proof in the variable $v$:
\begin{align*}
  \E_{i \sim [n]}[ \langle v, \widetilde{x}_i - \tilde \mu\rangle^t]
  &\leq 2^t\E_{i \sim [n]}[ \langle v, \widetilde{x}_i - \tilde \mu' \rangle^t]
  +  2^t\langle  v, \tilde \mu' -\tilde \mu \rangle^t
  \lesssim 2^t(M^t + \rho^t) \;.
\end{align*}

Thus, \Cref{it:sos-6} is satisfiable.

By \Cref{prop:subset-covariance-intro}, $\widetilde{z}_i$ have bounded covariance, which is equivalent to a degree two polynomial inequality in the variable $v$, and since it is a degree-two polynomial, it also has a  sum of squares proof in the variable $v$, 
satisfying \Cref{it:sos-7}.\footnote{Formally, we can replace this constraint by an equivalent constraint,  $\E_{i \sim [n]} \widetilde{z}_i\widetilde{z}_i^\top = (\rho^2/\eps) I - BB^\top$, for some auxiliary matrix variable $B \in \R^{d \times d}$.}
Therefore, all the constraints in our program are satisfied by this construction.

\paragraph{Correctness.}
Fix a  direction $v \in \cS^{d-1}$. Let $\mu = \tfrac{1}{n}\sum_{i \in [n]} x_i$.
We will show that $\langle \widehat{\mu} - \mu, v\rangle = \pE[\langle \widetilde{\mu} - \mu, v\rangle] \leq \tau$ for $\tau = O(M\eps^{1-1/t})$ for any pseudoexpectation $\pE$ satisfying the program  \Cref{it:sos-1,it:sos-2,it:sos-3,it:sos-4,it:sos-5,it:sos-6,it:sos-7}.
By duality between pseudoexpectations and SoS proofs, it suffices to show that there is an SoS proof of $\langle \widetilde{\mu} - \mu, v\rangle \leq \tau$ under the polynomial constraints above.

Let $r_i = \1_{i \in \cI}$ be the locally corrupted inliers
and $W_i=\widetilde{w}_i r_i$ be the variables corresponding to the surviving inliers  ``selected'' by the program.
Let $W_i' = (1-W_i)$.
Then there is an SoS proof of $(W_i')^2=W_i'$ and $\sum_i W_i' \leq 3\eps n$ (with proof similar to Claim 4.3 in \cite{DiaKKPP22-colt}).
Therefore, under the constraints above we have the following (recall that we use the notation $\E_{i \sim [n]}[x_i]$ to denote the average $\tfrac{1}{n}\sum_{i \in [n]} x_i$):
\begin{align*}
\langle v, \widetilde{\mu} - \mu\rangle^{2t}
&=\left(\E_{i \sim [n]}\left[\langle v, \widetilde{y}_i - x_i\rangle\right]\right)^{2t}
=\left(\E_{i \sim [n]}\left[\langle v, \widetilde{y}_i - x_i'\rangle \right]  + \E_{i \sim [n]} \left[ \langle v, z_i\rangle\right]\right)^{2t}\\ 
&\leq 2^{2t }\left(\E_{i \sim [n]}\left[\langle v, \widetilde{y}_i - x_i'\rangle\right]\right)^{2t} + 2^{2t} \left(\E_{i \sim [n]} \left[\langle v, z_i\rangle\right]\right)^{2t} \\
&=2^{2t}\left(\E_{i \sim [n]}\left[W_i'\langle v, \widetilde{y}_i-x_i' \rangle \right]\right)^{2t} + 2^{2t}\left(\E_{i \sim [n]} \left[\langle v, z_i\rangle\right]\right)^{2t} \;.
\end{align*}
where we used the SoS triangle inequality (\Cref{fact:sos-triangle}) and that $\widetilde{y}_i W_i = x_i' W_i$.
The last term, which does not depend on the program variables, is at most $(2\rho)^{2t}$ by assumption.
We now focus on the first term. By using \Cref{it:sos-4} and another application of the SoS triangle inequality:
\begin{align*}
\left(\E_{i \sim [n]}\left[W_i'\langle v, \widetilde{y}_i-x_i' \rangle \right]\right)^{2t} 
&= \left(\E_{i \sim [n]}\left[W_i'\langle v, \widetilde{x}_i + \widetilde{z}_i -x_i' \rangle \right]\right)^{2t} \\
&\leq 2^{2t}\left(\E_{i \sim [n]}\left[W_i'\langle v, \widetilde{x}_i - x_i\rangle\right]\right)^{2t} + 2^{2t}\left(\E_{i \sim [n]}\left[W_i'\langle v,  \widetilde{z}_i - z_i  \rangle \right]\right)^{2t} \;. \numberthis \label{eq:sos122}
\end{align*}

\noindent We shall use different assumptions on $\widetilde{x}_i$ and $\widetilde{z}_i$ to handle these terms differently.
For the first term, we use the SoS H\"older inequality (\Cref{fact:sos-holder}) and the constraints that ${W_i'}^2 = W_i$ with $\E_{i \sim [n]}W_i' \leq 3\eps$ (within the SoS proof system) to get the following:
\begin{align*}
    \left(\E_{i \sim [n]}\left[W_i'\langle v, \widetilde{x}_i - x_i\rangle \right]\right)^{2t}
    &\leq \left(\E_{i \in [n]} [W_i'] \right)^{2t-2}   \left(\E_{i \sim [n]}\left[\langle v, \widetilde{x}_i - x_i\rangle^{t}  \right]\right)^2\\
    &\leq (3\eps)^{2t-2} \left(\E_{i \sim [n]}\left[\langle v, \widetilde{x}_i - x_i\rangle^{t}  \right]\right)^2 \numberthis \label{eq:sos12}
\end{align*}
To control the right hand side above, we further have the following inequalities (in the SoS proof system):
Starting with the SoS triangle inequality, we obtain
\begin{align*}
    \left(\E_{i \sim [n]}\left[\langle v, \widetilde{x}_i - x_i\rangle^{t}  \right]\right)^2
    &\lesssim 2^{t}\left(\E_{i \sim [n]}\left[ \langle v, \widetilde{x}_i - \tilde\mu\rangle^{t} \right]^{2} + \langle v,\tilde \mu - \mu\rangle^{2t}
    + \E_{i \sim [n]}\left[ \langle v,  x_i - \mu \rangle^{t}\right]^{2} \right)\\
    &\lesssim 2^{t}\left( M^{2t} + \rho^{2t} + \langle v,\tilde \mu - \mu\rangle^{2t} + M^{2t}  \right) \;, \numberthis \label{eq:sos11}
\end{align*}
where we used \Cref{it:sos-6} and  the moment assumption on the inliers. Combining \eqref{eq:sos12} and \eqref{eq:sos11} we get that  $\E_{i \sim [n]}\left[(W_i'\langle v, \widetilde{x}_i - x_i\rangle)^{2t}\right]^2 \lesssim (C\eps)^{2t-2}\langle v,\tilde \mu - \mu\rangle^{2t} + (C\eps)^{2t}M^{2t} + (C\rho)^{2t}$.

We now move to the second term in \eqref{eq:sos122}. We apply the SoS H\"older inequality to get
\begin{align*}
\left(\E_{i \sim [n]}\left[W_i'\langle v,  \widetilde{z}_i - z_i  \rangle \right]\right)^2 &\leq 
\left(\E_{i \sim [n]}[W_i']\right)\E_{i \sim [n]}\left[\langle v,  \widetilde{z}_i - z_i  \rangle^2 \right]\\
&\lesssim \eps \left(\E_{i \sim [n]}\left[\langle v,  z_i   \rangle^2 \right] + \E_{i \sim [n]}\left[\langle v,  \widetilde{z}_i   \rangle^2 \right]\right) \\
&\lesssim \eps ( \rho^2/\eps + \rho^2/\eps) = O(\rho^2) \;.
\end{align*}
where the last line uses \Cref{it:sos-7} and our assumption for the inliers.
Combining everything, we have shown an SoS proof of
\begin{align*}
    \langle v, \widetilde{\mu} - \mu\rangle^{2t} \lesssim (C\eps)^{2t-2}\langle v,\tilde \mu - \mu\rangle^{2t} + (C\eps)^{2t-2}M^{2t} + (C\rho)^{2t} \;.
\end{align*}
for some constant absolute $C$.
By solving for $\langle v, \widetilde{\mu} - \mu\rangle^{2t}$, and using that $\eps < c$ for a sufficiently small constant, we get $ \langle v, \widetilde{\mu} - \mu\rangle^{2t} \lesssim  (C\eps)^{2t-2}M^{2t} + (C\rho)^{2t}$.
Finally, by SoS H\"older inequality, we have that $\langle v, \widetilde{\mu} - \mu\rangle \lesssim  (C\eps)^{1-1/t}M + \rho$.

\section*{Acknowledgments}
We thank Po-Ling Loh for helpful comments and discussion on prior work.

\printbibliography

\newpage
\appendix

\section{Additional Preliminaries}\label{sec:additional-prelims}

The statistical rates for stability (\Cref{def:stability1}) are well understood, up to logarithmic factors from the optimal~\cite{DiaKP20}. 

\begin{fact}[Stability rates for nice distribution families~\cite{DiaKP20}]\label{fact:stability-rates}
    Let $\cD$ be a family of distributions.
    Let $c$ be a small enough absolute constant.
    Fix a $D \in \cD$ and let $S_0$ be a set of $n$ samples drawn i.i.d. from  $D$. 
    For each of the cases below and $\eps + \frac{\log(1/\tau)}{n} \in (0,c)$, there exists an $S_1 \subset S_0$ with $|S_1| \geq (1-\eps)|S_0|$
    which is $(\eps,\delta)$-stable with probability at least $1-\tau$, for the following parameter $\delta$:
    \begin{itemize}
        \item If $\mathcal{D}$ is the family of isotropic subgaussian distributions, then $\delta \lesssim \eps \sqrt{\log(1/\eps)} + \sqrt{d/n} + \sqrt{\log(1/\tau)/n}$.
        \item If $\mathcal{D}$ is the family of distributions with isotropic covariance and bounded $k$-th moments, then $\delta \lesssim \eps^{1 - \frac{1}{k}} + \rho+ \sqrt{(d \log d)/n} + \sqrt{\log(1/\tau)/n}$.
        \item If $\mathcal{D}$ is the family of distributions with covariance $\Sigma\preceq I$, then
        $\delta \lesssim \sqrt{\eps} + \sqrt{(d \log d)/n} + \sqrt{\log(1/\tau)/n}$.
    \end{itemize}
\end{fact}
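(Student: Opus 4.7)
The plan is to separate the argument into (a) a \emph{deterministic} implication showing that stability follows from appropriate empirical moment/tail bounds, and (b) a \emph{probabilistic} step showing those empirical bounds hold on i.i.d.\ samples with the claimed rates. Throughout I would work with the centered samples and assume $\mu = 0$ without loss of generality, and I would normalize by $\vec\Sigma^{-1/2}$ in the last bullet so that all three cases reduce to showing the two stability conditions with $\bI$ as the reference covariance.

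For the deterministic reduction, the key observation is that both stability conditions (mean closeness and bounded covariance on every $(1-\eps)$-subset) are controlled by how much mass can be removed in any single direction. Concretely, if the empirical covariance satisfies $\|\vec\Sigma_{S_1} - \bI\|_\op \leq \delta_{\mathrm{cov}}^2$ and additionally the points are ``tail-controlled'' in the sense that $\sup_{\|v\|_2=1} \tfrac{1}{n}\sum_{i} |\langle v, x_i\rangle|^p \leq M_p$ for the appropriate $p$, then removing any $\eps n$ points shifts the mean by at most $\eps \cdot M_p^{1/p} \cdot \eps^{-1/p}$ by Hölder, and similarly shifts the covariance in operator norm by at most $M_p^{2/p}\cdot \eps^{1-2/p}$. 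Plugging in $p=2$ gives $\sqrt{\eps}$, $p=k$ gives $\eps^{1-1/k}$, and taking $p=\log(1/\eps)$ together with a subgaussian truncation argument gives $\eps\sqrt{\log(1/\eps)}$. The stability parameter is then the maximum of the mean-shift term and $\sqrt{\eps \cdot (\text{covariance bound})}$, which matches the asymptotic $\eps$-dependence advertised in each bullet.

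For the probabilistic step, I would first pass to a subset $S_1 \subseteq S_0$ of size $(1-\eps)n$ by \emph{truncating} the samples whose norm (in the relevant direction) is too large; a union bound / VC-type argument over all unit directions shows that only $O(\eps n)$ samples need to be discarded. Then I would invoke standard covariance concentration inequalities to control $\|\vec\Sigma_{S_1} - \bI\|_\op$ with the additive error $\sqrt{d/n} + \sqrt{\log(1/\tau)/n}$: for the subgaussian case this is Vershynin's bound, for bounded $k$-th moments this is a truncation-plus-matrix-Bernstein argument (which produces the extra $\sqrt{\log d}$), and for bounded covariance this is Adamczak's inequality (again losing $\sqrt{\log d}$). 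The mean-concentration term $\sqrt{d/n}$ follows from a standard covering argument on $\cS^{d-1}$.

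The main obstacle is the tight dependence on $\eps$ and on $\tau$ simultaneously: the naive truncation argument loses polylogarithmic factors in $1/\eps$ and $1/\tau$ that have to be carefully tracked through the covering argument and the matrix concentration. In particular, the bounded-covariance case requires a more delicate analysis because the tails are only $L^2$, so one must verify that the $\sqrt{\eps}$-rate is attainable without paying $\sqrt{\log(1/\eps)}$ extra. The cleanest route is to follow \cite{DiaKP20} directly: combine their deterministic Proposition on stability-from-moment-control with their probabilistic Lemma bounding the empirical moments of i.i.d.\ samples, which together yield precisely the stated rates.
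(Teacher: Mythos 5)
The paper does not prove \Cref{fact:stability-rates} at all: it is imported verbatim from \cite{DiaKP20} and used as a black box, so there is no in-paper argument to compare your sketch against. That said, here is an assessment of the sketch on its own terms.

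Your two-stage decomposition (a deterministic implication from empirical moment/tail control to stability, followed by a probabilistic step establishing those moment bounds for i.i.d.\ samples) is indeed the organizing principle of the proof in \cite{DiaKP20}, and the H\"older computations you give for the mean and covariance shift after removing $\eps n$ points are correct and reproduce the advertised $\eps$-dependence for $p\in\{2,k,\log(1/\eps)\}$. The probabilistic ingredients (Vershynin for subgaussian, truncation plus matrix Bernstein or Adamczak for the heavier-tailed cases) are also the right tools, and you correctly attribute the extraneous $\sqrt{\log d}$ to the matrix-concentration step.

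The one place your sketch is imprecise in a way that matters is the phrase ``truncating the samples whose norm (in the relevant direction) is too large.'' A direction-dependent truncation does not yield a \emph{single} subset $S_1$ that is stable for every unit vector simultaneously; as the body of this very paper emphasizes (in a harder, locally-corrupted context), reversing the order of the ``for all directions'' quantifier and the ``choose a subset'' quantifier is exactly the crux. For the clean i.i.d.\ setting one instead truncates by Euclidean norm $\|x_i\|_2$ (or, for subgaussians, keeps all samples with no truncation), and then shows --- via a uniform concentration argument over all $(1-\eps)$-subsets, or equivalently over the convex relaxation $\Delta_{n,\eps}$ --- that this one fixed subset satisfies the two stability conditions uniformly in $v$. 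This is where the real work in \cite{DiaKP20} lives, and your sketch currently glosses over it; it is worth being explicit that you fall back to citing \cite{DiaKP20} precisely because of this step. (Also note, though this is the paper's typo rather than yours, that the ``$+\rho$'' in the second bullet of the statement is spurious: \Cref{fact:stability-rates} is a pure stability statement about i.i.d.\ samples and does not involve the local-contamination radius $\rho$.)
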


Recall the family of \emph{stability-based algorithms} from  \Cref{def:stability-based-alg}.
Hence,
 we immediately obtain the following, which is a more detailed version of \Cref{fact:ars-existing}.

\begin{fact}\label{fact:mean-est-full}
    Let $\cP$ be a family of distributions.
    Fix a $P \in \cP$ and let $S_0$ be a set of $n$ i.i.d.\ samples from $P$.
    Let $T$ be a corrupted version of $S_0$ with $\eps$-fraction of global outliers (\Cref{def:outliers}).
    Let $\mu$ be the (unknown) mean of $P$.
    There exist computationally-efficient algorithms that take as input $T$, $
    \eps$, and $\rho$ and output $\widehat{\mu}$ with the following guarantees:
    \begin{itemizec}
        \item If $\mathcal{D}$ is the family of isotropic subgaussian distributions, then
        $\|\widehat{\mu}-\mu\|_2 \lesssim \eps \sqrt{\log(1/\eps)}   + \sqrt{d/n} + \sqrt{\log(1/\tau)/n}$.
        \item If $\mathcal{D}$ is the family of distributions with isotropic covariance and bounded $k$-th moments, then
        $\|\widehat{\mu}-\mu\|_2 \lesssim \eps^{1 - \frac{1}{k}} + \sqrt{(d \log d)/n} + \sqrt{\log(1/\tau)/n}$.
        \item If $\mathcal{D}$ is the family of distributions with covariance $\vec \Sigma\preceq \vec I$
        then
        $\|\widehat{\mu}-\mu\|_2 \lesssim \sqrt{\eps} + \sqrt{(d \log d)/n} + \sqrt{\log(1/\tau)/n}$.
    \end{itemizec}
\end{fact}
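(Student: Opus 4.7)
The plan is to derive \Cref{fact:mean-est-full} as a bookkeeping composition of two ingredients already assembled earlier in the paper, rather than as a genuinely new result: the existence of a large stable subset among i.i.d.\ samples (\Cref{fact:stability-rates}) and the black-box guarantee of stability-based algorithms under global outliers (\Cref{def:stability-based-alg}). No new machinery is needed; the fact simply makes explicit what these two ingredients yield when chained.

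First, I would apply \Cref{fact:stability-rates} to the clean sample $S_0$ drawn i.i.d.\ from $P\in\cP$. With probability at least $1-\tau$ this produces a subset $S_1\subseteq S_0$ with $|S_1|\geq (1-\eps)|S_0|$ that is $(\eps,\delta_P)$-stable with respect to the true mean $\mu$, where $\delta_P$ is exactly the family-dependent expression appearing in each of the three bullets (e.g.\ $\delta_P\lesssim \eps\sqrt{\log(1/\eps)}+\sqrt{d/n}+\sqrt{\log(1/\tau)/n}$ in the isotropic subgaussian case, and analogously for bounded $k$-th moments and bounded covariance).

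Next, I would reduce from $S_0$ to the corrupted set $T$. By \Cref{def:outliers} we have $|T\cap S_0|\geq (1-\eps)n$, and combined with $|S_0\setminus S_1|\leq \eps n$ this gives $|T\cap S_1|\geq (1-2\eps)n$; equivalently $T\in\cO(S_1,2\eps)$. By the first part of \Cref{lem:eps-stability-to-2eps} (specialized to $r=2$ and $k=1$), the set $S_1$ is also $(2\eps,O(\delta_P))$-stable with respect to $\mu$, so passing from $\eps$ to $2\eps$ only inflates the stability parameter by an absolute constant.

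Finally, I would feed $(T,2\eps,C\delta_P)$, for a sufficiently large absolute constant $C$, into any algorithm from \Cref{def:stability-based-alg}. Its defining guarantee delivers, in polynomial time, an estimate $\widehat{\mu}$ with $\|\widehat{\mu}-\mu\|_2\lesssim \delta_P$, matching exactly the rate claimed in each of the three cases of \Cref{fact:mean-est-full}. The only potential obstacle is purely the bookkeeping of universal constants when transitioning from $\eps$-stability to $2\eps$-stability and when choosing the input parameter to the stability-based algorithm, but this is absorbed into the $\lesssim$ symbol and does not affect the order of the final error.
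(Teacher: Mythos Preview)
Your proposal is correct and matches the paper's approach exactly: the paper presents \Cref{fact:mean-est-full} with the single word ``Hence'' immediately after stating \Cref{fact:stability-rates} and recalling \Cref{def:stability-based-alg}, indicating precisely the composition you spell out. Your additional bookkeeping step of passing from $\eps$ to $2\eps$ via \Cref{lem:eps-stability-to-2eps} to account for the removed subset $S_0\setminus S_1$ is the natural way to make this one-word derivation rigorous.
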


The above bounds for robust mean estimation are optimal with respect to the parameters $\eps, d, n, \tau$ (up to the $\sqrt{\log d}$ term). Specifically for $\eps$, the matching lower bound can be derived using a simple identifiability argument (see Section 1.3 in \cite{DiaKan22-book} for formal details). 
The (near-)optimality with respect to the remaining parameters are also folklore facts: $\Omega(\sqrt{d/n})$ error is necessary even for estimating the mean of $\cN(\mu,I)$ without outliers (as a standard application of Fano's method), and $\Omega{(\sqrt{\log(1/\tau)/n})}$ is necessary even for one-dimensional $\cN(\mu,1)$ without outliers (see e.g., Proposition 6.1 in \cite{catoni2012challenging}).

\begin{lemma}
    \label{lem:rounding-cont-discrete}
    Let $y_1,\dots,y_n$ be $n$ vectors in $\R^d$ and a $w \in \Delta_{n,\eps}$.
    Let $\mu \in \R^d$ be fixed and define $\mu_{w} := \sum_i w_i y_i$ and $\overline{\vec \Sigma}_w := \sum_i w_i (y_i-\mu)(y_i-\mu)^\top$.
    Then there exists a set $\cI \subset [n]$ satisfying (i) $|\cI| \geq (1-2\eps)n$ and (ii) 
    the set $S:= \{y_i\}_{i \in \cI}$ 
   satisfying   $\max_{\bM \in \cM} \langle \bM, \overline{\vec\Sigma}_S\rangle \lesssim \max_{\bM \in \cM}\langle    \bM, \overline{\vec \Sigma}_w \rangle$
\end{lemma}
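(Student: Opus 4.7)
\begin{proofsketch}
The plan is to realize $\cI$ as the set of indices on which $w$ puts sufficiently large mass, and then use that $w_i$ cannot exceed the uniform mass $\frac{1}{(1-\eps)n}$ by much to simultaneously control the cardinality $|\cI|$ and dominate the uniform second moment pointwise. Throughout the sketch, I write $z_i := y_i - \mu$, so that $\langle \bM, \overline{\vec\Sigma}_w\rangle = \sum_i w_i \|z_i\|_\bM^2$ and $\langle \bM, \overline{\vec\Sigma}_S\rangle = \frac{1}{|\cI|}\sum_{i\in \cI}\|z_i\|_\bM^2$. Note this matrix-valued bound is uniform in $\bM$, so taking $\max_{\bM \in \cM}$ at the end preserves the inequality.

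Concretely, I define $\cI := \{i \in [n] : w_i \geq \tfrac{1}{2n}\}$. First I verify $|\cI| \geq (1-2\eps)n$: using $\sum_i w_i = 1$, the upper bound $w_i \leq \tfrac{1}{(1-\eps)n}$ on $\cI$, and $w_i < \tfrac{1}{2n}$ on $[n]\setminus\cI$, one gets
\[
1 \;\leq\; \frac{|\cI|}{(1-\eps)n} + \frac{n-|\cI|}{2n},
\]
which rearranges to $|\cI| \geq \frac{(1-\eps)n}{1+\eps} \geq (1-2\eps)n$ (for $\eps \leq 1/2$).

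For the second moment bound, I use the defining property of $\cI$ in the form $1 \leq 2n w_i$ for each $i \in \cI$. Then for every $\bM \in \cM$,
\[
\frac{1}{|\cI|}\sum_{i\in\cI}\|z_i\|_\bM^2 \;\leq\; \frac{2n}{|\cI|}\sum_{i\in\cI} w_i \|z_i\|_\bM^2 \;\leq\; \frac{2n}{(1-2\eps)n}\sum_{i\in[n]} w_i \|z_i\|_\bM^2 \;=\; \frac{2}{1-2\eps}\,\langle \bM, \overline{\vec\Sigma}_w\rangle.
\]
Taking the maximum over $\bM \in \cM$ on both sides yields $\max_{\bM \in \cM}\langle \bM, \overline{\vec\Sigma}_S\rangle \lesssim \max_{\bM \in \cM}\langle \bM, \overline{\vec\Sigma}_w\rangle$, as required.

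There is essentially no obstacle here beyond choosing the right threshold: the argument is a pointwise-in-$\bM$ domination enabled by the $\ell_\infty$ cap on $w$ in the definition of $\Delta_{n,\eps}$, so the maximum over $\cM$ comes along for free and we never need to invoke any minimax or convex duality. The only mild subtlety is that the rounded set $\cI$ has size $(1-2\eps)n$ rather than $(1-\eps)n$, but this matches the statement of the lemma, and the constant in $\lesssim$ is a small absolute number (e.g.\ $4$ for $\eps \leq 1/4$).
\end{proofsketch}
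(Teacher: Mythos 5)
Your proof is correct and follows essentially the same approach as the paper: select the indices where $w_i$ is comparable to $1/n$, then use pointwise domination in $\bM$ to compare the uniform second moment over $\cI$ with the $w$-weighted one. The only mild difference is in how $\cI$ is chosen---you threshold at $1/(2n)$ and verify the cardinality bound from $\sum_i w_i = 1$ and the $\ell_\infty$ cap, whereas the paper sorts the weights, takes the top $(1-2\eps)n$, and cites \cite[Lemma D.2]{DiaKP20} for the claim $1/|S| \lesssim w_i$ on $\cI$; your variant is self-contained, which is a small plus. Both give the same constant-factor bound (with the implicit assumption $\eps$ is bounded away from $1/2$), and the conclusion holds uniformly in $\bM$, so the $\max$ over $\cM$ is inherited for free exactly as you observe.
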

\begin{proof}
\looseness=-1Without loss of generality, let $w_1\geq w_2 \geq \dots \geq  w_{n}$.
Define the set $\cI$ to be the set $[(1-2\eps)n]$.
Then for each $i \in \cI$, we have $w_i \gtrsim \frac{1}{n}$. 
Let $S = (y_i)_{i \in \cI}$.
Since $|S| \geq (1-2\eps)n$,
it follows that $\frac{1}{|S|} \lesssim w_i$ for all $i \in \cI$; see \cite[Lemma D.2]{DiaKP20}. Defining $z_i := y_i-\mu$,
for any $\bM \in \cM_k$, we have 
\begin{align*}
\langle \overline{\vec \Sigma}_S, \bM\rangle =    
\frac{1}{|S|} \sum_{i \in \cI} \|z_i\|_{\bM}^2
\lesssim \sum_{i \in \cI} w_i\|z_i\|_{\bM}^2
\lesssim \sum_{i \in [n]} w_i\|z_i\|_{\bM}^2 = \langle \overline{\vec \Sigma}_w , \bM \rangle \lesssim \frac{\rho^2}{\eps}\,.
\end{align*}

\end{proof}

\end{document}